\newcommand{\denselist}{\itemsep -2pt\parsep=-1pt\partopsep -2pt}
\newcommand{\E}{\ensuremath{\mathsf{E}}}
\newcommand{\eps}{\varepsilon}
\newcommand{\poly}{\ensuremath{\mathsf{poly}}}
\renewcommand{\c}[1]{\ensuremath{\mathcal{#1}}}
\newcommand{\s}[1]{\textsf{\small #1}}
\newcommand{\R}{\mathbb{R}}
\newcommand{\omt}[1]{}
\newcommand{\etal}{\emph{et al.}\xspace}
\renewcommand{\c}[1]{\ensuremath{\mathcal{#1}}}
\newcommand{\SatScan}{\s{SatScan}\xspace}
\newcommand{\netScan}{\s{netScan}\xspace}
\newcommand{\gridScan}{\s{gridScan}\xspace}
\newcommand{\gridScanLin}{\s{gridScan\_linear}\xspace}
\title{Computing Approximate Statistical Discrepancy}
\author[1]{Michael Matheny}
\author[1]{Jeff M. Phillips}
\affil[1]{University of Utah\\
	\texttt{\{mmath|jeffp\}@cs.utah.edu }}
\authorrunning{M.\, Matheny and J.\,M. Phillips} 
\subjclass{Theory of computation $\sim$ Computational geometry}
\keywords{Scan Statistics, Discrepancy, Rectangles}
\begin{document}

\maketitle

\begin{abstract}
Consider a geometric range space $(X,\c{A})$ where $X$ is comprised of the union of a red set $R$ and blue set $B$.  Let $\Phi(A)$ define the absolute difference between the fraction of red and fraction of blue points which fall in the range $A$.  
The maximum discrepancy range $A^* = \arg \max_{A \in (X,\c{A})} \Phi(A)$.  
Our goal is to find some $\hat{A} \in (X,\c{A})$ such that 
$\Phi(A^*) - \Phi(\hat A) \leq \eps$.  
We develop general algorithms for this approximation problem for range spaces with bounded VC-dimension, as well as significant improvements for specific geometric range spaces defined by balls, halfspaces, and axis-aligned rectangles.  This problem has direct applications in discrepancy evaluation and classification, and we also show an improved reduction to a class of problems in spatial scan statistics.  
\footnote{Thanks to support by NSF CCF-1350888, IIS-1251019, ACI-1443046, CNS-1514520, and CNS-1564287}
\end{abstract}

\section{Introduction}

Let $X$ be a set of $m$ points in $\R^d$ for constant $d$.  Let $X = R \cup B$ be the union (possibly not disjoint) of two sets $R$, the red set, and $B$, the blue set.  
Also consider an associated range space $(X, \c{A})$; we are particularly interested in range spaces defined by geometric shapes such as rectangles in $\R^d$ $(X, \c{R}_d)$, disks in $\R^2$ $(X, \c{D})$, and $d$-dimensional halfspaces $(X, \c{H}_d)$.  

Let $\mu_R(A) = |R \cap A|/|R|$ and $\mu_B(A) = |B \cap A|/|B|$ be the fraction of red or blue points, respectively, in the range $A$.  
We study the discrepancy function $\Phi_X(A) = |\mu_R(A) - \mu_B(A)|$, when for brevity is typically write as just $\Phi(A)$.   
A typical goal is to compute the range $A^* = \arg\max_{A \in \c{A}} \Phi(A)$ and value $\Phi^* = \Phi(A^*)$ that maximizes the given function $\Phi$.  Our goal is to find a range $\hat A_\eps$ that satisfies 
$
\Phi(\hat A_\eps) \geq \Phi(A^*) - \eps.  
$

The exact version of this problem arises in many scenarios, formally as the classic discrepancy maximization problem~\cite{Backurs16,DE93}.  The rectangle version is a core subroutine in algorithms ranging from computer graphics~\cite{DEM96} to association rules in data mining~\cite{FMMT96}.   Also, for instance, in the world of discrepancy theory~\cite{Mat99,Cha01}, this is the task of evaluating how large the discrepancy for a given coloring is.  
For the halfspace setting, this maps to the minimum disagreement problem in machine learning (i.e., building a linear classifier)~\cite{Lin96}.
When $\Phi$ is replaced with a statistically motivated form~\cite{Kul97,Kul7.0}, then this task (typically focusing on disks or rectangles) is the core subroutine in the GIScience goal of computing the spatial scan statistic~\cite{HKG07,NM04,APV06,AMPVZ06} to identify spatial anomalies.  Indeed this statistical problem can be reduced the approximate variant with the simple discrepancy maximization form~\cite{APV06}.  

The approximate versions of these problems are often just as useful.  Low-discrepancy colorings~\cite{Mat99,Cha01} are often used to create the associated $\eps$-approximations of range spaces, so an approximate evaluation is typically as good.  It is common in machine learning to allow $\eps$ classification error.  In spatial scan statistics, the approximate versions are as statistically powerful as the exact version and significantly more scalable~\cite{SSSS}.  

While the exact versions take super-linear polynomial time in $m$, e.g., the rectangle version with linear functions takes $\Omega(m^2)$ time conditional on a result of Backurs~\etal~\cite{Backurs16}, we show approximation algorithms with $O(m + \poly(1/\eps))$ runtime.  
This improvement is imperative when considering massive spatial data, such as geotagged social media, road networks, wildlife sightings, or population/census data. In each case the size $m$ can reach into the 100s of millions. 

While most prior work has focused on improving the polynomials on the exact algorithms for various shapes \cite{Kulldorff2006, Tango2005} or on using heuristics to ignore regions~\cite{WSJRG09, NM04}, little work exists on approximate versions.  
These include 
\cite{AMPVZ06} which introduced generic sampling bounds, 
\cite{SSSS} which showed that a two-stage random sampling can provide some error guarantees, and 
\cite{walther2010} which showed approximation guarantees under the Bernoulli model.  
In this paper, we apply a variety of techniques from combinatorial geometry to produce significantly faster algorithms; see Table \ref{tbl:results}.  	

\subparagraph*{Our results.}
Our work involves constructing a two-part coreset of the initial range space $(X, \c{A})$; it approximates the ground set $X$ \emph{and} the set of ranges $\c{A}$.  This needs to be done in a way so that ranges can still be effectively enumerated and $\mu_R(A)$ and $\mu_B(A)$ values tabulated.  We develop fast coreset \emph{constructions}, and then extend and adapt exact scanning algorithms to the sparsified range space.  

We develop notation and review known solutions in Section \ref{sec:review}; also see Table \ref{tbl:results}.  Then we describe a general sampling result in Section \ref{sec:new-general} for ranges with bounded VC-dimension.  In particular, many of these results can be seen as formalizations and refinements (in theory and practice) of the two-stage random sampling ideas introduced in \cite{SSSS}.

In Section \ref{sec:halfspaces} we describe improvements for halfspaces and disks. We first improve upon the sampling analysis to approximate ranges $\c{H}_2$.  By carefully annotating and traversing the dual arrangement from the approximate range space, we improve further upon the general construction.

Then in Section \ref{sec:rect} we describe our improved results for rectangles.  We significantly extend the exact algorithm of Barbay \etal~\cite{Barbay2014} and obtain an algorithm that takes $O(m + \frac{1}{\eps^2} \log \frac{1}{\eps})$.  This is improved to $O(m + \frac{1}{\eps^2} \log \log \frac{1}{\eps})$ with some more careful analysis in Appendix \ref{app:loglog-linear}.
This nearly matches a new conditional lower bound of $\Omega(m + \frac{1}{\eps^2})$, assuming current algorithms for APSP are optimal~\cite{Backurs16}.

In Section \ref{sec:fxn-apx} we show how to approximate a \emph{statistical discrepancy function} (\textsc{sdf}, defined in Section \ref{sec:fxn-apx}) $\Phi$, as well as any \emph{general} function $\Phi$.  These require altered scanning approaches and the \textsc{sdf}-approximation requires a reduction to a number of calls to the generic (``linear'') $\Phi$.  We reduce the number of needed calls to generic $\Phi$ functions from $O(\frac{1}{\eps} \log \frac{1}{\eps})$~\cite{APV06} to $O(\frac{1}{\sqrt{\eps}})$.  

Finally, in Section \ref{sec:exp} we show on rectangles strong \emph{empirical} improvement over state of the art~\cite{SSSS}.

\begin{table}[b]
\vspace{-.2in}
\begin{tabular}{ c  c | c | c | c }
\hline
	& & Known Exact & Known Approx~\cite{SSSS} & \hspace{-.05in}\textbf{New Runtime Bounds} 
\\ \hline
	\hspace{-.05in}General Range Space& 
	& $O(m^{\nu + 1})$ 
	& --
	& $O\left (m + \frac{1}{\eps^{\nu + 2}} \log^{\nu} \frac{1}{\eps} \right)$ 

	\\

	Halfspaces & $\R^d$ 
	& $O(m^d)$~\cite{DEM96} 
	& --
	& $O \left (m + \frac{1}{\eps^{d + 1/3}} \log^{2/3} \frac{1}{\eps} \right)$ 

	\\ 
	Disks & $\R^2$ 
	& $O(m^{3})$~\cite{DEM96} 
	& $O(m + \frac{1}{\eps^4} \log^3 \frac{1}{\eps})$ 
	& $O \left ( m + \frac{1}{\eps^{3+1/3}}\log^{2/3} \frac{1}{\eps} \right )$ 

	\\ 
	Rectangles & $\R^2$ 
	& $O(m^2)$~\cite{Barbay2014} 
	& $O(m + \frac{1}{\eps^4} \log \frac{1}{\eps})$~\cite{APV06, AMPVZ06} 
	& $O(m + \frac{1}{\eps^2} \log\log \frac{1}{\eps})$

	\\ 
	Rectangles (\textsc{sdf}) & $\R^2$ 
	& $O(m^4)$ 
	& $O(m + \frac{1}{\eps^4} \log^4 \frac{1}{\eps})$ 
	& $O \left (m + \frac{1}{\eps ^{2.5}} \right )$  

	\\

	Rectangles (General) & $\R^2$ 
	& $O(m^4)$ 
	& $O(m + \frac{1}{\eps^4} \log^4 \frac{1}{\eps}))$ 
	& $O \left (m + \frac{1}{\eps^{4}} \right )$ 

	\\
	\hline
\end{tabular} 
\caption{\label{tbl:results} Algorithm times for ($\eps$-approximately) maximizing different range spaces.  Here dimension $d$, VC-dimension $\nu$, and probability of failure are all constants. For $(X,\c{R}_2)$ we show it takes $\Omega(m + 1/\eps^2)$ time, assuming hardness of APSP. }
  
\vspace{-.1in}

\end{table}

\section{Background on Geometric Range Spaces}
\label{sec:review}

\begin{figure}
	\includegraphics[width=\linewidth]{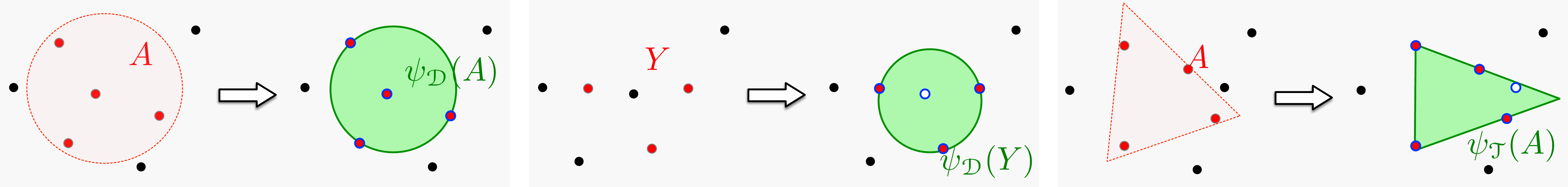}
	\vspace{-.3in}
	\caption{\label{fig:conforming} \s{First two panels show that $(\R^2, \c{D})$ has a conforming map $\psi_{\c{D}}$ defined by the smallest enclosing disk.  The last panel shows a range space $(X,\c{T})$ corresponding to triangles, and that a mapping $\psi_{\c{T}}$ defined by minimum area triangle is not conforming; it does not recover $A$.}}
	\vspace{-.2in}
\end{figure}

To review, a range space $(X,\c{A})$ is composed of a ground set $X$ (for instance a set of points in $\R^d$) and a family of subsets $\c{A}$ of that set.  
In this paper we are interested in geometrically defined range spaces $(X,
\c{A})$, where $X \subset \R^d$.  We formalize the requirements of this geometry via a conforming geometric mapping $\psi$; see Figure \ref{fig:conforming}.  
Specifically, it maps from a subset $Y \subset X$ to subset of $\R^d$. Typically, the result is a Lebesgue measureable subset of $\R^d$, for instance $\psi_{\c{D}}(Y)$, defined for disk range space $(X,\c{D})$, could map to the smallest enclosing disk of $Y$.  

We say this mapping $\psi_{\c{A}}$ is \emph{conforming to $\c{A}$} if for any $N \subset X$ it has the properties:
\begin{itemize} \denselist
\item 
    for any subset $A \in (N, \c{A})$ then $\psi_{\c{A}}(A) \cap N = A$  
    \hfill [\emph{the mapping recovers the same subset}]
\item
    for any subset $Y \subset X$ then $\psi_{\c{A}}(Y) \cap X \in (X, \c{A})$ 
    \hfill [\emph{the mapping is always in $(X,\c{A})$}]
\end{itemize}

\subsection{Basic Combinatorial Properties of Geometric Range Spaces}

We highlight two general combinatorial properties of geometric range spaces.  These are critical in sparsification of the data and ranges, and enumeration of the ranges.  

\subparagraph*{Sparsification.}
An \emph{$\eps$-sample} $S \subset X$ of a range space $(X,\c{A})$ preserves the density for all ranges as 
$
\max_{A \in \c{A}} |\frac{|X \cap A|}{|X|} - \frac{|S \cap A|}{|S|}| \leq \eps.  
$
An \emph{$\eps$-net} $N \subset X$ of a range space $(X,\c{A})$ hits large ranges, specifically for all ranges $A \in \c{A}$ such that $|X \cap A| \geq \eps |X|$ we guarantee that $N \cap A \neq \emptyset$.  
Consider range space $(X,\c{A})$ with VC-dimension $\nu$.   Then a random sample $S \subset X$ of size $O(\frac{1}{\eps^2} (\nu + \log \frac{1}{\delta})$ is an $\eps$-sample with probability at least $1-\delta$~\cite{VC71,LLS01}.  Also a random sample $N \subset X$ of size $O(\frac{\nu}{\eps} \log \frac{1}{\eps \delta})$ is an $\eps$-net with probability at least $1-\delta$.  
For our ranges of interest, the VC-dimensions of $(X,\c{H}_d)$, $(X, \c{D})$, and $(X, \c{R}_d)$ are $d$, $3$, and $2d$.

\subparagraph*{Enumeration.}
For the ranges spaces we will consider that each range can be defined by a \emph{basis} $B$; where $B$ is a point set. Given a geometric conforming map $\psi$ and subset $Y$, a range space's basis $B \subset Y$ is such that 
$\psi(B) = \psi(Y)$, but 
on a strict subset  $B' \subset B$, then $\psi(B')$ is different (and usually smaller under some measure) than $\psi(B)$.  
We will use $\beta$ to denote the maximum size of the basis for any subset $Y \subset X$.  For instance 
for $\psi_{\c{D}}$ then $\beta =3$, 
for $\psi_{\c{R}_d}$ then $\beta = 2d$, and
for $\psi_{\c{H}_d}$ then $\beta = d$.  
Recall, by Sauer's Lemma~\cite{Sau72}, if a range space $(X, \c{A})$ has VC-dimension $\nu$, then $\beta \leq \nu$.  

This implies that for $m=|X|$ points, there are at most ${m \choose \beta} = O(m^{\beta})$ different ranges to consider.  
We assume $\beta$ is constant; then it is possible to construct $\psi(Y)$ in $O(|Y|)$ time, and to determine if $\psi(Y)$ contains a point $x \in X$ in $O(1)$ time. 
This means we can enumerate all $O(m^\beta)$ possible bases in $O(m^{\beta})$ time, construct their maps $\psi(B)$ in as much time, and for all of them count which points are inside, and evaluate each $\Phi(A)$ to find $A^*$, in $O(m^{\beta+1})$ time.  

For the specific range spaces we study, the time to find $A^* \in \c{A}$ can be improved by faster enumeration techniques.  
For $\c{H}_d$, Dobkin and Eppstein~\cite{DE93} reduced the runtime to find $A^*$ from $O(m^{d+1})$ to $O(m^d)$; this implies for $\c{D}$ the runtime is reduced from $O(m^4)$ to $O(m^3)$.  
For $\c{R}_d$, Barbay \etal~\cite{Barbay2014} show how to find $A^*$ in $O(m^d)$ time; this was recently shown tight~\cite{Backurs16} in $\R^2$, assuming APSP takes cubic time.

\subsection{Coverings}

Our main approach towards efficient approximate range maximization, is to sparsify the range space $(X,\c{A})$.  This will have two parts.  The first is simply replacing $X$ with an $\eps$-sample.  The second is sparsifying the ranges $\c{A}$, using a concept we refer to as an $\eps$-covering.

Recall that the symmetric difference of two sets $A \triangle B$ is $(A \cup B) \setminus (A \cap B)$.  
Define an \emph{$\eps$-covering} $(X, \c{A}_\triangle)$ of a range space $(X, \c{A})$ where  $(X,\c{A}_\triangle) \subset (X,\c{A})$, so  that for any $A \in \c{A}$ there exists a $A' \in \c{A}_\triangle$ such that 
$
|A \triangle A'| \leq \eps |X|.
$ 
See Figure \ref{fig:covering} for an illustration of this concept.  If a range space satisfies the above condition for any one specific range $A$, but not necessarily all ranges $A \in \c{A}$ simultaneously, then it is a \emph{weak $\eps$-covering} of $(X,\c{A})$.  

We will use subsets of the ground set to define subsets of the ranges. For a subset $N \subset X$, let $\c{A}_{\mid N} = \{A \cap N \mid A \in \c{A}\}$ be the restriction of $\c{A}$ to the points in $N$.  We will define $(X, \c{A}_\triangle)$ using $\c{A}_{\mid N}$ or a subset thereof.  However, as each $A \in \c{A}_{\mid N}$ is a subset of $N$, which itself is a subset of $X$, we need a conforming map $\psi_{\c{A}}$ to take a region $A \in \c{A}_\triangle$ and map it back to some region in $\c{A}$, a subset of $X$.  Given $\c{A}'_{\mid N}$ (which is $\c{A}_{\mid N}$ or a subset) we define $(X, \c{A}_{\triangle})$ as
\[
(X, \c{A}_\triangle) = \{ X \cap \psi_{\c{A}}(A) \mid A \in (N, \c{A}'_{\mid N}) \}.
\]

\begin{figure}
\includegraphics[width=\linewidth]{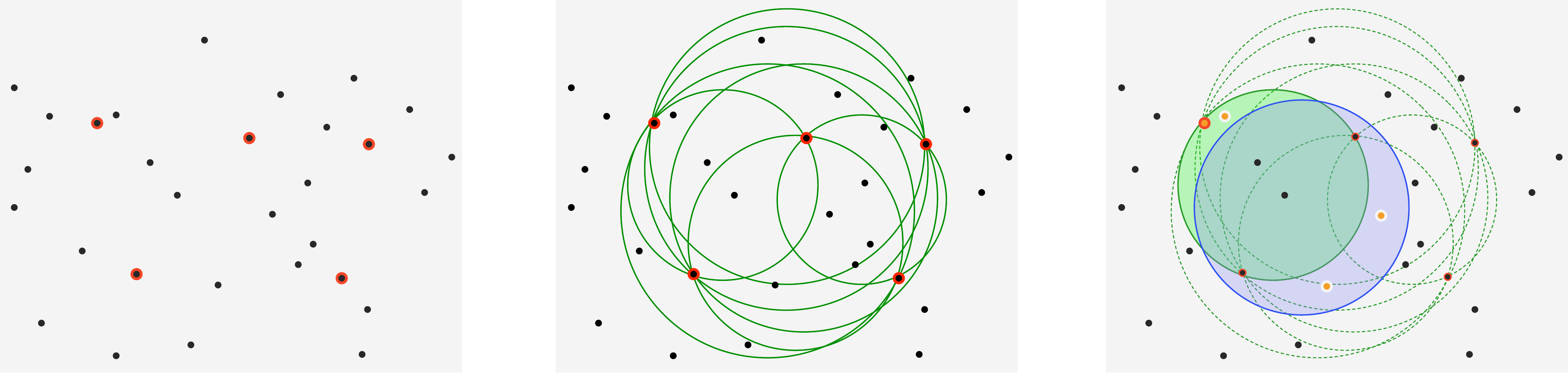}
	\vspace{-.3in}
	\caption{\label{fig:covering} \s{First panel shows $N \subset X$.  Second panel shows the set of disks $\{\psi_{\c{D}}(A) \mid A \in (N,\c{D}_{\mid N})\}$ induced by $N$.  The third panel shows a range $Y \subset X$ (defined by disk in blue).  It has symmetric difference over $X$ (in orange) of size $4$ with the one defined by the disk $\psi_{\c{D}}(A)$ (in green)  induced by a subset $A \subset (N,\c{D}_{\mid N})$.}}
	\vspace{-.2in}
\end{figure}

A small sized $\eps$-covering is implied by a result of Haussler~\cite{Hau95}. For every range space $(X, \c{A})$ of VC-dimension $\nu$, with $m=|X|$, there always exist a maximal set of ranges $A_\triangle$ of size $O((\frac{m}{k + \nu})^{\nu})$ where for every pair of ranges $A, A'\in A_\triangle$ the symmetric difference $|A \triangle A'| \ge k$. Setting $k = m \eps$ then $(\frac{m}{k + \nu})^{\nu} = O(\frac{1}{\eps^\nu})$, so $A_{\triangle}$ is an $\eps$-covering.

\subparagraph*{Symmetric difference nets.}  
We can construct an $\eps$-net over the symmetric difference range space of $\c{A}$ and then use these points to define $\c{A}_\triangle$.  
 
For a family of ranges $\c{A}$, let $\c{S}_{\c{A}}$ be the family of ranges made up of the symmetric difference of ranges of $\c{A}$. Specifically $\c{S}_{\c{A}} = \{A_1 \triangle A_2 \mid A_1, A_2 \in \c{A}\}$.  If range space $(X, \c{A})$ has VC-dimension $\nu$, then $(X, \c{S}_{\c{A}})$ has VC-dimension at most $O(\nu \log \nu)$~\cite{Mat02}.  Thus for constant $\nu$ we can use asymptotically the same size random sample as before.  Matheny \etal~\cite{SSSS} pointed out two important properties connecting nets over symmetric difference range spaces and $\eps$-coverings and then finding $\hat A_\eps$.  
\begin{itemize} \denselist
\item[(P1)]
An $\eps$-net $N$ for $(X, \c{S}_{\c{A}})$ induces $(N, \c{A}_{\mid N})$ which is an $\eps$-covering of $(X, \c{A})$~\cite{SSSS}.

\item[(P2)]
	Given an $\frac{\eps}{2}$-covering $(N, \c{A}_\triangle)$ and an $\frac{\eps}{2}$-sample $S$ over $(X, \c{A})$ then for any range $A \in (X,\c{A})$, there exists a range $\psi_{\c{A}}(A') \cap X$ for $A' \in \c{A}_{\mid N}$ so 
$
	\left | \frac{|A \cap X|}{|X|} - \frac{|\psi_{\c{A}}(A') \cap S|}{|S|} \right | \leq \eps  
$~\cite{SSSS}.
\end{itemize}

For an appropriate constant $C$, by constructing $(\eps/C)$-nets $N_R$ and $N_B$, of size $n$, on the red $(R,\c{S}_{\c{A}})$ and blue $(B,\c{S}_{\c{A}})$ points, also constructing $(\eps/C)$-samples of size $s$ on $(R,\c{A})$ and $(B,\c{A})$, and invoking (P2) on the results, Matheny \etal~\cite{SSSS} observed we can maximize $\Phi(\psi_\c{A}(A') \cap S)$ over $A' \in \c{A}_{\mid N_R} \cup  \c{A}_{\mid N_B}$ to find an $\eps$-approximate $\hat A_\eps$.  
They construct the $\eps$-nets and $\eps$-samples using random sampling, and apply the results to scan disk $\c{D}$ and rectangle $\c{R}_2$ range spaces towards finding $\hat A_\eps$.  Enumerating all ranges in $A' \in \c{A}_{\mid N_R} \cup \c{A}_{\mid N_B}$ and counting the intersections with the $(\eps/C)$-samples, when $C$ is a constant,  is sufficient to find an $\hat A_\eps$ in time 
$
O(m + |N|^2|S| \log n) = O(m + \frac{1}{\eps^4}\log^3 \frac{1}{\eps})
$ 
for disks $(X, \c{D})$ and time
$
O(m + |N|^4 + |S| \log n) = O(m + \frac{1}{\eps^4}\log^4 \frac{1}{\eps}) 
$
for rectangles $(X,\c{R}_2)$.  

We can ignore the distinct red and blue points, and focus on three aspects of this problem which can be further optimized:
(1) More efficiently constructing a sparse set of $\eps$-covering ranges $(X, \c{A}_\triangle)$.
(2) More efficiently constructing a smaller $\eps$-sample $S$ of $(X, \c{A})$.  
(3) More efficiently scanning the resulting $(S, \c{A}_\triangle)$.  

\section{General Results via $\eps$-Coverings}
\label{sec:new-general}

For general range spaces of contant VC-dimension $\nu$ we can directly apply the work of Matheny \etal~\cite{SSSS} to get a bound.  
A random sample $N$ of size $O(\frac{\nu \log \nu}{\eps} \log \frac{\nu}{\eps})$ induces an $\eps$-covering $(X,\c{A}_{\mid N})$ with constant probability by (P1).  
A random sample $S$ of size $O(\frac{\nu}{\eps^2})$ induces an $\eps$-sample with constant probability. By (P2), scanning the ranges in $(X, \c{A}_{\mid N})$, evaluating $\Phi(A)$ on each ranges $A$ using $S$, and returning the maximum $\hat A_\eps$ induces the $\eps$-approximation of $\Phi(A^*)$ as we desire.  Including the time to calculate $N$ and $S$ we obtain the following result.

\begin{theorem}
\label{theorem:general-range}
Consider a range space $(X,\c{A})$ with constant VC-dimension $\nu$, with $|X| =m$, and conforming map $\psi_\c{A}$.  For 
$A^* = \arg\max_{A \in \c{A}} \Phi(A)$, with probability at least $1-\delta$, in time $O(m + \frac{1}{\eps^{\nu+2}} \log^\nu \frac{1}{\eps} \log \frac{1}{\delta})$, we can find a range $\hat A_\eps$ so that 
$
|\Phi(A^*) - \Phi(\hat A_\eps)| \leq \eps.
$
\end{theorem}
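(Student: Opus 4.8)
The plan is to treat this as the direct instantiation, for range spaces of constant VC-dimension $\nu$, of the $\eps$-net/$\eps$-covering pipeline of Matheny \etal~\cite{SSSS} recalled above (properties (P1) and (P2)); the only work is to pin down the coreset sizes and to count the ranges for the running-time bound. Fix a constant $C$ (to be chosen at the end) and set $\eps' = \eps/C$. First I would build the two coresets. Draw a random sample $N$ that is an $\eps'$-net of the symmetric-difference range space $(X, \c{S}_{\c{A}})$, and for each of the red and blue point sets draw an $\eps'$-sample, so that $S = S_R \cup S_B$ with $S_R$ an $\eps'$-sample of $(R,\c{A})$ and $S_B$ an $\eps'$-sample of $(B,\c{A})$; one may likewise take $N = N_R \cup N_B$. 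Since $(X,\c{A})$ has VC-dimension $\nu$, the space $(X,\c{S}_{\c{A}})$ has VC-dimension $O(\nu\log\nu)$ by \cite{Mat02}, so by the sampling bounds of Section~\ref{sec:review} it suffices to take $|N| = O(\frac{\nu\log\nu}{\eps}\log\frac{1}{\eps\delta})$ and $|S| = O(\frac{1}{\eps^2}(\nu + \log\frac1\delta))$; a union bound over the $O(1)$ samples makes them all good simultaneously with probability $\ge 1-\delta$. For constant $\nu$ these are $|N| = O(\frac1\eps\log\frac1{\eps\delta})$ and $|S| = O(\frac1{\eps^2}\log\frac1\delta)$.

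For correctness I would invoke the two properties directly. By (P1), the $\eps'$-net $N$ induces an $\eps'$-covering $(X, \c{A}_{\mid N})$ of $(X,\c{A})$. The algorithm enumerates each $A' \in \c{A}_{\mid N}$, forms the conforming range $\psi_{\c{A}}(A')$, evaluates the sampled discrepancy $\hat\Phi(A') = \big|\,|\psi_{\c{A}}(A')\cap S_R|/|S_R| - |\psi_{\c{A}}(A')\cap S_B|/|S_B|\,\big|$, and outputs $\hat A_\eps = \psi_{\c{A}}(A'')\cap X$ for the maximizer $A''$. Applying (P2) with the red ground set and with the blue ground set gives two facts: some enumerated $A'$ satisfies $\hat\Phi(A') \ge \Phi(A^*) - O(\eps')$, because the covering contains a range whose red and blue densities both track those of $A^*$ to within $O(\eps')$; and every enumerated $A'$ satisfies $|\hat\Phi(A') - \Phi(\psi_{\c{A}}(A')\cap X)| \le O(\eps')$, since $S_R,S_B$ estimate the two densities of any fixed range to within $\eps'$. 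Chaining these inequalities through the maximizer yields $\Phi(\hat A_\eps)\ge \Phi(A^*) - O(\eps')$, and choosing $C$ to absorb the hidden constant gives the claimed $\eps$-approximation.

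For the running time, reading $X$, separating $R$ from $B$, and drawing the samples costs $O(m)$. By Sauer's Lemma every range of $(N,\c{A}_{\mid N})$ is determined by a basis of size $\beta \le \nu$, so there are at most $O(|N|^\beta) = O(|N|^\nu)$ candidate ranges to enumerate; each is constructed in $O(\beta) = O(1)$ time and scored against $S$ in $O(|S|)$ time. Hence the scan costs $O(|N|^\nu\,|S|)$, which for constant $\nu$ and constant failure probability is $O\!\left(\frac{1}{\eps^{\nu+2}}\log^\nu\frac1\eps\right)$, and carrying the dependence on $\delta$ through the sample sizes gives $O\!\left(m + \frac{1}{\eps^{\nu+2}}\log^\nu\frac1\eps\log\frac1\delta\right)$ overall.

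There is no substantial obstacle here: the theorem is essentially a corollary of machinery already in hand, so what must be handled carefully is the bookkeeping. The main point is the error budget, since $\Phi$ is a difference of two per-color densities and (P2) already loses a constant factor between the covering scale and the sample scale, so $C$ must be set large enough to swallow all of it; second is the enumeration count, which relies on $\beta \le \nu$ so that iterating over $\beta$-subsets of $N$ (rather than all $2^{|N|}$ subsets) is legitimate; and third is the mild $O(\nu\log\nu)$ blow-up of the VC-dimension in passing from $\c{A}$ to $\c{S}_{\c{A}}$, which only affects the hidden constants once $\nu$ is constant.
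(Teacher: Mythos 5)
Your proposal follows the same pipeline the paper uses: (P1) to turn an $\eps$-net $N$ of the symmetric-difference range space into an $\eps$-covering, (P2) to estimate discrepancy of each covering range against the $\eps$-samples, and a Sauer-lemma count of $O(|N|^\nu)$ candidate ranges for the enumeration cost. The substance matches.

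There is one bookkeeping gap, in how you amplify the failure probability to $1-\delta$. You bake $\delta$ directly into the sample sizes, taking $|N| = O(\frac1\eps\log\frac1{\eps\delta})$ and $|S| = O(\frac1{\eps^2}\log\frac1\delta)$, and then assert that "carrying the dependence on $\delta$ through the sample sizes" yields $O(m + \frac{1}{\eps^{\nu+2}}\log^\nu\frac1\eps\log\frac1\delta)$. But $|N|^\nu|S| = O(\frac{1}{\eps^{\nu+2}}\log^\nu\frac{1}{\eps\delta}\cdot\log\frac{1}{\delta})$, which exceeds the claimed bound by roughly a $\log^\nu\frac1\delta$ factor whenever $\delta$ is much smaller than any fixed polynomial in $\eps$; the $\log^\nu$ sits on $\log\frac{1}{\eps\delta}$, not on $\log\frac1\eps$. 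The paper avoids this by using \emph{constant}-failure-probability samples (so $|N| = O(\frac1\eps\log\frac1\eps)$, $|S| = O(\frac1{\eps^2})$, and a single pass costs $O(\frac{1}{\eps^{\nu+2}}\log^\nu\frac1\eps)$), then running the whole procedure $O(\log\frac1\delta)$ independent times and returning the candidate with the median score. That decouples the $\delta$-dependence into a clean multiplicative $\log\frac1\delta$ and yields exactly the stated runtime. Your argument is otherwise sound; swapping in the repeat-and-take-median amplification would close the gap.
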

\begin{proof}
First compute random samples $N$ and $S$ of size $O(\frac{1}{\eps} \log \frac{1}{\eps})$ and $O(\frac{1}{\eps^2})$ respectively.  
The algorithm naively considers all $O((\frac{1}{\eps} \log \frac{1}{\eps})^\nu)$ subsets $B \subset N$ of size $\nu$, and calculates the quantity $\Phi(S \cap \psi_{\c{A}}(B))$.  By (P2), this can be used to $\eps$-approximate $\Phi(A)$ for any range $A \in \c{A}$ which has less than $\eps$-symmetric difference with $\psi_{\c{A}}(B)$.  Moreover, since $(X,\c{A}_{\mid N})$ is an $\eps$-cover, with constant probability any range $A$ is within symmetric difference of at most $\eps m$ of one induced by some subset $B$.  Thus, with constant probability we observe some range $\hat A_\eps = X \cap \psi_{\c{A}}(B)$ for which $|\Phi(A^*) - \Phi(\hat A_\eps)| \leq \eps$ (after adjusting constants in the size of $N$ and $S$).  
To amplify the probability of success to $1-\delta$, we repeat this process $O(\log \frac{1}{\delta})$ times, and return the $\hat A_\eps$ with median score.    
\end{proof}

\section{Halfspaces}
\label{sec:halfspaces}

Our general additive error results applied to arbitrary halfspaces, $(X, \c{H}_d)$, would require $O(m + \frac{1}{\eps^{d+2}} \log^{d} \frac{1}{\eps} \log \frac{1}{\delta})$ time.  
In this section, we improve this runtime to $O(m + \frac{1}{\eps^{d + 1}} \log \frac{1}{\delta})$. 
First, a recent paper~\cite{MP18} shows that with constant probability an $\eps$-sample $S$ for $(X,\c{H}_2)$ of size $s = O(\frac{1}{\eps^{4/3}} \log^{2/3} \frac{1}{\eps})$ can be constructed in $O(m + \frac{1}{\eps^2} \log(\frac{1}{\eps}))$ time.

Second we create a weak $\eps$-covering of $(X, \c{A})$ using $(X, \c{A}_{\mid N})$ where $N$ is of size $O(1/\eps)$. Ultimately this requires time $O(m + \frac{1}{\eps^{d + 1/3}} \log^{2/3} \frac{1}{\eps})$, with constant probability.  

Then, we show how to enumerate these ranges while maintaining the counts from $S$ with less overhead than the previous brute force approaches.

\subsection{Smaller Coverings}

We show that a random sample $N$ of only $O(1/\eps)$ points induces a range space $(X, \c{A}_{\mid N})$ which is a weak $\eps$-covering of $(X, \c{H}_d)$.  

\begin{lemma}
	\label{lemma:small-halfspace-covering}
	A random sample $N$ of size $|N| = O(\frac{d^2}{\eps} \log d)$ induces a weak $\eps$-covering $(X,{\c{H}_d}_{\mid N})$ for $(X,\c{H}_d)$ with constant probability.
\end{lemma}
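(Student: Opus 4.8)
The plan is to prove a \emph{weak} covering statement, so I only need to fix an arbitrary halfspace $A \in \c{H}_d$ and show that with constant probability over the random sample $N$ there is some $A' \in (N, {\c{H}_d}_{\mid N})$ with $|A \triangle \psi_{\c{H}_d}(A') \cap X| \le \eps |X|$. Without loss of generality I take $A$ to be a closed halfspace whose bounding hyperplane passes through $d$ points of $X$ (a basis), since the extreme halfspaces for symmetric difference purposes are the ``tight'' ones; the general case follows by perturbing the hyperplane against the data. Write $A = X \cap h^{+}$ for a halfspace $h^{+}$. The goal is to find $d$ points of $N$ whose affine span is a hyperplane $g$ so that the halfspace $g^{+}$ on the appropriate side recovers almost the same subset of $X$ as $h^{+}$ — i.e. the ``slab'' between $h$ and $g$ contains few points of $X$.

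First I would set up the standard deletion/peeling argument. Let $P = X$ and consider the points of $X$ that are ``close'' to the hyperplane $h$ in the ordering induced by the direction normal to $h$. Concretely, sort $X$ by signed distance to $h$; the $\eps m$ points of $X$ nearest $h$ on either side form two candidate slabs $\Sigma^{+}$ (just inside $h^{+}$) and $\Sigma^{-}$ (just outside). If $N$ contains a hyperplane-defining basis lying entirely within $\Sigma^{+} \cup \Sigma^{-}$ and ``separating'' the rest of $X$ the same way $h$ does (all of $X \setminus (\Sigma^+\cup\Sigma^-)$ on the correct side), then the symmetric difference is at most $|\Sigma^+ \cup \Sigma^-| \le 2\eps m$; rescaling $\eps$ by a factor of $2$ gives the claim. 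So the combinatorial question becomes: does a random sample of size $O(\frac{d^2}{\eps}\log d)$ hit a suitable configuration of $d$ points near $h$ with constant probability?

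The cleanest way to get this is to invoke the $\eps$-net theorem, but on a cleverly chosen range space rather than $\c{H}_d$ directly. Here is the key idea: for the fixed target $A$, define a family of ``witness'' regions, each being the set of points of $X$ lying in a thin slab adjacent to and parallel-ish to $h$, and argue that one only needs $N$ to be an $\eps$-net (or even just to hit each of a constant-per-dimension number of ``heavy'' slab pieces) for that fixed, low-complexity family — whose VC dimension is $O(d)$ — to reconstruct $h$ up to $\eps m$ points. The $\eps$-net size $O(\frac{d}{\eps}\log\frac{d}{\eps})$ for a VC-dimension-$d$ family, together with a union bound over the $d$ pieces of the construction (each piece itself a halfspace-like range near $h$), yields the stated $O(\frac{d^2}{\eps}\log d)$ bound after absorbing logs. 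More carefully, I would argue that it suffices for $N$ to contain, for each of $d$ suitable ``corner'' directions along $h$, at least one point among the $\Theta(\eps m / d)$ points of $X$ closest to $h$ in that corner region; each such event fails with probability at most $(1 - c\eps/d)^{|N|} \le e^{-c\eps|N|/d}$, and $|N| = \Theta(\frac{d^2}{\eps}\log d) = \Theta(\frac{d}{c\eps}\cdot d\log d)$ makes each failure probability at most $d^{-\Omega(d)}$, so the union bound over $d$ (indeed, even over the $\binom{|N|}{d}$ candidate bases is unnecessary) leaves constant success probability. The resulting $d$ points of $N$ define a hyperplane $g$ within the union of these slabs, hence within symmetric difference $O(\eps m)$ of $h$ on $X$, and $\psi_{\c{H}_d}$ conforming guarantees $\psi_{\c{H}_d}(A') \cap X \in (X,\c{H}_d)$ as required.

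The main obstacle is the geometric lemma hiding inside step two: showing that $d$ points, each lying in a thin slab near $h$ but chosen from $d$ \emph{well-spread} regions along $h$, actually span a hyperplane $g$ that is close to $h$ in the sense that $h \triangle g$ captures only $O(\eps m)$ points of $X$ — a badly chosen near-coplanar $d$-tuple could span a hyperplane that swings wildly away from $h$ far from the sample points. I expect to handle this by not insisting on reconstructing $h$ globally but only on the subset $X$: one shows that \emph{any} hyperplane $g$ passing within the slab of width $w$ (the $\eps m/d$-th distance) through $d$ points that are affinely ``in general position within the slab'' must, on the side facing the bulk of $X$, classify every point of $X$ outside the slab the same way $h$ does, because any point of $X$ misclassified by $g$ but not $h$ (or vice versa) would have to lie in the wedge between $g$ and $h$, and a counting/packing argument bounds the total number of such points by $O(\eps m)$. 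If that turns out to require care, the fallback is the Haussler-packing route already sketched in the paper: a weak covering of size $O(1/\eps^d)$ exists unconditionally, and one can instead show directly that the \emph{specific} set of $O(1/\eps)$-sized random $N$ realizes one covering range near the fixed $A$ with constant probability, which is a purely probabilistic statement once the ``slab recovery'' geometric fact is in hand.
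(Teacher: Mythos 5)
Your plan is to hit all $d$ ``witness'' regions near the boundary hyperplane $h$ simultaneously with one sample, and then argue that the $d$ sampled points span a hyperplane $g$ whose symmetric difference with $h$ on $X$ is $O(\eps m)$. You correctly flag the obstacle yourself: a near-degenerate or poorly spread $d$-tuple inside a thin slab can span a hyperplane that tilts sharply away from $h$, so the wedge $h \triangle g$ can contain many more than $\eps m$ points of $X$ far from the sample. The sketch you give to repair this --- ``well-spread corner regions'' plus a ``counting/packing argument'' --- is not carried out, and it is not clear how to make it work: the regions (the $\Theta(\eps m/d)$ nearest points in each corner) are data-dependent and nothing prevents them from being geometrically close together, in which case no packing bound on the wedge is available. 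The fallback you mention (Haussler packing) only gives existence of a small cover, not that \emph{this particular} random $N$ realizes a range near $A$; you would still have to prove the same geometric claim. So there is a genuine gap, and your proposal as written does not close it.

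The paper avoids the problem entirely by replacing the basis points one at a time rather than all at once. Starting from the $d$-point basis $\{x_1,\dots,x_d\}$ of $h$, it fixes the $d-1$ points $\{x_2,\dots,x_d\}$ and rotates the hyperplane about their span; this induces a linear ordering of $X$, and choosing any of the $2\eps m/d$ points nearest $x_1$ in this ordering perturbs the range by at most $\eps m/d$ in symmetric difference, because a rotation about a codimension-$2$ flat sweeps the points monotonically. It then freezes that new point, re-rotates about the updated $(d-1)$-subset, replaces $x_2$, and so on for $d$ steps; the triangle inequality gives total symmetric difference $\leq \eps m$. Each step needs $O(\tfrac{d}{\eps}\log \tfrac d\delta)$ samples to hit the relevant window with probability $1-\delta/d$, and a union bound over the $d$ steps gives $O(\tfrac{d^2}{\eps}\log d)$ for constant $\delta$. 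This sequential scheme is exactly what makes the ``one point moved at a time'' symmetric-difference increment easy to control, which is the piece your all-at-once version is missing; if you want to keep the simultaneous-hit structure you would need to prove a quantitative spanning lemma that the paper deliberately sidesteps.
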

\begin{proof}
	
	For any halfspace range $H \in \c{H}_d$, we aim to show that $(X, {\c{H}_d}_{\mid N})$ contains some $H'$ within $\eps m$ symmetric difference of $H$ (recall $m = |X|$).  
	For any halfspace range $H$, we can translate and rotate this to define the geometric halfspace $h$, until it has at least $d$ points $\{x_1, x_2, \ldots, x_d\}$ incident to its boundary (some may be inside $H$ and some may be just outside).  We will let $h$ denote the geometric shape and $H$ the range defined $H = X \cap h$.  
	
	We can fix the last $d-1$ points $B_1 = \{x_2, \ldots, x_d\}$, and consider a rotation of $h$ so that those $d-1$ points stay incident to its boundary.  This defines an ordering over all points in $X \setminus B_1$.  Denote the  first $2\eps m/d$ points closest to $x_1$ in this ordering as the set $Y_1$. If we take the union of any $y \in Y_1$ with $B_1$ it induces another range $H_1'$ that has symmetric difference of size at most $\eps m/d$ with $H$.  
	A randomly chosen point from $X$ is in this set $Y_1$ with probability $2\eps/d$.  
	If we randomly choose $k$ points iid, then none of these will be from $Y_1$ with probability at least $\delta' = (1-2\eps/d)^k \leq \exp(-2k \eps/d)$.  Hence setting $k = \frac{d}{2\eps} \log \frac{1}{\delta'}$ will result in one point within $Y_1$ with probability at least $1-\delta'$.  
	Let the halfspace induced by $B_1 \cup y_1$ be $\hat h_1$.  
	
	If we assume we have chosen some point $y_1 \in Y_1$ from this first set of $k$ points, then we 
	fix this point in $B_2$, and also put the last $d-2$ points from $B_1$ in $B_2$.  This again results in $d-1$ points in $B_2$, and we can order the points to rotate a halfspace $h_2$ that is incident to these points.  We put the points within $\eps/d$ of $x_2$ in this ordering in $Y_2$.  Then again if we iid sample another $k$ points from $X$, with probability at least $1-\delta'$ one falls into $Y_2$.  Let such a point be $y_2$ and the halfspace induced by $y_2 \cup B_2$ is $\hat h_2$ and the symmetric difference $|(X \cap \hat h_1) \triangle (X \cap \hat h_2)| \leq \eps m/d$.  
	
	We repeat this for $d-2$ more steps, until we obtain a set $\{y_1, y_2, \ldots, y_d\}$.  This succeeds with probability at least $1-d \cdot \delta'$.  The induced halfspace space $\hat h_d$ is within symmetric difference $|(X \cap \hat h_d) \triangle (X \cap \hat h)| \leq d \cdot \eps m /d = \eps m$, by triangle inequality.  
	
	Setting $\delta' = \delta/d$, and $k = \frac{d}{\eps} \log \frac{d}{\delta}$ this implies after $\frac{d^2}{\eps} \log \frac{d}{\delta}$ iid samples from $X$, with probability at least $1-\delta$, the resulting set $N$ induces a weak $\eps$-covering of $(X, \c{H}_d)$.  
\end{proof}

\subsection{Fast Enumeration of Halfspaces}
\label{sec:H-enum}

Now using our sets $|N| = n$ and $|S|=s$ we enumerate over the ranges in the weak $\eps$-covering $O(X, {\c{H}_d}_{\mid N})$, and for each range we count the intersection with an $\eps$-sample $S$ of $(X, \c{H}_d)$.  We first consider the case when $d=2$; the general case will reduce to this case.  

Our technique follows that of Dobkin and Eppstein~\cite{DE93}.  This first builds the dual arrangement $A(N^*)$, where in the dual the points in $N$ are halfspaces in $N^*$.  Each halfspace $h \in N^*$ intersects at most $O(n)$ other halfspaces, and takes as long to insert in the arrangement; thus construction of $A(N^*)$ takes total $O(n^2)$ time.  Also, $A(N^*)$ has $O(n^2)$ vertices and edges, each edge representing a combinatorial range from ${\c{H}_2}_{\mid N}$.  At a vertex, we are incident to $4$ edges, these correspond to $4$ ranges in ${\c{H}_2}_{\mid N}$ that only toggle the inclusion of the two relevant points $x_1, x_2 \in N$ whose dual halfspaces $h_1$ and $h_2$ cross at that vertex.  

Then our technique extends that of Dobkin and Eppstein~\cite{DE93} in that we annotate each edge of $A(N^*)$ with each halfspace $g \in S^*$, the dual set of $S$.  Each such $g \in S^*$ intersects at most $O(n)$ edges of $A(N^*)$ and these edges can be found and annotated in $O(n)$ time.  Annotating all takes $O(ns)$ time.  
This annotation describes how many halfspaces are crossed when moving between vertices in the arrangement.  By considering the counts of $S$ at all vertices of $A(N^*)$, we evaluate $S$ on all ranges in $(S, {\c{H}_2}_{\mid N})$.  

We can traverse $A(N^*)$ using a topological sweep~\cite{EDEL89} in $O(n^2)$ time.  Starting at the far left, corresponding to ranges that contain no points, we maintain at each vertex of $A(N^*)$ how many halfspaces $g \in S^*$ we are below, and thus how large is $S \cap H$ for all $H \in (S, {\c{H}_2}_{\mid N})$.

Combining the results for $n = O(1/\eps)$ from Lemma \ref{lemma:small-halfspace-covering} 
and 
$s = O(1/\eps^{\frac{4}{3}} \log^{\frac{2}{3}} 1/\eps)$~\cite{MP18} both for constant $d$, 
we can state our result for $\c{H}_2$.  

\begin{lemma}
	Consider a range space $(X,\c{H}_2)$ with $|X| =m$.  For 
	maximum range $H^* = \arg\max_{H \in \c{H}_2} \Phi(H)$, with constant probability, in time 
	$O(m + \frac{1}{\eps^{7/3}}\log^{2/3} \frac{1}{\eps})$, 
	we can find a range $\hat H_\eps$ so that 
	$
	|\Phi(H^*) - \Phi(\hat H_\eps)| \leq \eps.
	$
\end{lemma}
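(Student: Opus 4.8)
The plan is to combine the three ingredients that the surrounding text has already set up: the small weak $\eps$-covering of Lemma~\ref{lemma:small-halfspace-covering}, the fast $\eps$-sample construction for $(X,\c{H}_2)$ from~\cite{MP18}, and the annotated topological-sweep enumeration of Section~\ref{sec:H-enum}. Concretely, first I would instantiate Lemma~\ref{lemma:small-halfspace-covering} with $d=2$ and constant failure probability to get, in time $O(m)$ (reading $N$ off a constant-size random sample), a set $N$ with $|N|=n=O(1/\eps)$ such that $(X,{\c{H}_2}_{\mid N})$ is a weak $(\eps/C)$-covering of $(X,\c{H}_2)$ for a suitable constant $C$. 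Simultaneously I would invoke~\cite{MP18} to build, in $O(m + \frac{1}{\eps^2}\log\frac1\eps)$ time and with constant probability, an $(\eps/C)$-sample $S$ of $(X,\c{H}_2)$ of size $s=O(\frac{1}{\eps^{4/3}}\log^{2/3}\frac1\eps)$.

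Next I would run the enumeration of Section~\ref{sec:H-enum}: build the dual arrangement $A(N^*)$ in $O(n^2)$ time, annotate each of its $O(n^2)$ edges with the crossings of the $s$ dual halfspaces in $S^*$ in $O(ns)$ time, and then topologically sweep $A(N^*)$ in $O(n^2)$ time, maintaining at every vertex the value $|S\cap H|$ for the corresponding range $H\in{\c{H}_2}_{\mid N}$. For each such range I compute $\Phi$ using the counts against $S$ (the red/blue split is carried along exactly as in the $\eps$-sample), and I return the range $\hat H_\eps = X\cap\psi_{\c{H}_2}(H)$ of maximum estimated $\Phi$. The correctness then follows from property (P2): for $H^*$ there is, by the weak covering, some $H'\in{\c{H}_2}_{\mid N}$ with $|H^*\triangle H'|\le (\eps/C)m$, so $|\Phi(H^*)-\Phi(\psi_{\c{H}_2}(H')\cap X)|$ is small, and the $(\eps/C)$-sample $S$ lets us estimate this $\Phi$ value to within another small additive term; choosing $C$ appropriately makes the total at most $\eps$. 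Since we enumerate all of ${\c{H}_2}_{\mid N}$, we see a range at least as good as this $H'$.

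For the running time, the dominant terms are: $O(m)$ to touch the input and draw the samples; $O(m + \frac{1}{\eps^2}\log\frac1\eps)$ for the $\eps$-sample construction of~\cite{MP18}; $O(n^2)=O(1/\eps^2)$ to build and sweep $A(N^*)$; and $O(ns)=O(\frac1\eps\cdot\frac{1}{\eps^{4/3}}\log^{2/3}\frac1\eps)=O(\frac{1}{\eps^{7/3}}\log^{2/3}\frac1\eps)$ for the annotation, which swallows the $O(1/\eps^2)$ and the $\frac{1}{\eps^2}\log\frac1\eps$ terms. Summing gives $O(m + \frac{1}{\eps^{7/3}}\log^{2/3}\frac1\eps)$, as claimed.

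The main obstacle I anticipate is not any single step but the bookkeeping that makes (P2) apply cleanly here. In particular (P2) is stated for an $\eps$-covering, whereas Lemma~\ref{lemma:small-halfspace-covering} only gives a \emph{weak} $\eps$-covering; one must check that the argument only ever needs the covering property at the single range $H^*$ (which is all "weak" guarantees), so this is fine but deserves an explicit sentence. A secondary subtlety is that the enumeration produces subsets $H\in{\c{H}_2}_{\mid N}$ (subsets of $N$), and we must pass these through the conforming map $\psi_{\c{H}_2}$ and intersect with $S$ (for the estimate) versus $X$ (for the reported range), exactly as in the definition of $(X,\c{A}_\triangle)$ preceding this section; keeping the red and blue counts separate throughout the sweep is routine but needs to be stated. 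Finally, all probabilistic events (the weak covering, the $\eps$-sample) hold only with constant probability, so the statement is correctly phrased with "constant probability"; standard median amplification would upgrade this to $1-\delta$ at a $\log\frac1\delta$ factor, matching the $d=2$ case of the headline halfspace bound.
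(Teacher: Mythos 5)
Your proposal matches the paper's (brief, implicit) argument exactly: combine the $O(1/\eps)$-size weak covering from Lemma~\ref{lemma:small-halfspace-covering} with the $O(\eps^{-4/3}\log^{2/3}\frac1\eps)$-size $\eps$-sample of~\cite{MP18}, then run the annotated dual-arrangement topological sweep of Section~\ref{sec:H-enum}, with the $O(ns)$ annotation cost dominating. The subtleties you flag (weak covering suffices because only the single range $H^*$ needs to be covered; passing combinatorial ranges through $\psi_{\c{H}_2}$; carrying red/blue counts separately; amplification to $1-\delta$) are all accurate and align with how the paper handles them.
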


To extend this to $(X, \c{H}_d)$ we start with a weak $\eps$-covering $(X, {\c{H}_d}_{\mid N})$ of size $n = O(\frac{1}{\eps})$ and a random sample $S$ of size $O(\frac{1}{\eps^2})$ which is an $\eps$-sample with constant probability~\cite{LLS01}.

To search $(X, {\c{H}_d}_{\mid N})$ we consider all $O(n^{d-2})$ subsets $L \subset N$ of size $d-2$.  For each subset $L$ we define a projection $\pi_L$ orthogonal to the span of $L$, resulting in a $2$-dimensional space.  Restricting to this $2$-dimensional space, we follow Dobkin and Eppstein~\cite{DE93}, and scan ${\c{H}_d}_{\mid N}$ under this projection, which enforces they have boundary incident to $L$.  Here we can again construct each pertinent dual arrangement in $2$ dimensions in $O(n^2)$ time.  

We can then reduce $S$ to a set $S_L$ of size $s = O(\frac{1}{\eps^{4/3}}\log^{2/3} \frac{1}{\eps})$, which is an $\eps$-sample restricted to $L$.  This takes $O(\frac{1}{\eps^2} \log \frac{1}{\eps})$ time for each $L$~\cite{MP18}.  

Then we annotate the $2$-dimensional dual arrangement of $\pi_L(N)$ with $S_L$ in $O(ns) = O(\frac{1}{\eps} \cdot \frac{1}{\eps^{4/3}}\log^{2/3} \frac{1}{\eps})$ 

time.  The topological sweep, and evaluation of each range takes $O(n^2)$ time, and does not dominate the cost.  
Ultimately for all $O(n^{d-2})$ subsets $L$, the total running time is $O(n^{d-2} (n s + \frac{1}{\eps^2} \log \frac{1}{\eps})) = O(\frac{1}{\eps^{d + 1/3 }}\log^{2/3} \frac{1}{\eps})$, for constant $d$.  
To amplify the success probability to at least $1-\delta$, we repeat the entire construction $O(\log \frac{1}{\delta})$ times, and return the $\hat H_\eps$ with median value $\Phi(\hat H_\eps)$.

\begin{theorem}
	\label{thm:halfspace}
	Consider a range space $(X,\c{H}_d)$ with $|X| =m$.  For 
	$H^* = \arg\max_{H \in \c{H}_d} \Phi(H)$, with probability at least $1-\delta$, in time $O(m + \frac{1}{\eps^{d + 1/3}} \log^{2/3} \frac{1}{\eps} \log \frac{1}{\delta})$, we find a range $\hat H_\eps$ so 
	$
	|\Phi(H^*) - \Phi(\hat H_\eps)| \leq \eps.
	$
\end{theorem}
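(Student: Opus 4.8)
The plan is to assemble Theorem~\ref{thm:halfspace} from the three ingredients already developed in this section: the small weak $\eps$-covering from Lemma~\ref{lemma:small-halfspace-covering}, the fast $\eps$-sample construction for halfspaces from~\cite{MP18}, and the projection-based enumeration scheme of Dobkin and Eppstein~\cite{DE93}. First I would fix a target failure probability $\delta$ and split it across the randomized sub-procedures: run Lemma~\ref{lemma:small-halfspace-covering} with its $\delta'$ set so that $N$, of size $n=O(\tfrac{1}{\eps}\log\tfrac{1}{\delta'})=O(\tfrac1\eps)$ for constant $d$, is a weak $\eps/C$-covering; and independently draw $S$ of size $O(\tfrac{1}{\eps^2})$, which is an $\eps/C$-sample for $(X,\c{H}_d)$ with constant probability by~\cite{LLS01}. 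Here $C$ is the constant from property (P2).

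Next I would describe the enumeration. By (P2), to $\eps$-approximate $\Phi$ it suffices to evaluate, on the $\eps$-sample $S$, the value $\Phi$ for every range $X\cap\psi_{\c{H}_d}(B)$ induced by a basis $B\subset N$; since $\beta=d$ for halfspaces, there are $O(n^d)$ such ranges, but we never want to touch all of them individually. Instead, iterate over all $O(n^{d-2})$ subsets $L\subset N$ of size $d-2$. For each $L$, project orthogonally to $\mathrm{span}(L)$ via $\pi_L$ to reach $\R^2$; in that plane we build the dual arrangement $A(\pi_L(N)^*)$ in $O(n^2)$ time, annotate its $O(n^2)$ edges with the $O(n)$ dual halfspaces of a reduced sample $S_L$ — an $\eps/C$-sample restricted to $L$ of size $s=O(\tfrac{1}{\eps^{4/3}}\log^{2/3}\tfrac1\eps)$, computable in $O(\tfrac{1}{\eps^2}\log\tfrac1\eps)$ time per $L$ by~\cite{MP18} — in $O(ns)$ time, and then topologically sweep~\cite{EDEL89} in $O(n^2)$ time, maintaining at each vertex the count $|S\cap H|$ for the corresponding range $H$. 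The per-$L$ cost is $O(ns+\tfrac{1}{\eps^2}\log\tfrac1\eps)$, so the total over all $L$ is $O(n^{d-2}(ns+\tfrac{1}{\eps^2}\log\tfrac1\eps))=O(\tfrac{1}{\eps^{d+1/3}}\log^{2/3}\tfrac1\eps)$ for constant $d$. Adding the $O(m)$ cost to read $X$ and draw the samples gives the claimed running time before amplification.

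For correctness I would argue that, conditioned on $N$ being a weak $\eps/C$-covering and each $S_L$ being the appropriate $\eps/C$-sample, every range $H\in\c{H}_d$ — in particular $H^*$ — has symmetric difference at most $(\eps/C)m$ with some $\psi_{\c{H}_d}(B)$ for a basis $B\subset N$, and the boundary of that $\psi_{\c{H}_d}(B)$ passes through $d$ points of $N$; writing $d-2$ of them as the subset $L$, that range is swept when we process $L$, and by (P2) its $\Phi$-value as measured on $S$ is within $\eps$ of $\Phi(H)$. Hence the maximum-scoring range $\hat H_\eps$ found over all sweeps satisfies $\Phi(\hat H_\eps)\ge\Phi(H^*)-\eps$ (after adjusting $C$). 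To boost the constant success probability of the covering-and-sample events to $1-\delta$, I would repeat the whole construction $O(\log\tfrac1\delta)$ times and return the $\hat H_\eps$ of median $\Phi$-value; a standard Chernoff argument over the independent trials shows the median is good with probability $\ge 1-\delta$, and this multiplies the runtime by $O(\log\tfrac1\delta)$, yielding $O(m+\tfrac{1}{\eps^{d+1/3}}\log^{2/3}\tfrac1\eps\log\tfrac1\delta)$.

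The main obstacle I anticipate is the bookkeeping around the projections $\pi_L$: one must check that restricting a halfspace's boundary to pass through the $d-2$ points of $L$ and then scanning the 2-dimensional dual arrangement really enumerates exactly the bases $B\supset L$ of size $d$ (no range missed, no double-counting beyond harmless repetition across different $L$), and that the reduced sample $S_L$ genuinely behaves as an $\eps/C$-sample for the ranges arising in that projected subproblem — i.e.\ that the $\eps$-sample property is inherited correctly under the projection and conditioning on $L$. The geometric-enumeration core is exactly the Dobkin--Eppstein reduction, so the novelty and the care needed is in the annotation step that threads the sample counts through the topological sweep without inflating the asymptotic cost; everything else is routine union bounds and arithmetic on the runtime terms.
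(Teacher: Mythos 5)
Your proposal reproduces the paper's own argument almost step for step: the weak $\eps$-covering $N$ of size $O(1/\eps)$ from Lemma~\ref{lemma:small-halfspace-covering}, the reduced per-$L$ sample $S_L$ of size $O(\eps^{-4/3}\log^{2/3}\tfrac1\eps)$ via~\cite{MP18}, the Dobkin--Eppstein projection over all $O(n^{d-2})$ subsets $L$ followed by dual-arrangement construction, annotation, and topological sweep, and finally median-of-$O(\log\tfrac1\delta)$ amplification. The runtime accounting and the invocation of (P2) for correctness also match, so this is the same proof, not an alternative route.
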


\subsection{Application to Disks and other Ranges}

Many other geometric ranges can be mapped to halfspaces through various lifting maps including disks, ellipses, slabs, and annuli.  
For disks, we can solve the approximate maximum range problem for $(X, \c{D}_2)$ by using the lifting $(x, y) \rightarrow (x, y, x^2 + y^2)$.  Then invoking the result for $(X, \c{H}_3)$ from Theorem \ref{thm:halfspace} takes $O(m + \frac{1}{\eps^{3+1/3}}\log^{2/3}\frac{1}{\eps})$ time.

\section{Rectangles}
\label{sec:rect}

For the case of rectangles $(X, \c{R}_d)$, we will describe two classes of algorithms.  One simply creates an $\eps$-cover $(X, {\c{R}_d}_{\mid N})$ and evaluates each rectangle $A$ in this cover on an $\eps$-sample $S$ as before.  
The other takes specific advantage of the orthogonal structure of the rectangles and of ``linearity'' of $\Phi$; this algorithm can find the maximum in $\Phi$ among ranges in $(X, {\c{R}_d}_{\mid N})$ without considering every possible range.   Our techniques are inspired by several algorithms~\cite{Barbay2014,TAKAOKA2002,DEM96} for the exact maximization problem, but requires new ideas to efficiently take advantage of using both $N$ and $S$.  
Common to all techniques will be an efficient way to compute an $\eps$-cover based on a grid.


\subsection{Grid $\eps$-Covers for Rectangles}
\label{sec:grid}

We will create a grid $G$ via a set of $r = O(1/\eps)$ cells along each axis.  The grid is then the cross-product of these cells on each axis.  
The grid is defined so no row or column contains more than $\eps m / (2d)$ points. Given any rectangle $A$ define $A' \subset \R^d$ to be $A$ rounded to these grid boundaries.  Rounding $A$ to $A'$ incurs at most $\eps m / (2d)$ change in points on each side, so that in total for all $2d$ sides there is  $|(X \cap A) \triangle (X \cap A')| \leq 2d \cdot \frac{\eps m}{2d} = \eps m$ error.  

We can sort $X$ along each axis in $O(m \log m)$ time, and take a set $N_i$ of $r-1$ points along the $i$th axis of points evenly spaced in this sorted order.  These define the grid boundaries on each axis.  If we choose $r-1 = m \cdot 4d/\eps$, then no row contains more than $\eps m / (2d)$ points as desired.  

We label the rectangular ranges of $X$ restricted to this grid boundary as $(X, {\c{R}_d}_{\mid G})$, and as argued above it is an $\eps$-cover of $(X, \c{R}_d)$.  
We can then efficiently annotate each grid cell with approximately how many points it contains from $X$.  
For each point $x \in X$ we assign it to the count of each grid cell in $O(\log \frac{1}{\eps})$ time, for constant $d$.  Since any rectangle in $(X, {\c{R}_d}_{\mid G})$ is also a rectangle in $(X, \c{R}_d)$ then the count on any rectangle in $(X, {\c{R}_d}_{\mid G})$ can be estimated within $\eps m$ by examining $G$.  
\begin{lemma}
	\label{lem:grid-cover}
	For range space $(X, \c{R}_d)$ where $|X|=m$, the construction of grid $G$ takes $O(m \log m + \frac{1}{\eps^d})$ time, has $O(1/\eps)$ cells on each side, and induces an $\eps$-cover $(X, {\c{R}_d}_{\mid G})$ of $(X, \c{R}_d)$ for constant $d > 1$.  
\end{lemma}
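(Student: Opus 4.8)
The plan is to follow the construction already sketched just before the statement and verify its three advertised properties — the $\eps$-cover guarantee, the $O(1/\eps)$ cells per side, and the $O(m\log m + 1/\eps^d)$ running time — in that order, since each builds on the previous.

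First, fix the construction. Sort $X$ along each of the $d$ axes in $O(m\log m)$ total time. On the $i$th axis pick $r-1 = \lceil 4d/\eps\rceil$ cut values that are evenly spaced in the sorted order (every $\lfloor m/r\rfloor$-th point), let $N_i$ be these cuts, and let $G$ be the product of the resulting one-dimensional partitions. By construction every axis-parallel \emph{slab} — the region between two consecutive cuts on a single axis — contains at most $\lceil m/(r-1)\rceil \le \eps m/(2d)$ points of $X$, and each axis carries $r = O(1/\eps)$ cells because $d$ is constant, which is the second claim. For the $\eps$-cover claim, fix any box $A\in\c{R}_d$ and let $A'$ be $A$ with each of its $2d$ bounding faces snapped \emph{outward} to the nearest grid line of the corresponding axis, so $A\subseteq A'$. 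Snapping one face to the nearest grid line sweeps across at most one slab on that axis (the largest grid line $\le$ the face position and the next one up bracket the face), so it adds at most $\eps m/(2d)$ points of $X$; since $A'$ differs from $A$ only through these $2d$ snaps, $|(X\cap A)\triangle(X\cap A')|\le 2d\cdot \eps m/(2d)=\eps m$, and $A'$ is grid-aligned, hence $A'\cap X\in (X,{\c{R}_d}_{\mid G})$. So $(X,{\c{R}_d}_{\mid G})$ is an $\eps$-cover of $(X,\c{R}_d)$.

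The point that genuinely needs care — and the only real obstacle — is degeneracy: if some coordinate value is shared by more than $\eps m/(2d)$ points of $X$, a single slab can be "heavy" and the per-face bound above fails. The standard remedy is to break ties consistently, e.g.\ by using the pair $(x_i,\mathrm{id})$ in place of the $i$th coordinate when choosing cuts and when defining "nearest grid line"; this keeps the cuts near-equal and makes the snapping well defined. Equivalently one assumes the points are in general position. I would state this caveat once and then suppress it.

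Finally, the running time. Sorting costs $O(m\log m)$ and choosing the cuts is an additional $O(m)$. To annotate $G$, locate each $x\in X$ in $G$ by binary searching each of the $d$ coordinate partitions at cost $O(d\log(1/\eps))=O(\log(1/\eps))$, incrementing the count of the containing cell; this is $O(m\log(1/\eps))$ overall, and initializing and iterating over the $r^d = O(1/\eps^d)$ cells costs $O(1/\eps^d)$. The total is $O(m\log m + m\log(1/\eps) + 1/\eps^d)$. The middle term is absorbed: if $\eps\ge 1/m$ then $\log(1/\eps)\le\log m$, while if $\eps<1/m$ then $1/\eps^d \ge (1/\eps)m^{d-1}\ge m\log(1/\eps)$ because $d>1$ and $t\ge\log t$ for $t\ge 1$. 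Hence the construction runs in $O(m\log m + 1/\eps^d)$ time, and this last absorption is exactly where the hypothesis $d>1$ is used; it is the only mildly nonroutine bookkeeping in the proof.
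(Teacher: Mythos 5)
Your proof is correct and follows the same construction and argument the paper gives: sort along each axis, place $O(1/\eps)$ evenly-spaced cuts so each slab holds at most $\eps m/(2d)$ points, snap each face of a query rectangle to a grid boundary for $\eps m$ total symmetric difference, and bucket the points into cells. You are somewhat more careful than the paper in two places — you flag the degenerate case of a coordinate shared by many points (which the paper silently assumes away via general position), and you explicitly justify why the $O(m\log(1/\eps))$ bucketing term is absorbed by $O(m\log m + 1/\eps^d)$ using $d>1$ — but these are refinements of the same route, not a different one.
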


\subparagraph*{Simple enumeration algorithm.}
Given the point set $m$ we take an $(\eps /4d)$-sample $S$ and then construct a grid $G$ on it
in $O(m + \frac{1}{\eps^2} \log \frac{1}{\eps} + \frac{1}{\eps^d})$ time. $S$ is also an $(\eps/4d)$-sample for intervals along each axis, so running our grid construction on $S$ induces an $\eps$-cover 
of $(X, \c{R}_d)$. 
Once we have the approximate count in each grid cell of $G$, we can build subset sum information.  That is, in each of $O(1/\eps^d)$ grid cells, we compute the sum of counts for all grid cells with smaller or equal indexes in each dimension.  Straightforward dynamic programming solves this in $O(1/\eps^d)$ time.  
Then for any rectangle on $(S, {\c{R}_d}_{\mid G})$ can have its count calculated in $O(1)$ time using inclusion-exclusion formulas from a constant number of subset sum values; for instance when $d=2$, $4$ values are required.  

We evaluate all rectangles by enumerating all pairs of grid cells (defining upper and lower corners in each dimension), and calculating the count for $S$ (technically a separate red count from $S_R$ and blue count from $S_B$) in $O(1)$ time.  

\begin{theorem}
	\label{thm:rect-gen}
	Consider a range space $(X, \c{R}_d)$ with $|X|=m$ and an Lipschitz-continuous function $\Phi$ with maximum range $A^* = \arg \max_{A \in \c{R}_d} \Phi(A)$.  
	With probability at least $1-1/e^{1/\eps}$, in time $O(m + \frac{1}{\eps^{2d}})$ we can find a range $\hat A_\eps$ so that 
	$
	|\Phi(A^*) - \Phi(\hat A_\eps)| \leq \eps.
	$
\end{theorem}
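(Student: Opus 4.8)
The plan is to formalize and tighten the simple enumeration algorithm sketched just above, paying attention to two things: that the error terms compose to at most $\eps$, and that the sample size needed to reach failure probability $1/e^{1/\eps}$ does not inflate the running time past $O(m + 1/\eps^{2d})$. Throughout I assume without loss of generality that $\Phi$ is $1$-Lipschitz as a function of the density pair $(\mu_R(A),\mu_B(A))\in[0,1]^2$ (otherwise rescale $\eps$ by the Lipschitz constant), and I assume $d \ge 2$ as in Lemma~\ref{lem:grid-cover}. First I would draw uniform random samples $S_R\subseteq R$ and $S_B\subseteq B$, each of size $s = O(\frac{1}{\eps^2}(2d + \frac{1}{\eps})) = O(1/\eps^3)$ (recall $(X,\c{R}_d)$ has constant VC-dimension $2d$). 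By the standard $\eps$-sample bound, each of $S_R$, $S_B$ is an $(\eps/c)$-sample of $(R,\c{R}_d)$, respectively $(B,\c{R}_d)$, for a suitable constant $c=c(d)$, with probability at least $1-\tfrac12 e^{-1/\eps}$; by a union bound both hold with probability at least $1-e^{-1/\eps}$. Once this event holds, everything downstream is deterministic, so this is the only place randomness enters. One pair of samples does triple duty here: it locates balanced grid lines, it makes the resulting grid an $\eps$-cover in both the red and the blue density senses, and it gives accurate $\mu_R,\mu_B$ values on the boxes we enumerate.

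Next, along each axis $i$ I would merge the sorted orders of $S_R$ and $S_B$ and place $O(1/\eps)$ grid lines so that between consecutive lines there lie at most a $\tfrac{\eps}{8d}$ fraction of $S_R$ and at most a $\tfrac{\eps}{8d}$ fraction of $S_B$; since each sample is an $(\eps/c)$-sample for the axis-parallel slab sub-family of $\c{R}_d$, each such slab then contains at most a $\tfrac{\eps}{4d}$ fraction of the red and at most a $\tfrac{\eps}{4d}$ fraction of the blue points of $X$. Rounding any box $A$ to this grid snaps each of its $2d$ sides past at most one slab, so the rounded box $A'$ satisfies $|\mu_R(A)-\mu_R(A')|\le \eps/2$ and $|\mu_B(A)-\mu_B(A')|\le \eps/2$ (the same two-color $\eps$-cover argument as in Lemma~\ref{lem:grid-cover}, run separately for $R$ and $B$; the constants can be pushed down further by taking $c$, $r$ larger, still $r=O(1/\eps)$). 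I would then bucket each point of $S_R$ and $S_B$ into its grid cell via a binary search per axis ($O(\log\frac1\eps)$ each), store per-cell red and blue counts, and compute $d$-dimensional prefix sums over the $O(1/\eps^d)$ cells by a standard dynamic program in $O(1/\eps^d)$ time.

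Finally I would enumerate all $O((1/\eps^2)^d)=O(1/\eps^{2d})$ grid-aligned boxes (a lower and an upper grid line on each of the $d$ axes); for each, inclusion--exclusion over its $2^d=O(1)$ corners recovers $|S_R\cap A|$ and $|S_B\cap A|$ in $O(1)$ time, hence the sample densities $\hat\mu_R(A),\hat\mu_B(A)$ and an estimate $\hat\Phi(A)$; return the box $\hat A_\eps$ (mapped back as $X\cap\cdot$) maximizing $\hat\Phi$. Correctness chains three inequalities: (i) rounding $A^*$ to a grid-aligned $A'$ changes $\Phi$ by at most $O(\eps)$, by the two-color $\eps$-cover bound and $1$-Lipschitzness; (ii) replacing $X$-densities by $S$-densities on the fixed box $A'$ changes the value by at most $O(\eps)$, by the $(\eps/c)$-sample property and Lipschitzness; and (iii) $\hat\Phi(\hat A_\eps)\ge\hat\Phi(A')$ by the maximization, while $\Phi(\hat A_\eps)\ge\hat\Phi(\hat A_\eps)-O(\eps)$ again by (ii). Combining, $\Phi(\hat A_\eps)\ge\Phi(A^*)-O(\eps)$, which is at most $\eps$ once $c$ is chosen a large enough constant.

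For the running time: $O(m)$ to read the input and draw the samples, $O(s\log\frac1\eps)=O(\frac1{\eps^3}\log\frac1\eps)$ to sort and bucket $S_R\cup S_B$, $O(1/\eps^d)$ for the prefix sums, and $O(1/\eps^{2d})$ for the enumeration; since $d\ge2$ gives $\frac1{\eps^3}\log\frac1\eps=O(1/\eps^{2d})$ and $\frac1{\eps^d}=O(1/\eps^{2d})$, the total is $O(m+1/\eps^{2d})$. I expect the main obstacle to be bookkeeping rather than any single clever step: making one pair of samples simultaneously certify the slab-balance, the two-color cover bound, and the per-box count accuracy without leaking extra $\log$ or $\eps$ factors, and verifying that pushing the failure probability to $1/e^{1/\eps}$ — which forces $s=\Theta(1/\eps^3)$ — is harmless precisely because $2d\ge4>3$. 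A secondary subtlety is the degenerate case where $R$ or $B$ is tiny relative to $X$: there the slab-balance must be stated in terms of the red/blue densities (as above, building grid lines from $S_R$ and $S_B$ separately) rather than from a single sample of $X$, but this is a cosmetic restatement.
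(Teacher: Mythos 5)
Your proof is correct and follows essentially the same grid-plus-enumeration approach as the paper: sample, place $O(1/\eps)$ balanced grid lines per axis, build $d$-dimensional prefix sums over the $O(1/\eps^d)$ cells, and enumerate all $O(1/\eps^{2d})$ grid-aligned boxes with $O(1)$ inclusion--exclusion lookups. You are in fact somewhat more careful than the paper's sketch: you explicitly note that the claimed failure probability $1-e^{-1/\eps}$ forces $s=\Theta(1/\eps^3)$, verify this cost is absorbed by the $O(1/\eps^{2d})$ term because $d\ge 2$, and handle $R$ and $B$ with separate samples so the slab-balance and $\eps$-cover bounds hold per color rather than only for the union.
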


\subsection{Algorithms for Decomposable Functions}
\label{subsection:linear-algorithm}

Here we exploit a critical ``linear'' property of $\Phi$ that 
a rectangle $A$ can be decomposed into any two parts $A_1$ and $A_2$ and 
$\Phi(A) = \Phi(A_1) + \Phi(A_2)$.  Technically, we solve both $\Phi^+(A) = \mu_R(A) - \mu_B(A)$ and $\Phi^-(A) = \mu_B(A) - \mu_R(A)$ separately, and take their max.  
In particular, this allows us (following exact algorithms~\cite{Barbay2014}) to decompose the problem along a separating line.  The solution then either lies completely on one half, or spans the line.  In the exact case on $s$ points, this ultimately leads to a run time recurrence of $\c{T}_1(s) = 2\c{T}_1(s/2) + \c{T}_2(s)$ where $\c{T}_2(s)$ is the time to compute the problem 
spanning the line. The line spanning problem can then be handled using a different recurrence that leads to $\c{T}_2(s) = O(s^2)$ and a total runtime for the problem
of $\c{T}_1(s) = 2\c{T}_1(s/2) + O(s^2) = O(s^2)$ \cite{Barbay2014}.  

First we show we can efficiently construct a special sample $S$ of size $s = O(1 /(\eps^2 \log \frac{1}{\eps}))$, but this still would requires runtime of roughly $1/\eps^4$.  

Our approximate algorithm will significantly improve upon this be compressing the representation at various points, but requiring some extra bookkeeping and a bit more complicated recurrence to analyze.  
In short, we can map $S$ to an $r \times r$ grid (using Lemma \ref{lem:grid-cover}), and then the recurrence only depends on the dyadic y-intervals of the grid.  We can compress each such interval to have only $\eps s / \log r$ error, since each query only touches about $\log r$ of these intervals.  The challenge then falls to maintaining this compressed structure more efficiently during the recurrence.

The dense exact case on an $r \times r$ grid is also well studied.  There exists a practically efficient $O(r^3)$ time method~\cite{Bentley84} based on Kadane's algorithm (which performs best as \gridScanLin; see Section \ref{sec:exp}), and a more complicated method taking $O(r^3 ( \frac{\log \log r}{\log r} )^{\frac{1}{2}})$ time~\cite{TAKAOKA2002}.  
By allowing an approximation, we ultimately reduce this runtime to $O(r^2 \log r) = O(\frac{1}{\eps^2} \log \frac{1}{\eps})$.  

We will focus on the 2d case. This is where the advantage over the Theorem \ref{thm:rect-gen} bound of $O(m + 1/\eps^4)$ is most notable.  Generalization to high dimensions is straightforward: enumerate over pairs of grid cells to define the first $d-2$ dimensions, then apply the $2$-dimensional result on the remaining dimensions.

\subparagraph*{Tree and slab approximation.}
The algorithm builds a binary tree over the rows (the $y$ values) of $G$.  
We will assume that the number of cells in each axis $r = O(1/\eps)$ is a power of $2$ (otherwise we can round up), so it is a perfectly balanced binary tree.  

At the $i$th level of the tree, each node contains $r/2^i$ rows and there are $2^i$ nodes.  We refer to the family of rows represented by a subtree as a \emph{slab}.  
Any grid-aligned rectangle $A = [x_1, x_2] \times [y_1, y_2]$ can be defined as the intersection of $[x_1, x_2]$ with at most $2 \log_2 r$ slabs in the y-coordinate -- the classic dyadic decomposition.  This implies we can tolerate $\eta s = O(\eps s/\log r)$ additive error in each slab to have at most $O(\eps s)$ additive error overall (which implies the percentage of red and of blue points in each range has additive $O(\eps)$ error).

\begin{wrapfigure}{R}{0.25\textwidth}
	\vspace{-.15in}
	\includegraphics[width=0.3\textwidth]{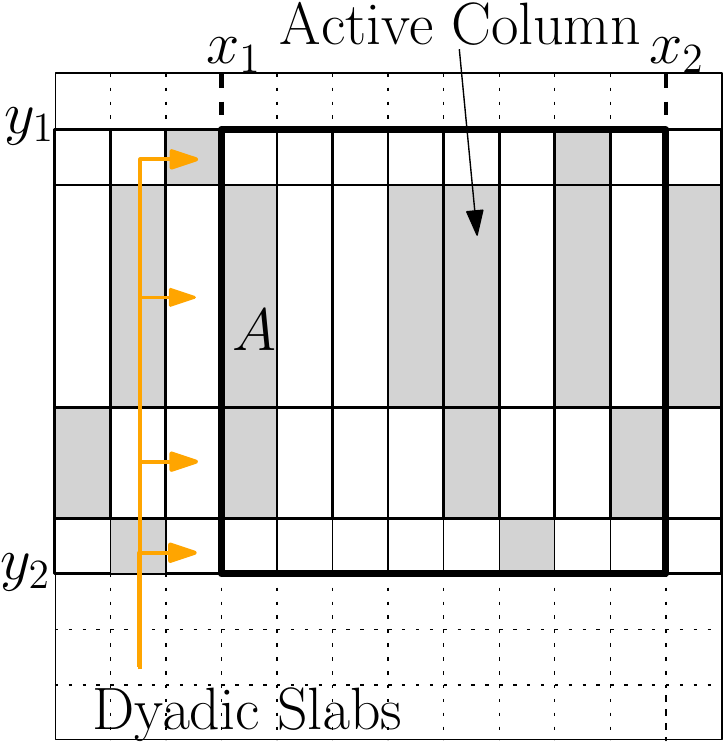}
	\vspace{-.35in}
	\label{fig:TreeSlab}
\end{wrapfigure}

Since the rectangle will span the entire vertical extent ($y$ direction) of each slab in this decomposition, the additive error of a slab can be obtained along just the horizontal ($x$) direction.  Thus, we can scan cells from left to right within a slab, and only retain the cumulative weight in a cell when it exceeds $\eta s$. We refer to this operation as \textit{$\eta$-compression}.  We denote each column (and $x$ value) within a slab where it has retained a non-zero value as \emph{active}, all other columns are \emph{inactive}.  We store the active cells in a linked list.

Since there are $\Theta(s/r)$ points per row, it implies we can approximate each slab consisting of $1$ row (a leaf of the tree, level $\log_2 r$) with weights in only $O(1/(r \eta)) = O( \log r)$ cells (since $r = O(\frac{1}{\eps})$).  And a slab at level $i$ (originally with $\Theta(s/2^i)$ points) can be approximated by accumulating weight in $O(\min\{r,1 / (\eta 2^i)\})$ cells.  For level $i > \log 1/ \eta r$, this compresses the points in that slab.

\begin{lemma} \label{lem:compression}
In $O(r^2)$ time, we can compress all slabs in the tree, so a slab at level $i$ contains $\ell_i = O(\min\{r, 1/ (\eta 2^i) \})$ active columns where
$\eta = O(\eps / \log r)$.  
\end{lemma}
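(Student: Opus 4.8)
The plan is to obtain the active-column bound from a one-line counting argument about $\eta$-compression, and to get the $O(r^2)$ running time by computing the compressed slabs bottom-up in the tree, rather than compressing each one independently from the grid. Throughout I assume the grid $G$ with its cell counts is already in hand (Lemma~\ref{lem:grid-cover}), and that $S$ has size $s = O(1/(\eps^2 \log \tfrac{1}{\eps}))$ while $r = O(1/\eps)$.

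\emph{Active columns.} The $y$-grid lines are placed at evenly spaced ranks of $S$ (the construction of Lemma~\ref{lem:grid-cover} applied to $S$), so every row of $G$ carries $\Theta(s/r)$ points of $S$, and a level-$i$ slab, being $r/2^i$ consecutive rows, carries $w_i = \Theta(s/2^i)$ points. Fix such a slab and run $\eta$-compression: sweep its columns left to right keeping a running weight, and whenever that weight reaches $\eta s$ deposit it into the current column, mark the column active, and reset. Every active column but possibly the last absorbs a disjoint chunk of weight $\ge \eta s$, so the number of active columns is at most $w_i/(\eta s) + 1 = O(1/(\eta 2^i))$; it is also at most $r$. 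Hence $\ell_i = O(\min\{r, 1/(\eta 2^i)\})$ with $\eta = \Theta(\eps/\log r)$, as claimed.

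\emph{Running time.} I would not compress a slab by rereading the grid, since a level-$i$ slab spans $\Theta(r^2/2^i)$ cells and that sums to $\Theta(r^2\log r)$ over the tree. Instead, attach to each slab the list of the points of $S$ it contains, sorted by $x$, and build these lists from the leaves up. First bucket $S$ into its $r$ rows and sort each row by $x$, giving the leaf lists in $O(s\log s)$ time. A level-$i$ slab is the disjoint union of its two level-$(i+1)$ children, so its list is the merge of the two child lists, in time linear in their combined length; the slabs at any fixed level hold $\Theta(s)$ points in total, so merging costs $O(s)$ per level and $O(s\log r)$ over all $\log r$ levels. Running $\eta$-compression on each slab's sorted list is again linear in that list, contributing a further $O(s\log r)$. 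Since $s\log r = O(\tfrac{1}{\eps^2\log(1/\eps)}\cdot\log\tfrac{1}{\eps}) = O(1/\eps^2) = O(r^2)$ and the initial sort is $O(s\log s) = O(r^2)$, the total is $O(r^2)$. We retain only the $\ell_i$-sized compressed lists, of total size $O(\sum_i 2^i\ell_i) = o(r^2)$, and discard the uncompressed per-level scaffold (total size $\Theta(s)$ per level) as we ascend.

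None of this is delicate; the two points to keep straight are that the merging scaffold must be charged against \emph{points} and not grid \emph{cells} --- which is why one level costs $O(s)$ rather than $\Theta(r^2)$ --- and that the row-balance of $G$ is what makes the deposit count $w_i/(\eta s)$ hold uniformly across all $2^i$ slabs at a level. Beyond this bookkeeping there is no real obstacle.
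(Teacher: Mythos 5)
Your active-column bound is exactly the argument the paper gestures at: a level-$i$ slab holds $\Theta(s/2^i)$ weight because the grid's $y$-cuts are rank-balanced, and each deposited active column absorbs $\ge \eta s$, giving $O(1/(\eta 2^i))$ active columns, capped at $r$. That part matches.

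For the $O(r^2)$ time bound, you take a genuinely different route. You carry \emph{sorted point lists} down the tree and merge them bottom-up, which costs $O(s\log s)$ for the initial row-bucketing plus $O(s)$ per level of merging and compression. This is $O(r^2)$ only because you explicitly invoke the small-sample size $s = O(1/(\eps^2\log\tfrac{1}{\eps}))$, so that $s\log s = O(1/\eps^2)$; with a vanilla $\eps$-sample of size $\Theta(1/\eps^2)$ your argument would give $O(r^2\log r)$, not $O(r^2)$. The more natural route (and the one consistent with the paper's ``scan cells'' phrasing, which otherwise costs $\Theta(r^2\log r)$ if each slab is re-read from $G$) is to work purely on the grid: each leaf slab's length-$r$ column-weight array is a row of $G$ read in $O(r)$, each internal slab's array is the coordinatewise sum of its two children's \emph{uncompressed} arrays in $O(r)$, and $\eta$-compression of a length-$r$ array is $O(r)$. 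Since the tree has $O(r)$ nodes, this is $O(r^2)$ total with no dependence on $s$ at all (after the grid is built), and no error accumulation since every slab is compressed from exact cell weights. Both routes are correct and give the claimed bound; yours carries an extra hypothesis on $|S|$ that is true here but not needed, while the grid-array route is simpler and decouples the compression time from the sample size. One small caution worth making explicit in your write-up: you must \emph{not} build a parent from its children's already-compressed arrays (which would be tempting for space reasons), since the $\eta s$ per-slab errors would then compound along the $\log r$ levels; your point-list merging avoids this implicitly, and the grid-array route avoids it by merging uncompressed arrays, but the pitfall deserves a sentence.
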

\subparagraph*{Interval Preprocessing and Merging.}
Now consider a subproblem, where we seek to find a rectangle $A = [x_1, x_2] \times [y_1, y_2]$ to maximize the total weight, restricted to a given horizontal extent $[y_1, y_2]$ (e.g., within a slab).  We reduce this to a $1$d problem by summing the weights for each $x$-coordinate to $w_x = \sum_{y \in [y_1, y_2]} w_{x,y}$.  
Then there is an often-used~\cite{Barbay2014,DE93,APV06} way to preprocess intervals $[x_1', x_2']$ so they can be merged and updated.  It maintains $3$ maximal weight subintervals:  
(1) the maximal weight subinterval in $[x_1', x_2']$, 
(2) the maximal weight interval including the left boundary $x_1'$, and
(3) the maximal weight interval including the right boundary $x_2'$.  
Given two preprocessed adjacent intervals $[x_1', x_2']$ and $[x_2'+1, x_3']$, we can update these subintervals to $[x_1', x_3']$ in $O(1)$ time~\cite{Barbay2014}.  
Thus given a horizontal extent with $a$ active intervals, we can find the maximum weight subinterval in $O(a)$ time.  

\subparagraph*{Recursive construction.}
Now we can describe our recursive algorithm for finding the maximal weight rectangle on the grid $G$.  We find the maximum weight rectangle 
through 3 options:
(1) completely in the top child's subtree, 
(2) completely in the bottom child's subtree,
(3) overlapping both the top and bottom child's subtree.  
The total time can be written as a recurrence as $\c{T}_1(r) = 2 \c{T}_1(r/2) + \c{T}_2(r)$, where $\c{T}_2$ is the time to solve case (3).  

Case (3) requires another recurrence to understand, and it closely follows the ``strip-constrained'' algorithm of Barbay \etal~\cite{Barbay2014}; our version will account for the dense grid.  

We consider the \textsc{Strip-constrained grid search} problem: 
\emph{First fix a \emph{strip} $M$ which is a consecutive set of rows.  Then consider two slabs $T$ and $B$ where $T$ is directly above (on \textbf{t}op of) $M$ and $B$ is directly \textbf{b}elow $M$.  A column of $M$ is active if it is active in $T$ or $B$.  Counts in active columns of $M$ are maintained, and intervals of $M$ described by consecutive inactive columns have been merged.  
The goal is to find the maximum weight rectangle with vertical span $[y_1, y_2]$ where $y_2$ is in $T$ and $y_1$ is in $B$ (it must cross $M$).  }

\begin{wrapfigure}{r}{.8in}
\vspace{-.35in}
\includegraphics[width=.8in]{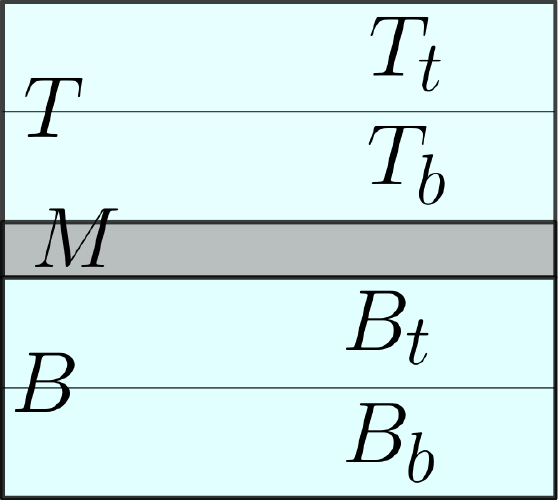}
\vspace{-.5in}
\end{wrapfigure}
We specifically want to solve this problem when $M$ is empty, $T$ is the top child and $B$ the bottom child of the root, and all columns are initially active.  We call this the case of size $r$ since there are still $r$ rows.  

\begin{lemma}
\label{lem:rect-strip}
The Strip-constrained grid search problem of size $r$ over an $\eta$-compressed binary tree takes $O(r / \eta)$ time.  
\end{lemma}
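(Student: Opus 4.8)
The plan is to mirror the strip-constrained divide-and-conquer of Barbay \etal~\cite{Barbay2014}, but to run it on the $\eta$-compressed binary tree of Lemma~\ref{lem:compression}, charging every step of the recursion to the \emph{active columns} of the sub-slabs it manipulates rather than to the full width $r$. I parametrize the recursion by $k$, the common number of rows of the top slab $T$ and bottom slab $B$; the instance in the statement has $k=r/2$, with $T$ and $B$ the two children of the root and $M$ empty, and a slab of $k=r/2^{j}$ rows is a tree node at level $j$, so by Lemma~\ref{lem:compression} it has at most $\ell_j=O(\min\{r,\,1/(\eta 2^{j})\})$ active columns. Throughout, the strip $M$ is kept as one doubly-linked list interleaving its active columns (each with its accumulated weight) and its already-merged inactive intervals; following \cite{Barbay2014}, each list item also stores its total weight and its three maximal-weight subintervals (overall, left-anchored, right-anchored), so that two adjacent items combine in $O(1)$ time and, using a global array of column prefix sums of the $\eta$-compressed grid, an inactive interval splits at a named column in $O(1)$ time. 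Let $\c{T}_2(k)$ be the running time with slabs of $k$ rows; I aim to show $\c{T}_2(r/2)=O(r/\eta)$.

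For the recursive step, split $T$ into its top child $T_1$ and bottom child $T_2$ and $B$ into its top child $B_1$ and bottom child $B_2$, each of $k/2$ rows and at most $\ell_{j+1}\le \ell_j$ active columns. The optimal rectangle's top edge lies in $T_1$ or $T_2$ and its bottom edge in $B_1$ or $B_2$, giving four cases. When the top edge is in $T_2$ and the bottom edge in $B_1$, recurse directly on $T_2,B_1$ with the same strip $M$, after adjusting which columns of $M$ are active to match $T_2$ and $B_1$ (at most $O(\ell_{j+1})$ single-column activations/de-activations, $O(1)$ each). In the other three cases, first \emph{fold} the intervening band(s) --- $T_2$, or $B_1$, or both --- into $M$, forming $T_2\cup M$, $M\cup B_1$, or $T_2\cup M\cup B_1$: add the band's active-column weights into the matching columns of $M$, splitting the affected inactive intervals, in $O(\ell_{j+1})$ time, then recurse on $T_1$ and/or $B_2$ with the enlarged strip. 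Copying $M$ is too expensive, so the procedure restores its strip on return: after each recursive call it undoes the corresponding activations and folds --- subtract the added weights, re-merge the split intervals --- in $O(\ell_{j+1})$ time, and thus all four calls reuse one structure. At a leaf, $T$ and $B$ are single rows and the best subinterval of the combined column weights is read off with a single sweep in $O(\ell_{\log_2 r})$ time. Hence $\c{T}_2(k)=4\,\c{T}_2(k/2)+O(\ell_{j+1})$, with $O(\ell_{\log_2 r})$ at the base.

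Unrolling gives $\c{T}_2(r/2)=O(\sum_{j=0}^{\log_2 r} 4^{j}\ell_j)$. Let $j^\star=\log_2(1/(\eta r))$ be the level where the two branches of $\ell_j=O(\min\{r,1/(\eta 2^{j})\})$ agree; for the parameters of interest $\eta r=\Theta(1/\log r)$, so $0\le j^\star=\Theta(\log\log r)\le \log_2 r$. For $j\le j^\star$ the summand is $O(4^{j}r)$, a geometric series summing to $O(4^{j^\star}r)=O(1/(\eta^2 r))$; for $j>j^\star$ the summand is $4^{j}\cdot O(1/(\eta 2^{j}))=O(2^{j}/\eta)$, a geometric series summing to $O(2^{\log_2 r}/\eta)=O(r/\eta)$. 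Since $\eta r^{2}=\Omega(1)$ here, $1/(\eta^2 r)=(1/(\eta r^{2}))\cdot(r/\eta)=O(r/\eta)$; the $4^{\log_2 r}\cdot O(\ell_{\log_2 r})=O(r^{2}/(\eta r))=O(r/\eta)$ contribution from the leaves matches too, so $\c{T}_2(r/2)=O(r/\eta)$, as claimed.

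The heart of the proof --- and the step I expect to be the main obstacle --- is the second paragraph: arranging that \emph{every} fold, unfold, and leaf query touches only $O(\ell_j)$ list items, i.e., a cost proportional to the active columns of the bands being moved and \emph{independent of the current width of $M$} (which can grow to $\Omega(1/\eta)$). This forces $M$ to be maintained purely incrementally --- folding and unfolding realized by sparse point updates together with $O(1)$ interval split/merge, never a rescan or copy --- and it requires checking that the three-subinterval summary of \cite{Barbay2014} is stable under these local edits, and that the per-case ``does the box span $M$'' test is answered from that summary by one sweep of the $O(\ell_j)$ active columns, reusing finger pointers carried between adjacent tree levels to avoid any search. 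The remaining pieces --- the four-way case split and the geometric sums above --- are routine once this accounting is in place.
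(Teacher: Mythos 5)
Your proof follows essentially the same route as the paper's: the four-way Barbay-style case split on the children of $T$ and $B$, per-level work charged to the $O(\ell_i)$ active columns of the compressed slabs, and a geometric sum dominated by the leaf level giving $O(r/\eta)$. You are somewhat more explicit (and more careful) in two places where the paper is terse --- splitting the sum at $j^\star = \log_2(1/(\eta r))$ rather than using the single bound $\ell_i \le 1/(\eta 2^i)$, and spelling out the fold/restore discipline that keeps each recursive step's cost proportional to the active columns of the bands being moved rather than to $|M|$ --- but the concern you flag at the end is exactly the claim the paper makes (``the cost of recursing \ldots is also proportional to the number of active columns''), with its justification deferred to~\cite{Barbay2014}.
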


\begin{proof}
Following Barbay \etal~\cite{Barbay2014} we split the problem into $4$ subcases, following the subtrees of the slabs.  Slab $T$ has a top $T_t$ and bottom $T_b$ sub-slab, and similarly $B_t$ and $B_b$ for $B$.  Then we consider $4$ recursive cases with new strip $M'$:
(1) slabs $T_t$ and $B_b$ with $M' = T_b \cup M \cup B_t$, 
(2) slabs $T_b$ and $B_b$ with $M' = M \cup B_t$, 
(3) slabs $T_t$ and $B_t$ with $M' = T_b \cup M$, and
(4) slabs $T_b$ and $B_t$ with $M' = M$.  
The cost in a recursive step is the preprocessing of the new slab $M'$.  We will describe the largest case (1); the others are similar.

Strip $M$ already maintains preprocessed intervals of inactive columns.  When $T_b$ or $B_t$ has an active column which is inactive in $T_t$ and $B_b$, we treat this as a new inactive interval that needs to be maintained within $M'$.  The weights from $T_b$ and $B_t$ are added to that in the column for $M$.  If inactive intervals of $M'$ are then adjacent to each other, they are merged, in $O(1)$ time each.  
This completes the recursive step for case (1).  

In the base case when slabs $T$ and $B$ are single rows (at depth $O(\log r)$), the range maximum is restricted to use their active columns.  We sum weights on active columns in $T$, $B$, and $M$.  Then also considering the inactive intervals on $M$, invoke the interval merging procedure~\cite{Barbay2014} to find the maximal range, in time proportional to the number of active intervals, in $O(1 / (2^{\log r} \eta) = O(1 / (r \eta)) $ time.

The cost of recursing in any case is also proportional to the number of active columns since this bounds the number of potential merges, and the time it takes to scan the linked lists of active columns to detect where the merging is needed.  At level $i$ this is bounded by $\ell_i = \min\{r, 1 / (\eta 2^i)\} \le O(1 / (\eta 2^i)) $.  
 
At each level $i$ there are $4^i$ recursive sub instances and at most $O(1/ (2^i \eta))$ active columns, and therefore merging takes $Z_i = 4^i O(1 / (2^i \eta)) = 2^i O(1 / \eta)$ time.  The cost is asymptotically dominated by the last level, which takes time $ 2^{ \log_2 r} O(1/ \eta) = O( r / \eta)$.  
\end{proof}
Letting $\eta = \eps / (\log r) = O(1 / (r \log r))$ (since $r = O(1 / \eps)$) as it is in Lemma \ref{lem:compression} we have a bound of $\c{T}_2(r) = O(r^2 \log r )$.
We can solve the first recurrence of 
$\c{T}_1(r) 
= 
2 \c{T}_1(r/2) + \c{T}_2(r) 
= 
2 \c{T}_1(r/2) + O(r^2 \log r) 
= 
O(r^2 \log r)
$.
Using $r = O(1/\eps)$ this bounds the overall runtime of finding $\max_{R \in (S,{\c{R}_d}_{\mid G})} \Phi(R)$ as $O(\frac{1}{\eps^2} \log \frac{1}{\eps})$.  

\begin{theorem}
\label{thm:rect-linear}
Consider $(X,\c{R}_2)$ with $|X| =m$ and 
$A^* = \arg\max_{A \in \c{R}_2} \Phi(A)$.  With probability at least $1-\delta$, in time $O(m + \frac{1}{\eps^2} \log \frac{1}{\eps} \log \frac{1}{\delta})$, we can find a range $\hat A_\eps$ so 
$
|\Phi(A^*) - \Phi(\hat A_\eps)| \leq \eps.
$
\end{theorem}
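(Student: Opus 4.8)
The plan is to reduce the maximization over the continuous family $\c{R}_2$ to a maximum-weight axis-aligned rectangle problem on an $O(1/\eps)\times O(1/\eps)$ grid, solve that grid instance with the $\eta$-compressed recursive scan, and then boost the confidence by repetition. First I would read $X$ into an array and draw a random subsample $S$ (a red part and a blue part) of size $s=O(1/\eps^2)$ in $O(m)$ time; since $(X,\c{R}_2)$ has VC-dimension $4$, with at least a fixed constant probability $S$ is an $(\eps/c)$-sample of $(X,\c{R}_2)$ — and therefore also of the interval range spaces on each axis — for a suitable constant $c$. I would then run the grid construction of Lemma~\ref{lem:grid-cover} on $S$, obtaining a grid $G$ with $r=O(1/\eps)$ cells per side in $O(s\log s + 1/\eps^2)=O(\frac1{\eps^2}\log\frac1\eps)$ time; because $S$ is an $(\eps/c)$-sample for intervals, every grid row and column is light in $X$ as well as in $S$, so $(S,{\c{R}_2}_{\mid G})$ induces (after absorbing $c$) an $\eps$-cover of $(X,\c{R}_2)$, and by property (P2) it suffices to maximize $\Phi$ over these grid ranges using the red and blue counts carried by $S$: the resulting maximizer $\hat A_\eps$ will satisfy $|\Phi(A^*)-\Phi(\hat A_\eps)|\le O(\eps)$.

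To solve the grid instance I would exploit decomposability. Since $\Phi$ itself is not additive I would run the scan twice: once with cell weight equal to the (scaled) red count minus blue count in the cell, so that $\Phi^+(A)=\mu_R(A)-\mu_B(A)$ is exactly the sum of cell weights inside $A$, and once with negated weights for $\Phi^-$, then report whichever of the two optima has the larger value of $|\Phi|$. For each run I build the dyadic tree over the rows of $G$ and $\eta$-compress every slab via Lemma~\ref{lem:compression} in $O(r^2)$ time; the crucial point of the error budget is that any grid-aligned rectangle meets only $2\log_2 r$ slabs of its dyadic decomposition, so taking $\eta=\Theta(\eps/\log r)=\Theta(1/(r\log r))$ the compression adds only $O(\eta s\log r)=O(\eps s)$ error to each of $\mu_R$ and $\mu_B$, hence $O(\eps)$ to $\Phi$. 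The optimum is then computed by the recurrence $\c{T}_1(r)=2\c{T}_1(r/2)+\c{T}_2(r)$ in which the two non-spanning cases recurse on the already-compressed half-trees and the spanning case is precisely the Strip-constrained grid search problem of size $r$; Lemma~\ref{lem:rect-strip} bounds it by $\c{T}_2(r)=O(r/\eta)=O(r^2\log r)$, whence $\c{T}_1(r)=O(r^2\log r)=O(\frac1{\eps^2}\log\frac1\eps)$, which dominates the $O(r^2)$ compression cost.

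Assembling these, one run costs $O(\frac1{\eps^2}\log\frac1\eps)$ after the one-time $O(m)$ read of $X$, and it is correct (both $S$ is a good sample and the returned range is $\eps$-accurate) with at least a fixed constant probability. To reach confidence $1-\delta$ I would repeat the subsample-and-scan $O(\log\frac1\delta)$ times with independent subsamples — each fresh subsample costs only $O(s)$ given $X$ in an array — and return the candidate whose $\Phi$-value is the median of the $O(\log\frac1\delta)$ outputs; a Chernoff bound on the number of successful runs upgrades the constant success probability to $1-\delta$. The total running time is $O(m+\frac1{\eps^2}\log\frac1\eps\log\frac1\delta)$, and rescaling the several $O(\eps)$ error terms by a constant absorbed into $\eps$ yields exactly the claimed bound. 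The only genuinely delicate steps are (i) checking that the per-slab compression error telescopes to $O(\eps)$ over the dyadic decomposition, so that $\eta$ may be taken as large as $\Theta(\eps/\log r)$, and (ii) confirming that the spanning-case recursion meets the hypotheses of Lemma~\ref{lem:rect-strip} with the single global $\eta$; both are delivered by the lemmas above, so the theorem itself is mostly a matter of assembly.
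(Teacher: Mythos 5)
Your proposal is correct and follows essentially the same route as the paper: build an $O(1/\eps)\times O(1/\eps)$ grid $\eps$-cover from a random $\eps$-sample via Lemma~\ref{lem:grid-cover}, $\eta$-compress the dyadic row tree with $\eta=\Theta(\eps/\log r)$ via Lemma~\ref{lem:compression}, solve the spanning case with Lemma~\ref{lem:rect-strip} inside the recurrence $\c{T}_1(r)=2\c{T}_1(r/2)+\c{T}_2(r)$, handle $|\cdot|$ by running $\Phi^+$ and $\Phi^-$ separately, and amplify to $1-\delta$ by taking the median over $O(\log\frac1\delta)$ independent repetitions. All the pieces are assembled correctly and the error and time accounting matches the paper's argument.
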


In the Appendix \ref{app:loglog-linear}, we reduce this time to $O(m + \frac{1}{\eps^2} \log \log \frac{1}{\eps} \log \frac{1}{\delta})$.  For $(X,\c{R}_d)$ and $d$ constant, the runtime increases to $O(m + \frac{1}{\eps^{2d-2}} + \frac{1}{\eps^2} \log \log \frac{1}{\eps} \log \frac{1}{\delta})$.

\subparagraph*{Conditional lower bound.}
\label{sec:rect-LB}

Backurs \etal~\cite{Backurs16} recently showed $\Omega(m^2)$ time is required to solve for $A^* = \arg \max_{A \in (X, \c{R}_2)} \Phi(A)$, assuming that all pairs shortest path (APSP) requires cubic time.  We can show this implies that our algorithm is nearly tight.  If we set $\eps = 1/4m$ then if any algorithm could find an $\hat A_\eps$ such that $\Phi(\hat A_\eps) \geq \Phi(A^*) - \eps$, then it would imply that $|\mu_R(A^*) - \mu_B(A^*)| - |\mu_R(\hat A) - \mu_B(\hat A)| \leq \eps$.  And hence the difference in counts of points in each pair $\mu_R$ and $\mu_B$ is off by at most $2\eps m = 2(1/4m)m = 1/2$.  Thus it must be the optimal solution.  If this can run in $o(m + 1/\eps^2)$ time, it implies an $o(m^2)$ algorithm, which implies a subcubic algorithm for APSP, which is believed impossible. 

\begin{theorem}
\label{thm:rect-LB}
For $(X,\c{R}_2)$ with $|X| =m$, and $A^* = \arg\max_{A \in \c{R}_2} \Phi(A)$.  It takes $\Omega(m + \frac{1}{\eps^2})$ time to find a range $\hat A_\eps$ so that 
$
|\Phi(A^*) - \Phi(\hat A_\eps)| \leq \eps,
$
assuming APSP takes $\Omega(n^{3})$ time.  
\end{theorem}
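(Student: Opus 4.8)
The plan is a padding-and-rescaling reduction from the \emph{exact} maximum-discrepancy rectangle problem, for which Backurs \etal~\cite{Backurs16} established the conditional lower bound $\Omega(m^2)$ under the hypothesis that APSP requires $\Omega(n^3)$ time. The $\Omega(m)$ term is the standard input-size bound (any algorithm must inspect $\Omega(m)$ of the points in the worst case, otherwise an adversary hides the optimal rectangle among the untouched points), so the substance is the $\Omega(1/\eps^2)$ term. I would obtain it by running a hypothetical approximation algorithm with $\eps$ chosen, as a function of $m$, small enough that an $\eps$-approximation is forced to coincide with an exact optimum; an $o(m+1/\eps^2)$-time algorithm then becomes an $o(m^2)$-time exact algorithm, which is impossible under the assumption.

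Concretely, I would take the hard instances of \cite{Backurs16} as a red/blue point set with \emph{balanced} classes $|R| = |B| = n \le m$ (if their construction is not already in this form, one balances it by inserting a bounded number of far-away dummy points placed so that no near-optimal rectangle can contain any of them). Then $\Phi(A) = |\mu_R(A) - \mu_B(A)| = \tfrac1n\,\big| |R\cap A| - |B\cap A| \big|$ is an integer multiple of $1/n$, so any two rectangles with distinct $\Phi$-values differ by at least $1/n \ge 1/m$. Setting $\eps = 1/(4m)$, so that $\eps < 1/n$, any $\hat A_\eps$ with $\Phi(\hat A_\eps) \ge \Phi(A^*) - \eps$ has a $\Phi$-value within less than $1/n$ of the maximum, hence equal to it: $\Phi(\hat A_\eps) = \Phi(A^*)$, i.e. the approximate answer is exactly optimal. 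An algorithm computing $\hat A_\eps$ in time $o\!\left(m + 1/\eps^2\right) = o(m + 16m^2) = o(m^2)$ would therefore solve the exact problem in $o(m^2)$ time, contradicting \cite{Backurs16} (equivalently, yielding a subcubic algorithm for APSP, contrary to assumption). Since this exhibits, for a family of instances with $1/\eps^2 = \Theta(m^2)$, a requirement of $\Omega(1/\eps^2)$ time — and combining with the separate $\Omega(m)$ input bound — we conclude $\Omega(m + 1/\eps^2)$.

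The one place I expect to need care is matching conventions rather than any deep step: the granularity argument above is tight only because $\Phi$ is normalized by $|R| = |B| = \Theta(m)$, so I must verify that the construction of \cite{Backurs16} really can be placed in the normalized two-colour model with balanced classes. For unbalanced classes the gap between distinct $\Phi$-values degrades to $\Omega\!\left(1/(|R|\,|B|)\right) = \Omega(1/m^2)$, which would only force exactness at $\eps = \Theta(1/m^2)$ and hence weaken the conclusion to the much weaker $\Omega(1/\sqrt{\eps})$; making the reduction output balanced classes (or checking that it already does) is what keeps the bound at $\Omega(1/\eps^2)$. The remaining items are routine: confirming the dummy-point padding changes the instance size by only a constant factor and preserves the identity of the optimal rectangle, and observing that the one-sided functions $\Phi^+,\Phi^-$ require no separate treatment here, since for a fixed rectangle $\Phi(A) = \max\{\Phi^+(A),\Phi^-(A)\} = |\Phi^+(A)| \in \tfrac1n\mathbb{Z}$ already.
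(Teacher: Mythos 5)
Your proposal takes the same route as the paper: set $\eps = \Theta(1/m)$ so that the granularity of $\Phi$ forces any $\eps$-approximate rectangle to coincide with the exact optimum, then invoke the $\Omega(m^2)$ conditional lower bound of Backurs~\etal to rule out an $o(m+1/\eps^2)$ algorithm. In fact you are \emph{more} careful than the paper: the paper's one-line argument that ``the difference in counts \ldots is off by at most $2\eps m = 1/2$'' tacitly treats $\mu_R$ and $\mu_B$ as if they share a common denominator of order $m$, which is exactly the balanced-classes condition you flag. You are right that this needs checking in the source instances (or enforcing by padding), since for coprime $|R|,|B|$ the spacing of $\Phi$-values can shrink to $\Theta(1/(|R||B|))$. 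One small slip in your side remark: if $\eps$ must be taken as small as $\Theta(1/m^2)$, the inherited bound is $\Omega(m^2) = \Omega(1/\eps)$, not $\Omega(1/\sqrt{\eps})$; this does not affect your main argument, which correctly recovers $\Omega(m + 1/\eps^2)$ once the instances are balanced.
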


\section{Statistical Discrepancy Function Approximation}
\label{sec:fxn-apx}

In this section we address approximating $\max_{A \in (X, \c{A})} \Phi(A)$ when it is a more general function of $\mu_R(A)$, and $\mu_B(A)$.  Rewrite $\Phi(A) = \phi(\mu_R(A), \mu_B(A))$, and in this section it will be more convenient to discuss $\phi(r,b)$ where $r = \mu_R(A)$ and $b= \mu_B(A)$.  

We say $\phi$ is \emph{$(\tau, \gamma)$-linear} if it can be represented with up to $\eps$-error as the upper envelope of $\gamma$ functions of slope at most $\tau$.  We can then simply maximize each function individually, and return the maximum overall score.  When $\gamma$ and $\tau$ are constant (as with $\phi(r,b)=|r-b|$), we simply say the function is \emph{linear}. 

First observe that Theorem \ref{theorem:general-range}, Theorem \ref{thm:halfspace}, and Theorem \ref{thm:rect-gen} simply evaluate $\Phi(A)$, so if this can be done in constant time, and the slope $\tau$ is constant, then these results automatically hold.  
However, Theorem \ref{thm:rect-linear} requires the linearity property.  

For the spatial scan statistic application, the most common function~\cite{Kul97} is defined
$
\phi_K(r,b) = r \ln \frac{r}{b} + (1-r) \ln \frac{1-r}{1-b}, 
$
and is non-linear.  We define a more general class of \emph{statistical discrepancy functions} (\textsc{sdf}), which includes $\phi_K$.  Such $\phi$ have domain $r,b \in [0,1]$, $\phi(r,b) = 0$ when $r=b$ and this is its minimum, and $\phi(r,b)$ is convex on $(0,1)^2$.  
Moreover, for these functions, it suffices too consider a range $[\xi,1-\xi]^2$ for small constant $\xi$ (c.f. \cite{APV06,AMPVZ06,SSSS}), and
that in this range $\phi$ is $\tau$-Lipschitz where $\tau$ is a constant depending only $\xi$.

Agarwal \etal~\cite{APV06} approximated such functions by considering $O(\frac{1}{\eps} \log \frac{1}{\eps})$ linear functions, each tangent to $\phi$, so their upper envelope $\tilde \phi$  satisfied $\max_{(r,b) \in [\xi, 1-\xi]^2} |\phi(r,b) - \tilde \phi(r,b)| \leq \eps$.  

We will construct an approximation of $\phi$ with linear functions with a very different approach.  Unlike the previous approach which only considers the function $\phi$, our approach adapts the set of linear functions to the function $\phi$ \emph{and data} $(X, \c{A})$.  It uses $O(1/\sqrt{\eps})$ linear functions.

\subparagraph*{Function approximation.}  

Consider the distinct ranges in $(X, \c{A})$; each range $A$ corresponds to a point $p_A = (\mu_R(A), \mu_B(A))$.  Let $P = \{p_A \mid A \in (X, \c{A})\}$ be this set of points.  Then $p_{A^*}$, must lie on $\mathsf{CH}(P)$, the convex hull of $P$, where $A^* = \arg \max_{A \in (X, \c{A})} \Phi(A)$.

Moreover, each point $p$ on $\mathsf{CH}(P)$ maximizes some linear function, $f(r,b) = \alpha r + \beta b$. If $p = \arg \max_{p' \in P} f(r_p, b_p)$, then it also maximizes $f_c(r,b) = (\alpha/c) r + (\beta/c) b$ for any $c > 0$.  We can therefore restrict our attention (by implicit choice of $c$) to only functions with $\alpha^2 + \beta^2 = 1$.  These functions correspond to a dot product $\langle (\alpha, \beta), (r,b) \rangle$ and are maximized by points on $\mathsf{CH}(P)$ where $(\alpha, \beta)$ is between two adjacent normals on the boundary of $\mathsf{CH}(P)$.  

To further simplify, we now parameterize these functions by an angle $\theta = \arccos(-\alpha)$ (where still $\alpha^2 + \beta^2 =1$).  
We focus on $\theta \in [0,\pi/2]$ as we can always repeat the procedure on the other 3 quadrants.

Now let $f^*_\theta $ be any linear function such that $p_{A^*} = \arg \max_{p \in P} f^*_\theta(p)$ is maximized by the point $p_{A^*}$ corresponding to the optimal range $A^*$.

\begin{lemma}
\label{lem:triangle}
Consider $p_1 = \arg \max_{p \in P} f_{\theta_1}(p)$ and 
$p_2 = \arg \max_{p \in P} f_{\theta_2}(p)$ so that 
$p_{A^*} = \arg \max_{p \in P} f^*_\theta(p)$ and 
$\theta_1 \leq \theta \leq \theta_2$.  
Then 
$
\phi(p_{A^*}) \leq \max\{ \phi(p_i), \phi(p_j) \} + \tau \cdot \frac{\|p_1 - p_2\|}{2} \tan(\frac{\theta_2 - \theta_1}{2}).
$
\end{lemma}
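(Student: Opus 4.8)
The plan is to work entirely in the $(r,b)$-plane and exploit that $p_{A^*}$ lies on $\mathsf{CH}(P)$ between the two maximizing points $p_1$ and $p_2$. First I would observe that since $f_{\theta_1}$ is maximized at $p_1$ and $f_{\theta_2}$ is maximized at $p_2$, and since the optimal direction $\theta$ lies between $\theta_1$ and $\theta_2$, the point $p_{A^*}$ must lie inside the triangle formed by $p_1$, $p_2$, and the intersection point $q$ of the two supporting lines $\{f_{\theta_1} = f_{\theta_1}(p_1)\}$ and $\{f_{\theta_2} = f_{\theta_2}(p_2)\}$. This is a standard convex-hull sandwiching fact: the hull near the arc from $p_1$ to $p_2$ is contained in the wedge cut out by the two supporting halfplanes whose inward normals have angles $\theta_1$ and $\theta_2$, and on the segment side it is bounded by the chord $\overline{p_1 p_2}$; so $p_{A^*}$ sits in the triangle $\triangle(p_1, p_2, q)$.

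Next I would bound $\phi(p_{A^*})$ using convexity of $\phi$ (part of the \textsc{sdf} hypotheses) together with the $\tau$-Lipschitz property on $[\xi,1-\xi]^2$. Since $\phi$ is convex, its maximum over the triangle $\triangle(p_1, p_2, q)$ is attained at a vertex, so $\phi(p_{A^*}) \le \max\{\phi(p_1), \phi(p_2), \phi(q)\}$. The remaining work is to control $\phi(q)$ relative to $\max\{\phi(p_1),\phi(p_2)\}$: write $q$ in terms of its distance to the chord $\overline{p_1 p_2}$ and apply the Lipschitz bound, $\phi(q) \le \phi(\pi(q)) + \tau\,\mathrm{dist}(q,\overline{p_1 p_2})$, where $\pi(q)$ is the foot of the perpendicular from $q$ onto the chord. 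Since $\pi(q)$ lies on the segment $\overline{p_1 p_2}$ and $\phi$ restricted to that segment is convex, $\phi(\pi(q)) \le \max\{\phi(p_1),\phi(p_2)\}$, which closes the gap; thus $\phi(p_{A^*}) \le \max\{\phi(p_1),\phi(p_2)\} + \tau \cdot \mathrm{dist}(q, \overline{p_1 p_2})$.

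Finally I would carry out the elementary trigonometry giving $\mathrm{dist}(q, \overline{p_1 p_2}) \le \frac{\|p_1 - p_2\|}{2}\tan\!\big(\frac{\theta_2 - \theta_1}{2}\big)$. The angle of the wedge at $q$ between the two supporting lines is exactly $\theta_2 - \theta_1$ (the supporting lines are perpendicular to the respective normal directions). The worst case for the height of a triangle with a fixed base $\|p_1 - p_2\|$ and a fixed apex angle $\theta_2 - \theta_1$ is the isoceles configuration, where the apex height is $\frac{\|p_1-p_2\|}{2}\cot\!\big(\frac{\theta_2-\theta_1}{2}\big)$ — wait, this needs care: the relevant bound is on the distance from $q$ to the chord, and by the inscribed-angle / circle-geometry argument the apex $q$ of a triangle with base $\|p_1-p_2\|$ subtending exterior angle $\theta_2-\theta_1$ lies on a circular arc, and the maximal perpendicular distance from such $q$ to the base is $\frac{\|p_1-p_2\|}{2}\tan\!\big(\frac{\theta_2-\theta_1}{2}\big)$ (achieved at the symmetric position). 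Substituting this into the previous inequality yields the claim.

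The main obstacle I expect is the geometric sandwiching step: making rigorous that $p_{A^*}$ lies in $\triangle(p_1,p_2,q)$ and not merely "near" the arc — one must argue that every hull vertex whose outward-normal cone contains direction $\theta$ lies in the intersection of the two supporting halfplanes through $p_1$ and $p_2$, and that this intersection, cut by the chord, is precisely that triangle. Everything after that (convex-maximum-at-vertices, Lipschitz, the $\tan$ estimate) is routine, though one should double-check that all the points $p_1, p_2, q, \pi(q)$ stay within the domain $[\xi,1-\xi]^2$ where the Lipschitz constant $\tau$ is valid, which follows from the standard restriction of the search to that box (c.f. \cite{APV06,AMPVZ06,SSSS}).
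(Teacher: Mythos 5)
Your proposal is correct and uses the same triangle construction as the paper (vertices $p_1$, $p_2$, and the apex $q$ at the intersection of the two supporting lines, with $p_{A^*}$ trapped inside), and the same trigonometric height bound. The one step where you genuinely diverge from the paper is in relating $\phi(p_{A^*})$ to the height: the paper argues via the iso-curve of $\phi$ at level $\phi(p_{A^*})$ passing above $p_1$ and $p_2$ and coming within distance $h$ of one of them, whereas you use the convex-maximum-at-extreme-points principle to get $\phi(p_{A^*}) \le \max\{\phi(p_1),\phi(p_2),\phi(q)\}$ and then control $\phi(q)$ by the $\tau$-Lipschitz bound at the perpendicular foot $\pi(q)$, which is itself dominated by $\max\{\phi(p_1),\phi(p_2)\}$ by convexity along the chord. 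Your chain is more explicit and arguably easier to verify than the paper's iso-curve argument; it also makes transparent exactly where convexity and the Lipschitz constant are being used. Two small points worth tightening if you wrote this up: (i) you need $\pi(q)$ to lie on the \emph{segment} $\overline{p_1 p_2}$ and not its extension, which holds because the base angles $\angle_1, \angle_2$ each are at most $\theta_2-\theta_1 \le \pi/2$ and hence acute; (ii) the apex angle at $q$ is $\pi - (\theta_2-\theta_1)$ (the base angles sum to $\theta_2-\theta_1$), so your phrase ``subtending exterior angle $\theta_2-\theta_1$'' is the right intuition but should be stated as such. The paper instead proves the height bound by writing $h = \|p_1 - q_3\|\tan\angle_1 = \|p_2 - q_3\|\tan\angle_2$ and maximizing at $\angle_1 = \angle_2 = (\theta_2-\theta_1)/2$; either route gives $h \le \tfrac{1}{2}\|p_1 - p_2\|\tan\bigl(\tfrac{\theta_2-\theta_1}{2}\bigr)$. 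Your flagged concern about $q$ staying inside $[\xi,1-\xi]^2$ is a legitimate but minor technicality that the paper's own proof also implicitly relies on.
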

\begin{proof}
Define a triangle through points $p_1$, $p_2$, and a point $p_3$.  The point $p_3$ is defined at the intersections of the normals to $f_{\theta_1}$ at $p_1$ and to $f_{\theta_2}$ at $p_2$.  
We refer to ``above'' in the normal direction of the edge between $p_1$ and $p_2$, and in the direction of $p_3$.  

First we show that $p_{A^*}$ must be inside the triangle.  If it is above the edge connecting $p_1$ and $p_3$, then it would be $\arg \max_{p \in P} f_{\theta_1}(p)$.  Similarly it cannot be above the edge connecting $p_2$ and $p_3$.  Also, it must be above the edge connecting $p_1$ and $p_2$, since otherwise by convexity $\max(\phi(p_1), \phi(p_2)) > \phi(p_{A^*})$ and one of $p_1$ or $p_2$ would maximize $f^*_\theta$.

We say the height of the triangle $h$ is defined as the distance from $p_3$ to $q_3$, where $q_3$ is the closest point on the edge through $p_1$ and $p_2$.

Let $\angle_1$ be the internal triangle angle at $p_1$, and $\angle_2$ at $p_2$.  Then $(\theta_2 -\theta_1) = \angle_1 + \angle_2$. 
Now $h = ||p_1 - q_3|| \tan(\angle_1) = ||p_2 - q_3|| \tan (\angle_2)$ which, fixing $\|p_1 - p_2\|$, is maximized when $\angle_1 = \angle_2 = \frac{(\theta_2 - \theta_1)}{2}.$
 Summing $h \le ||p_1 - q_3|| \tan((\theta_2 - \theta_1) / 2)$ and $h \le ||p_2 - q_3|| \tan((\theta_2 - \theta_1) / 2)$ it can 
 be seen that $h \le \frac{1}{2}(||p_1 - q_3|| + ||p_2 - q_3||) \tan((\theta_2 - \theta_1) / 2)=\frac{1}{2}(||p_1 - p_2||) \tan((\theta_2 - \theta_1) / 2)$.
Finally, we argue that $\min\{\phi(p_{A^*}) - \phi(p_1), \phi(p_{A^*}) - \phi(p_2)\} \leq \tau \cdot h$.  
Let $\gamma$ be the iso-curve of $\phi$ at value $\phi(p_{A^*})$.  It must pass above $p_1$ and $p_2$, otherwise they would be the maximum.  It also must pass within a distance of $h$ from either $p_1$ or $p_2$ since $\gamma$ is convex, it contains $p_{A^*}$, and $p_{A^*}$ is within $h$ of the edge between $p_1$ and $p_2$.  
Then the lemma follows since $\phi$ is $\tau$-Lipschitz.  
\end{proof}

\begin{wrapfigure}{r}{0.35\textwidth}
	\vspace{-.2in}
	\centering
	\includegraphics[width=0.33\textwidth]{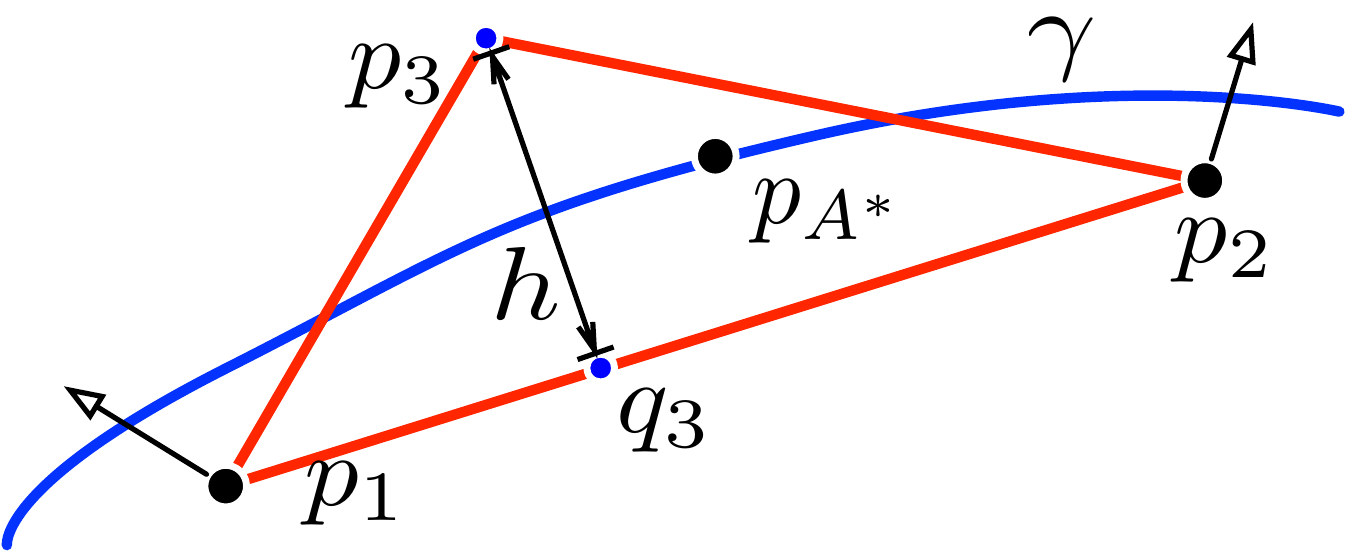}
	\vspace{-.08in}
	\caption{For Lemma \ref{lem:triangle}.}
	\vspace{-.22in}
\end{wrapfigure}
To choose a set of linear functions
we start with two linear functions $f_0$ and $f_{\pi/2}$, whose maximum in $P$ are points $p_1$ and $p'_1$.  These induce a triangle as in the proof of Lemma \ref{lem:triangle}, and $p_{A^*}$ must be in this triangle.  If its height $h = \frac{\|p_1 - p'_1\|}{2} \tan(\frac{\pi}{4}) > \eps/\tau$, then we choose a new function $f_{\pi/4}$ (at the midpoint of the two angles) whose maximum is point $p_2$.  Now recurse on triangles defined by $p_1$ and $p_2$, and by $p_2$ and $p_1'$.  

\begin{lemma}
\label{lem:num-fxn}
The recursive algorithm considers at most $\sqrt{\tau/\eps}$ functions to maximize.  
\end{lemma}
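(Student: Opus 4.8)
The goal is to bound the number of recursive calls, equivalently the number of tangent-line angles $\theta$ considered. The recursion generates a binary tree of subintervals of $[0,\pi/2]$: at each node we have an interval $[\theta_1,\theta_2]$ with associated hull points $p_1,p_2$, and we recurse only if the triangle height $h = \frac{\|p_1-p_2\|}{2}\tan(\frac{\theta_2-\theta_1}{2}) > \eps/\tau$. The key observation is that a node is \emph{expanded} (its two children created, hence two new points and one new function) only when $\|p_1 - p_2\| \cdot \tan(\frac{\theta_2 - \theta_1}{2}) > 2\eps/\tau$. I want to charge the work to the perimeter of the convex hull restricted to this quadrant.

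First I would set up the accounting: each time we split an interval $[\theta_1,\theta_2]$ at its midpoint, we introduce one new function $f_{(\theta_1+\theta_2)/2}$ and one new hull point $p_{\text{mid}}$, and the chord $p_1p_2$ is replaced by two chords $p_1 p_{\text{mid}}$ and $p_{\text{mid}} p_2$. So it suffices to bound the number of \emph{internal} nodes of the recursion tree. Since $\theta \in [0,\pi/2]$, at depth $d$ an interval has angular width $(\pi/2)/2^d$, so $\tan(\frac{\theta_2-\theta_1}{2}) \le \tan(\frac{\pi/4}{2^d}) \le \frac{\pi/4}{2^d} \cdot \frac{4}{\pi} \cdot \text{(const)}$; more simply, for the expansion condition to hold we need $\|p_1 - p_2\| > \frac{2\eps/\tau}{\tan((\theta_2-\theta_1)/2)} \ge \frac{2\eps/\tau}{(\theta_2-\theta_1)/2 \cdot c}$ for a small constant absorbing the $\tan$ bound — i.e. $\|p_1 - p_2\| \cdot (\theta_2 - \theta_1) \gtrsim \eps/\tau$.

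Now the main step. Consider all internal nodes at a fixed depth $d$: there are at most $2^d$ of them, their angular intervals partition $[0,\pi/2]$, and their chords $\{p_1 p_2\}$ are edges of a polygonal path inscribed in the hull boundary over this quadrant, so $\sum \|p_1 - p_2\| \le L$ where $L = O(1)$ is the total length of the hull boundary in the quadrant (the hull lies in $[0,1]^2$). For each internal node at depth $d$, the expansion condition gives $\|p_1 - p_2\| \ge c\,\eps/(\tau (\theta_2-\theta_1)) = c\,\eps 2^d/(\tau \cdot \pi/2)$. If $k_d$ is the number of internal nodes at depth $d$, summing the lower bound over them and comparing with $\sum\|p_1-p_2\| \le L$ gives $k_d \cdot \frac{c\eps 2^d}{\tau} \le L$, i.e. $k_d \le \frac{L\tau}{c\eps 2^d}$. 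But also $k_d \le 2^d$. The total number of internal nodes is $\sum_d k_d \le \sum_d \min\{2^d, \frac{L\tau}{c\eps 2^d}\}$, and this geometric-type sum is dominated by the term where $2^d \approx \sqrt{L\tau/(c\eps)}$, giving $O(\sqrt{\tau/\eps})$ (absorbing the constant $L$). Hence the number of functions considered is $O(\sqrt{\tau/\eps})$, and tightening constants yields the stated bound $\sqrt{\tau/\eps}$.

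The main obstacle is making the two competing bounds on $k_d$ — the combinatorial bound $k_d \le 2^d$ and the geometric bound $k_d \le L\tau/(c\eps 2^d)$ — interact cleanly, and in particular justifying that the chords at a fixed depth really do have total length bounded by an absolute constant. This follows because the hull points appear in sorted angular order around $\mathsf{CH}(P)$, so the chords form a non-crossing inscribed path whose length is at most the hull perimeter, which is $O(1)$ since $P \subset [0,1]^2$. Once that is pinned down, the $\min$-sum is routine. I would also need to handle the two "boundary" functions $f_0$ and $f_{\pi/2}$ as $O(1)$ additive overhead, and note the whole argument is repeated a constant number of times (once per quadrant), preserving the asymptotic count.
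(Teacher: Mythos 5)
Your proof is correct and takes a genuinely different route from the paper's. You count the internal (split) nodes of the recursion tree level by level: at depth $d$ you combine the combinatorial bound $k_d \le 2^d$ with the geometric bound $k_d = O(\tau/(\eps 2^d))$ (from the expansion condition plus the $O(1)$ bound on the total chord length at any single depth) and observe that $\sum_d \min\{2^d, O(\tau/(\eps 2^d))\}$ is dominated by the crossover depth $2^d \approx \sqrt{\tau/\eps}$. The paper instead argues in one shot over the final consecutive pairs $(p_i,p_{i+1})$: using $\sum_i \ell_i \le 2$ and $\sum_i \gamma_i = \pi/2$ together with the per-triangle inequality $\ell_i \ge \frac{\eps\pi}{\tau\gamma_i}$, it derives $\sum_i 1/\gamma_i \le 2\tau/(\eps\pi)$ and closes with an extremal argument showing this forces $k \le \sqrt{\tau/\eps}$. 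The paper's version is tighter --- it yields the literal constant in $\sqrt{\tau/\eps}$ rather than a big-$O$ --- but the per-triangle lower bound it invokes is the negation of the stopping rule (the leaf triangles satisfy $\frac{\ell_i}{2}\tan(\gamma_i/2) \le \eps/\tau$, not $\ge$), so that inequality properly belongs to the split parent nodes, whose angles do not partition $[0,\pi/2]$. Your depth-stratified charging makes the distinction between leaves and split nodes explicit and is therefore easier to verify, at the cost of an unspecified constant factor.
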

\begin{proof}
Index the points found by the algorithm $\{p_1, p_2, \ldots, p_{k+1}\}$ in the order they appear on the convex hull.  Each consecutive pair $p_i$ and $p_{i+1}$ defines a triangle with height at most $\eps/\tau$.  Let $\ell_i = \|p_i - p_{i+1}\|$ and $\gamma_i = \theta_{i+1} - \theta_i$ where the $p_i$ and $p_{i+1}$ where chosen by maximizing functions $f_{\theta_i}$ and $f_{\theta_{i+1}}$, respectively.  
It follows that $\sum_{i=1}^k \ell_i \leq 2$ and $\sum_{i=1}^k \gamma_i = \pi/2$.  We also have for each triangle that 
$\frac{\eps}{\tau} 
\leq 
\frac{\ell_i}{2} \tan (\frac{\gamma_i}{2}) 
\leq 
\frac{\ell_i}{2} \cdot \frac{2 \gamma_i}{\pi}$.  
Thus for each term we have $\ell_i \geq \frac{\eps \pi}{\tau} \frac{1}{\gamma_i}$, and summing over $k$ terms
$\sum_{i=1}^k \frac{\eps \pi}{\tau} \frac{1}{\gamma_i} \leq \sum_{i=1}^k \ell_i \leq 2$.
Now in the inequality $\frac{2 \tau}{\eps \pi} \geq \sum_{i=1}^k \frac{1}{\gamma_i}$ such that $\sum_{i=1}^k \gamma_i = \pi/2$, then $k$ is the largest when all of the $\gamma_i$ have the same value $\gamma_i = \frac{\pi}{2k}$.  In this case, then 
$\frac{2 \tau}{\eps \pi} 
\geq 
\sum_{i=1}^k \frac{1}{\gamma_i} 
= 
\sum_{i=1}^k \frac{2k}{\pi}
= 
k^2 \frac{2}{\pi}$.  
Solving for $k$ reveals $k \leq \sqrt{\eps/\tau}$.  
\end{proof}

Now we analyze the full algorithm for maximizing a statistical discrepancy function over $(X, \c{R}_d)$ with $\tau$ and $d$ as constants.  We first invoke Lemma \ref{lem:grid-cover} to construct the grid in $O(m + \frac{1}{\eps^2} \log \frac{1}{\eps}\log \frac{1}{\delta} + \frac{1}{\eps^d})$ time.  
We then use Theorem \ref{thm:rect-linear} in $F = O(\frac{1}{\eps^{2d-2}} \log \frac{1}{\eps} )$ time to find the approximate maximum range for any linear function $\Phi'$.  

Then we run the above recursive triangle algorithm repeatedly on the constructed grid, and each function maximization takes $F$ time.  By Lemma \ref{lem:num-fxn} we need to make $O(\sqrt{1/ \eps})$ calls.  And by Lemma \ref{lem:triangle} one of the function calls must find an approximately correct answer.  

\begin{theorem}
\label{thm:rect-stats}
Consider a range space $(X,\c{R}_d)$ with $|X| =m$ and $d$ constant.  For a statistical discrepancy function $\Phi$ with $\tau$ constant and with maximum range $A^* = \arg\max_{A \in \c{R}_d} \Phi(A)$,  then with probability at least $1-\delta$, in time $O(m + \frac{1}{\eps^{2d -1.5}}\log \frac{1}{\eps} + \frac{1}{\eps^2} \log\frac{1}{\eps }\log \frac{1}{\delta})$, we can find a range $\hat A_\eps$ so that 
$
|\Phi(A^*) - \Phi(\hat A_\eps)| \leq \eps.
$
\end{theorem}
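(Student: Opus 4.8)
The plan is to combine the three ingredients already assembled: the grid construction (Lemma \ref{lem:grid-cover}), the linear-function solver on the grid (Theorem \ref{thm:rect-linear}), and the adaptive triangle scheme (Lemmas \ref{lem:triangle} and \ref{lem:num-fxn}). First I would note that for a statistical discrepancy function it suffices to work on the clipped domain $[\xi,1-\xi]^2$, where $\phi$ is convex and $\tau$-Lipschitz with $\tau$ a constant; so we may assume $\tau$ is constant throughout. Next, since maximizing $\Phi$ over $(X,\c{R}_d)$ exactly equals maximizing $\phi(\mu_R(A),\mu_B(A))$ over the point set $P=\{p_A\}$, and since $p_{A^*}$ lies on $\mathsf{CH}(P)$ and hence maximizes some linear $f^*_\theta$, the adaptive scheme's guarantee applies: after running the recursion, every candidate $p_{A^*}$ falls in some triangle of height at most $\eps/\tau$, so by Lemma \ref{lem:triangle} the best of the two generating vertices of that triangle has $\phi$-value within $\tau \cdot (\eps/\tau) = \eps$ of $\phi(p_{A^*})$.

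The accounting then runs as follows. By Lemma \ref{lem:num-fxn} (noting the bound there is $k = O(\sqrt{\tau/\eps})$, i.e.\ $O(1/\sqrt{\eps})$ for constant $\tau$), the recursion makes $O(1/\sqrt{\eps})$ calls to the linear-function maximizer, and this must be repeated on each of the four quadrants of directions, which only changes constants. Each such call is a single invocation of the decomposable-function algorithm of Theorem \ref{thm:rect-linear} run on the precomputed grid; in $d$ dimensions that costs $F = O(\frac{1}{\eps^{2d-2}} \log \frac{1}{\eps})$ by enumerating over pairs of grid cells for the first $d-2$ axes and applying the 2d recurrence on the last two. One subtlety to address: the grid is built once, up front, in $O(m + \frac{1}{\eps^2}\log\frac{1}{\eps}\log\frac{1}{\delta} + \frac{1}{\eps^d})$ time (using an $(\eps/4d)$-sample $S$ so the resulting cover is an $\eps$-cover of the original range space, and amplifying the sample's success probability to $1-\delta$), and all $O(1/\sqrt{\eps})$ function maximizations reuse it, since only the cell weights — not the grid geometry — change with the linear function $\Phi'=\Phi^+$ or $\Phi^-$. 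Summing: grid construction plus $O(1/\sqrt{\eps}) \cdot F = O(\frac{1}{\eps^{2d-1.5}}\log\frac{1}{\eps})$ gives the claimed $O(m + \frac{1}{\eps^{2d-1.5}}\log\frac{1}{\eps} + \frac{1}{\eps^2}\log\frac{1}{\eps}\log\frac{1}{\delta})$.

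I would also double-check the correctness chain end to end: the grid cover guarantees that the optimal rectangle $A^*$ is within $\eps m$ symmetric difference of some grid-aligned rectangle, which perturbs both $\mu_R$ and $\mu_B$ by $O(\eps)$ and hence $\phi$ by $O(\eps\tau) = O(\eps)$; the linear solver finds the exact maximum of $\Phi'$ over grid-aligned rectangles (with the $\eta$-compression contributing only another $O(\eps)$ additive slack on the grid counts, already folded into Theorem \ref{thm:rect-linear}); and the triangle scheme contributes a final $O(\eps)$. Rescaling $\eps$ by a constant at the outset absorbs all these. The step I expect to require the most care in writing cleanly is not any single inequality but the bookkeeping that the height bound from Lemma \ref{lem:triangle} interacts correctly with the \emph{approximate} maxima returned by Theorem \ref{thm:rect-linear} rather than exact maxima of $f_\theta$ over $P$ — i.e.\ making sure that using an $\eps$-approximate argmax at each angle still keeps $p_{A^*}$ inside (an $\eps$-fattened version of) the triangle, so the height-to-error conversion survives. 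This is handled by the standard trick of folding the approximation slack into the Lipschitz term and adjusting constants, but it is the one place where the argument is more than routine.
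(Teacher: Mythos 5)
Your proof mirrors the paper's argument exactly: build the grid once via Lemma \ref{lem:grid-cover}, use Theorem \ref{thm:rect-linear} as the per-angle linear maximizer at cost $F = O(\frac{1}{\eps^{2d-2}}\log\frac{1}{\eps})$, and wrap it in the adaptive triangle scheme (Lemmas \ref{lem:triangle} and \ref{lem:num-fxn}) to bound the number of angles by $O(1/\sqrt{\eps})$, giving the stated total. The extra care you flag --- that each call returns only an $\eps$-approximate rather than exact maximizer of $f_\theta$, so the containing triangle must be fattened by a Lipschitz-controlled amount --- is a genuine subtlety the paper does not spell out, and your proposed fix (fold the slack into the Lipschitz term and rescale $\eps$ by a constant) is the correct way to handle it.
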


\section{Experiments on Rectangles}
\label{sec:exp}
We implemented $5$ rectangle scanning algorithms.  
For baselines, we consider 
(1) Scanning all rectangles without sampling (based on common software for disks~\cite{Kul7.0}) (\SatScan (no sampling)), 
(2) Scanning all rectangles on one random sample~\cite{AMPVZ06} (\SatScan), and
(3) Scanning all rectangles on two random samples $N$ and $S$~\cite{SSSS} (\netScan).  
Then we compare our algorithms which first round to a grid then 
(4) Efficiently enumerate the grid rectangles (\gridScan, Theorem \ref{thm:rect-gen}), or 
(5) Evaluate the maximum grid rectangle in $O(r^3)$ time~\cite{Bentley84} for a linear $\phi$ (\gridScanLin, Section \ref{subsection:linear-algorithm}) and using the linearization for non-linear $\phi$ (Section \ref{sec:fxn-apx}).  
This is the core operation within spatial scan statistics; it is typically run $1000$ times to detect a region \emph{and} determine significance~\cite{Kul97}, therefore scalability of this operation is paramount.  
Solutions with approximate $\phi$ within $\eps$-error retain high statistical power~\cite{SSSS}, so it will be useful to directly compare the runtime performance of these algorithms which allow approximation.

First, fixing a tolerable error at $1\%$ of $\phi(A^*)$, we run each algorithm on $m=1000$ points, for a planted range with $5\%$ of the data, and use $\phi$ as the Kuldorff scan statistic~\cite{Kul97}.  The results are in Table \ref{tbl:exact_results}.  All sampling methods drastically improve over the brute force approach, and using two-level sampling significantly improves over one random sample.  Our method (\gridScanLin) improves over the previous best (\netScan) by a factor of about $3.5$.

\begin{table}[h!]
	\vspace{-.05in}
	\begin{tabular}{ c | c  c   c  c  c }
		\hline
		& \SatScan (no sampling) & \SatScan & \netScan & \gridScan & \gridScanLin 
		\\ 
		 Time (sec) & 5287 & 7.44  & .0279 & .0194 & .0082 \\
		\hline
\end{tabular} 

\caption{\label{tbl:exact_results}
\s{Runtimes on $1000$ points with $1\%$ error, over $20$ trials; roughly $n=19$ and $s=350$.}  }
\vspace{-.08in}
\end{table}
We also compare the time-accuracy trade-off for sampling-based algorithms on $m = 1$ \emph{million} points.  \SatScan without sampling is not tractable at this scale, so is not compared.  
We again plant a random rectangle $A$ overlapping $1\%$ of the data.  Within $A$, points are made red (measured value $1$) at rate $0.08$, and outside at rate $0.01$.  The runtime includes the time to construct the grid, but not time to generate the initial sample -- common to all algorithms.  We calculate $\Phi(A^*) - \Phi(\hat A)$ for the planted $A^*$ and found $\hat A$ regions, using a linear $\phi(m,b) = \frac{1}{\sqrt{2}}(m-b)$ function and the non-linear Kuldorff~\cite{Kul97} $\phi$ function.   
Figure \ref{fig:discr-plot} 
shows a kernel regression trend line (with $1$ std-dev error bars) for $300$ trials with various $n,s$ values, always maintaining $n \approx \sqrt{s}$ as suggested the samping theorems.  
Again \gridScanLin is much faster than \gridScan, which is slightly faster than \netScan, which is significantly faster than \SatScan.  The improvement is more pronounced in the non-linear setting where $\phi$ is steeper; this is perhaps surprisingly even true for \gridScanLin which has an extra $\sqrt{1/\eps}$-factor in runtime in that case due to the multiple linear functions considered.  

\begin{figure}[h!]
\vspace{-.1in}
\includegraphics[width=0.5\textwidth]{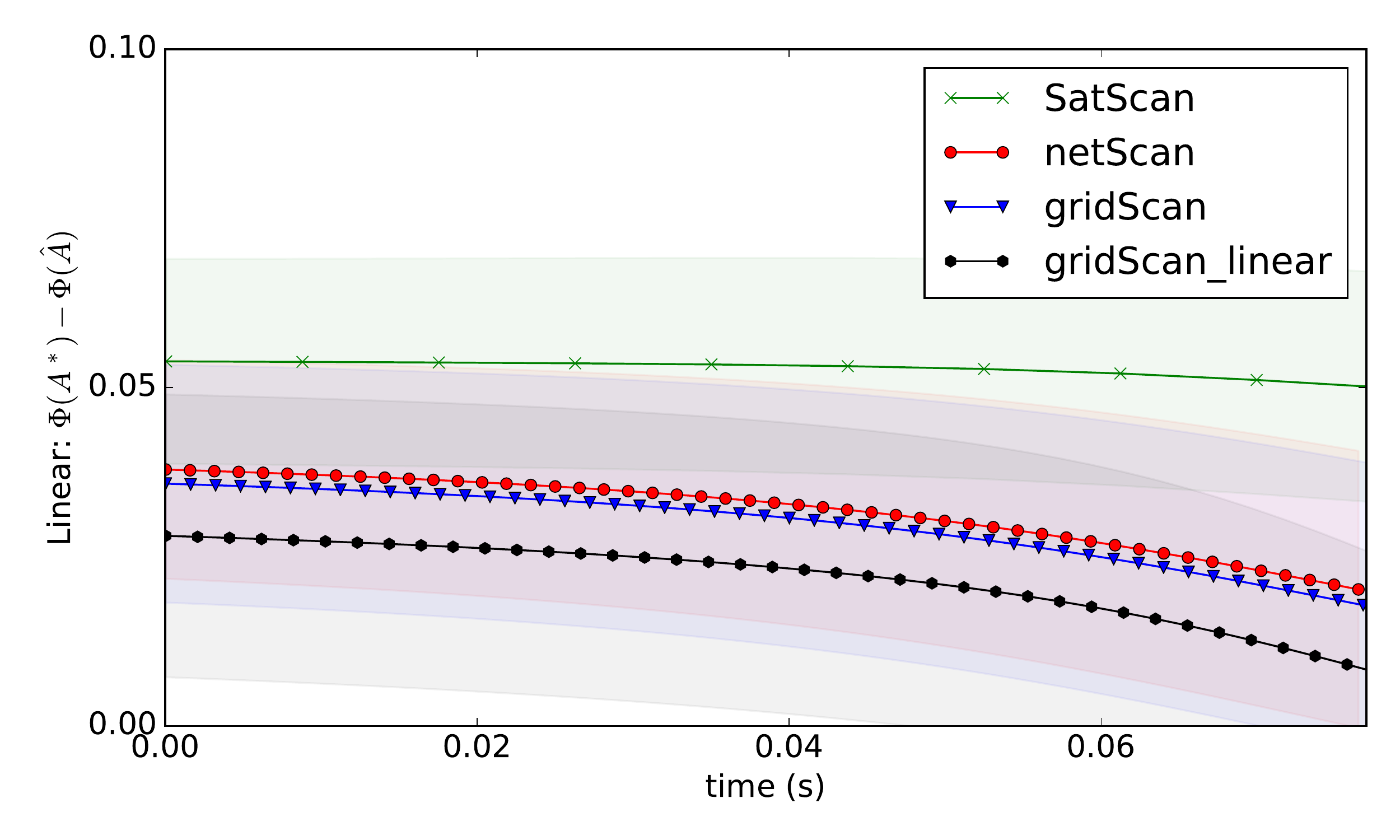}
\includegraphics[width=0.5\textwidth]{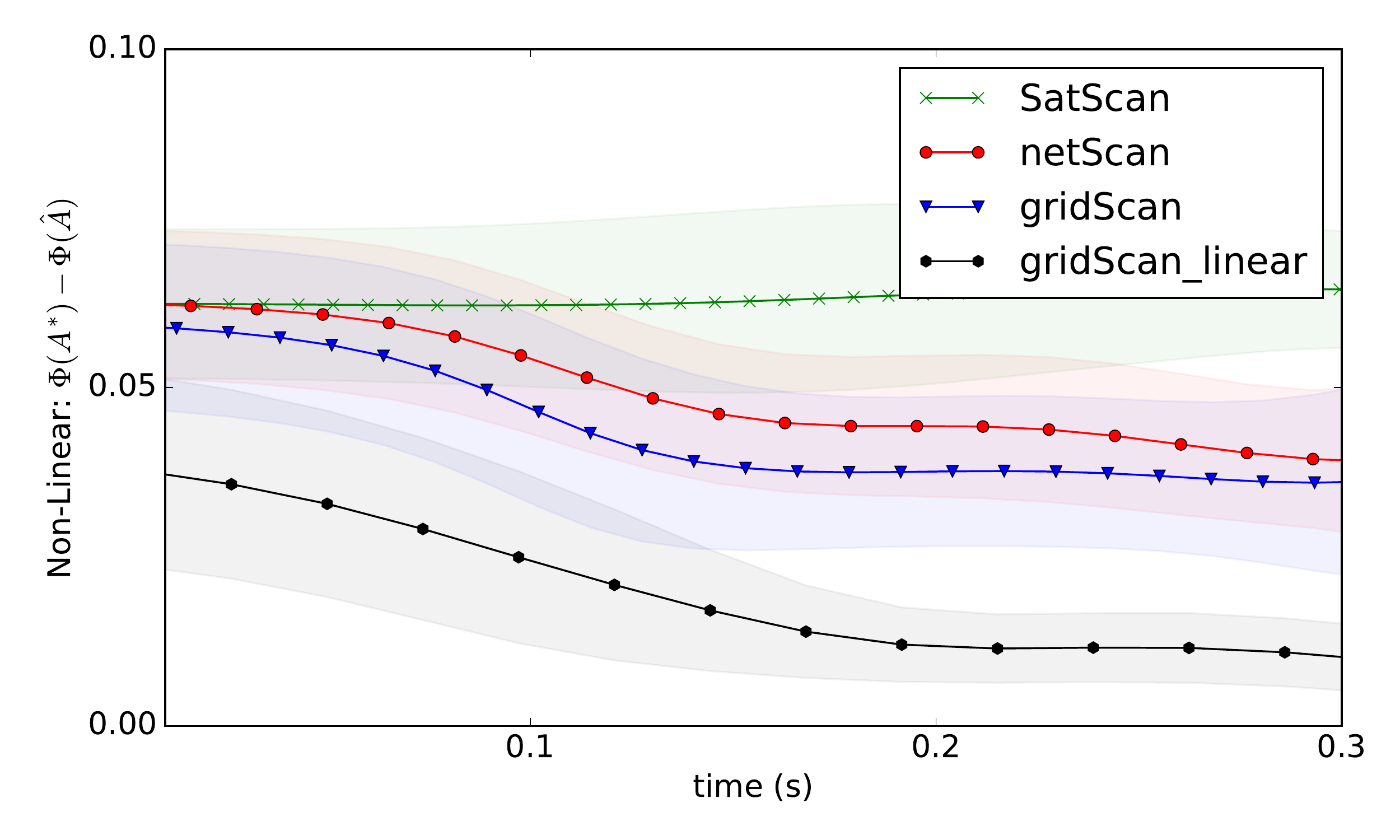}

\vspace{-.2in}
	\caption{\s{Trend of time versus error for on linear (left) and non-linear (right) functions.}}
	\label{fig:discr-plot}
	\vspace{-.15in}
\end{figure}

Ultimately, these plots show that \emph{discrete geometric approaches providing asymptotically efficient algorithms also give significant empirical improvements}, even compared to the ubiquitous and simple random sampling approaches.

\bibliographystyle{plainurl}
\bibliography{bib-disc}
\normalsize

\newpage

\appendix

\section{Removing the $\log \frac{1}{\eps}$ in Linear Rectangle Construction} 
\label{app:loglog-linear}
In this section we make three small modifications to the linear rectangle algorithm to reduce the $\log \frac{1}{\eps}$ dependency to a  $\log \log \frac{1}{\eps}$ factor for the $2$-dimensional case (in higher dimensions other factors dominate). 
These steps are:
(1) a fast construction of a slightly smaller $\eps$-approximation,
(2) constructing a separate grid for each slab spanning rectangle problem instance,
(3) a method to maintain active and inactive columns with respect to a single separating line.
These changes modify the structure of the rectangle scanning algorithm. In Section \ref{sec:rect} the structure of the algorithm is: sample, process the sample into a single grid, construct a single compressed slab tree, and then evaluate many slab spanning rectangle problems on 
this tree. In this section the structure changes to: compute a sample, process the sample into a dyadic set of partitions, define a grid on each partition, construct a compressed slab tree for each grid, and then evaluate each slab spanning rectangle problem on 
its own tree.

\subsection{(P1) Smaller Sample Construction}
\label{app:ssc}

In Lemma \ref{lem:grid-cover} the construction of an $\eps$-cover on an input of size $s$ takes 
$O(s \log s + \frac{1}{\eps^d})$ time. This construction creates a bottleneck when $s = O(\frac{1}{\eps^2})$ as in standard random sampling. Instead
we will take a larger random sample, $S'$ of size $s'$ (specified in proof), and use this to generate a smaller sample, $S$ of size $s = O(\frac{1}{\eps^2} / \log^2 \frac{1}{\eps})$.

\begin{lemma}
	\label{lemma:grid-sample}
	The construction of an $\eps$-approximation for $(X, \c{R}_d)$ of size $O(\frac{1}{\eps^2 \log^2 \frac{1}{\eps}})$  can be done in $O(m + \frac{1}{\eps^2} \log \log \frac{1}{\eps} \log \frac{1}{\delta})$ time for constant $d > 1$.  
\end{lemma}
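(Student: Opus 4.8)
The goal is to build an $\eps$-approximation $S$ for $(X,\c{R}_d)$ of size $s = O\!\left(\tfrac{1}{\eps^2 \log^2 \tfrac{1}{\eps}}\right)$ in $O(m + \tfrac{1}{\eps^2}\log\log\tfrac{1}{\eps}\log\tfrac{1}{\delta})$ time, so that the later grid construction (which costs $O(s\log s + \tfrac{1}{\eps^d})$) is no longer the bottleneck. The approach is the classical two-stage (merge-reduce / random-sample-then-refine) strategy: first draw a \emph{large} random sample $S'$ from $X$ which is, with probability $\geq 1-\delta$, an $\eps'$-approximation for some $\eps' \ll \eps$; then apply a deterministic $\eps$-approximation construction to $S'$ to squeeze it down to the target size, paying only for $|S'|$ rather than for $m$. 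The total error is $\eps' + (\text{refinement error})$, which we balance to be $\leq \eps$.

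\textbf{Key steps, in order.} First I would fix the parameters: take $S'$ a uniform random sample of $X$ of size $s' = O\!\left(\tfrac{1}{\eps^2}\cdot \tfrac{1}{\log^2\tfrac{1}{\eps}} \cdot \log\tfrac{1}{\delta} \cdot (\text{something})\right)$ — more precisely, pick $s'$ just large enough that $S'$ is a $(\eps/2)$-approximation for $(X,\c{R}_d)$ with probability $\geq 1-\delta$; by the VC-sampling bound (cited in Section~\ref{sec:review}, since $(X,\c{R}_d)$ has VC-dimension $2d$), this needs $s' = O\!\left(\tfrac{1}{\eps^2}\log\tfrac{1}{\delta}\right)$ — but I actually want $s'$ \emph{smaller} than $1/\eps^2$, which is where the logarithmic slack in the target size is spent: because I only need the \emph{final} sample $S$ to have error $\eps$ and size $\tfrac{1}{\eps^2\log^2(1/\eps)}$, I can afford $S'$ to be a $\tfrac{1}{\log(1/\eps)}$-scale approximation relative to its own size. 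Concretely set $s' = O\!\left(\tfrac{1}{\eps^2 \log^2 \tfrac1\eps}\cdot \log\tfrac{1}{\eps} \cdot \log\tfrac{1}{\delta}\right) = O\!\left(\tfrac{1}{\eps^2 \log \tfrac1\eps}\log\tfrac1\delta\right)$; drawing it takes $O(m)$ time (one pass, reservoir sampling) plus $O(s')$, which is dominated. Second, apply a deterministic low-VC $\eps''$-approximation algorithm (e.g. iterated halving / merge-reduce for bounded VC dimension) to $S'$ to produce $S$ of size $O(1/\eps''^2 \cdot \text{polylog})$; choosing $\eps''$ appropriately gives $|S| = O\!\left(\tfrac{1}{\eps^2\log^2\tfrac1\eps}\right)$ and the construction runs in time near-linear in $|S'|$, i.e. $O(s' \cdot \text{polylog}) = O\!\left(\tfrac{1}{\eps^2}\log\log\tfrac1\eps\log\tfrac1\delta\right)$ after the bookkeeping is tuned. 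Third, compose the error bounds: $S$ approximates $S'$ within $\eps''$ and $S'$ approximates $X$ within $\eps/2$, so $S$ is an $\eps$-approximation of $X$ provided $\eps'' \leq \eps/2$ — and one checks that the size/time targets are consistent with this inequality. Finally, union-bound the one randomized step (drawing $S'$) to get overall success probability $\geq 1-\delta$.

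\textbf{Main obstacle.} The delicate part is the accounting that makes the two halves fit: I need $S'$ small enough that refining it is cheap ($\lesssim \tfrac{1}{\eps^2}\log\log\tfrac1\eps$ rather than $\tfrac{1}{\eps^2}\log\tfrac1\eps$), yet the act of shrinking $S'$ down to the target size $\tfrac{1}{\eps^2\log^2(1/\eps)}$ must not blow the error past $\eps$. This is exactly the standard tension in merge-reduce $\eps$-approximation constructions, and resolving it requires choosing the intermediate accuracy and the recursion depth of the deterministic reduction so that the extra $\log\tfrac1\eps$ factors cancel against the $\log^2\tfrac1\eps$ we have banked in the allowed output size. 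I would cite the known deterministic bounded-VC $\eps$-approximation merge-reduce machinery (the same lineage as \cite{SSSS} and the $\eps$-net/$\eps$-sample references already in the paper) for the per-level cost, and devote the bulk of the write-up to verifying the parameter balance and the probability union bound; the geometric content is entirely in the already-cited VC bound for rectangles, so nothing new is needed there.
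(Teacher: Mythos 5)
Your high-level framing (two-stage sampling: random sample first, then refine to a sub-$1/\eps^2$ size so the later grid step is cheap) matches the paper's intent, but the execution has two genuine problems.

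First, you talk yourself into making the intermediate sample $S'$ \emph{smaller} than $1/\eps^2$ (you land on $s' = O\bigl(\frac{1}{\eps^2 \log(1/\eps)}\log\frac{1}{\delta}\bigr)$). That is a mistake: a random sample of that size is only an $O\bigl(\eps\sqrt{\log\frac{1}{\eps}}\bigr)$-approximation of $X$, not an $\eps/2$-approximation, so the error composition $\eps' + \eps'' \le \eps$ fails. The paper keeps $s' = O\bigl(\frac{1}{\eps^2}\log\frac{1}{\delta}\bigr)$; the $\log\log$ factor in the time bound does not come from shrinking $S'$ but from being able to \emph{process} $S'$ in $O(s'\log\log\frac{1}{\eps})$ time rather than $O(s'\log s')$.

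Second, the step that actually needs work --- producing $S$ from $S'$ in $O(s'\log\log\frac{1}{\eps})$ time --- is handled in your write-up by invoking a black-box ``deterministic iterated halving / merge-reduce'' and asserting the polylog factors tune out to $\log\log$. They do not: standard merge-reduce for bounded VC dimension contributes $\poly(\log\frac{1}{\eps})$ overhead per element. The paper instead uses a construction specific to axis-aligned rectangles: it builds a shallow $k$d-tree on $S'$ with $z = \log^{cd}\frac{1}{\eps}$ leaves (so depth $O(\log\log\frac{1}{\eps})$, giving the $\log\log$ term), then randomly subsamples $p$ points per leaf. The error analysis hinges on the geometric fact that any axis-aligned rectangle crosses only $O(z^{1-1/d})$ of the $z$ cells; a Hoeffding bound over the crossed cells plus a union bound over rectangles on $S'$ (which needs $\delta = o(\eps^{2d})$) gives $|S| = O\bigl(\frac{1}{\eps^2\log^2(1/\eps)}\bigr)$. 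Your proposal contains none of this machinery --- no partition, no boundary-crossing bound, no per-cell concentration argument --- and without it the $\log\log$ runtime and the claimed output size are both unsupported.
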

\begin{proof}
	We first generate an $\frac{\eps_1}{2d}$-approximation $S'$ using random sampling (in time $O(m)$ and size of $s' = O(\frac{1}{\eps_1^2} \log \frac{1}{\delta})$) and then partition it into $z = \log^{cd} \frac{1}{\eps_2}$ parts by recursively halving at the median on successive coordinates. This is the same construction as used to construct $k$d-trees and it can be noted that a tree with $z=\log^{cd}\frac{1}{\eps_2}$ leaves, for constant $c$, will have height $(\log \log^{cd} \frac{1}{\eps_2}) = (\log (cd \log \frac{1}{\eps_2}))$ and the partitioning will take $O(s' \log \log \frac{1}{\eps_2})$ time. 
	
	For each partition we randomly sample $p$ points for a total of $pz=s$ points.
	Given $z$ partitions any rectangle will cross at most $O(z^{1 - \frac{1}{d}})$ partitions. 
	Fixing a single query rectangle we can apply a Hoeffding bound on the error contribution over the partitions crossed. 
	For each point in $S$ contained in one of these $O(z^{1-\frac{1}{d}})$ crossed partition we can associated with it a random variable $X_i$ where $X_i$ is $1$ if the point is inside the query rectangle and $0$ if not. Over the at most $k = p c' z^{1 - \frac{1}{d}}$ such points, for some constant $c'$, contained in the crossed partitions, we apply Hoeffding's inequality on $X = \sum_{i = 1}^{k} X_i$, with chosen bound $\delta$.  $\E[X]$ is the expected number of points in the crossed partitions which are contained in the rectangle.   
	\[
	\Pr(|X - \E[X]| > p z \eps_2) 
	\le 
	2 \exp \left (- \frac{2 (p z \eps_2)^2}{p c' z^{1 - \frac{1}{d}}}  \right ) 
	= 
	2 \exp \left ( -\frac{2 p z^{1 + \frac{1}{d}}  \eps_2^2}{c'}  \right ) 
	\le 
	\delta.  
	\]
	Rearranging this gives 
	$\frac{c'}{2\eps_2^2 \cdot z^{\frac{1}{d}}} \log \frac{2}{\delta} \le p z = s$.  
	Since $z = \log^{cd}\frac{1}{\eps_2}$ then 
	$\frac{c'}{2\eps_2^2 \cdot \log^c \frac{1}{\eps_2}} \log \frac{2}{\delta} \le  s$. 
	Setting $\delta = o(\eps_1^{2d})$ allows a union bound over all rectangular regions on $S'$ with $s = O(\frac{1}{\eps_2^2 }\frac{\log \frac{1}{\eps_1}}{\log^{c} \frac{1}{\eps_2}})$. Since $S$ is an $\eps_2$-sample over $S'$ by setting $\eps_1 = \eps_2 = \frac{\eps}{2}$ the additive error is $\eps_1 + \eps_2 = \eps$. For appropriate $c$ the size of $|S| = O(\frac{1}{\eps^2 \log^2 \frac{1}{\eps}})$. 
\end{proof}

\subsection{(P2) Parametrized $\eps$-Covers for Grids}

Next we reexamine the grid construction in Section \ref{sec:grid}, and parametrize it so that if we only want an $\eps$-cover on the set of rectangles that are constrained to some region $R$ where $|X \cap R| = |X|/2^i$ we only need $\ell_i = r/2^i = O(1/(\eps 2^i))$ cells on each side. Using Lemma \ref{lem:grid-cover} on $|X \cap R|$ with  $\eps 2^i$ we construct a grid $G$ in time $O(m/2^i \log m + \frac{1}{(\eps 2^i)^2} )$  which 
induces a $\eps 2^i$-cover $(X \cap R, {\c{R}_2}_{\mid G})$ on $(X \cap R, {\c{R}_2}_{\mid X \cap R})$. 
Since  ${\c{R}_2}_{\mid G}$ is an $\eps /2^i$-cover $(X \cap R, {\c{R}_2}_{\mid X \cap R})$ then for any $A' \in  {\c{R}_2}_{\mid X \cap R}$ there exists an $A \in {\c{R}_2}_{\mid G}$ such that $|A \triangle A'| \le 2^i \eps |X \cap R| = 2^i \eps |X|/2^i = \eps|X|$ and therefore ${\c{R}_2}_{\mid G}$ is an $\eps$-cover on $(X, {\c{R}_d}_{\mid X \cap R}))$.

Using Lemma \ref{lemma:grid-sample} we can construct an $S$ of size $s = O(\frac{1}{\eps^2} / \log^2 \frac{1}{\eps})$. Setting $m = s/ 2^i$ then we 
arrive at the runtime
$O((s/2^i) \log \frac{1}{\eps} + \frac{1}{\eps^2 4^i})$ and state the following result. 

\begin{lemma}
	\label{lemma:grid-cover3}
	Consider a point set $X$ of size $s$ with $s = O(\frac{1}{\eps^2} / \log^2 \frac{1}{\eps})$ with a sub-set of $X$ constrained to a region $R$,
	such that $|R \cap X| = s/2^i$.
	We can construct a grid $G$ in time $O(\frac{1}{\eps^2 2^i \log \frac{1}{\eps}})$, such that it induces an $\eps$-cover $(X, {\c{R}_2}_{\mid G})$ of $(X, {\c{R}_2}_{\mid R \cap X})$.  
\end{lemma}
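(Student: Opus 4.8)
The plan is to reduce the claim to Lemma~\ref{lem:grid-cover} applied to the sub-population $R \cap X$ with an error parameter scaled up by the factor $2^i$. First I would invoke Lemma~\ref{lem:grid-cover} on the ground set $R \cap X$, which has $|R \cap X| = s/2^i$ points, with error parameter $\eps' = \eps 2^i$. This yields a grid $G$ with $O(1/\eps') = O(1/(\eps 2^i))$ cells along each axis, built in time $O\big(|R\cap X|\log|R\cap X| + \tfrac{1}{(\eps')^2}\big) = O\big(\tfrac{s}{2^i}\log\tfrac{s}{2^i} + \tfrac{1}{\eps^2 4^i}\big)$, and it guarantees that $(R\cap X, {\c{R}_2}_{\mid G})$ is an $\eps'$-cover of $(R\cap X, {\c{R}_2}_{\mid R\cap X})$.

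The key step is transferring this cover guarantee from the local scale $|R\cap X|$ to the global scale $|X| = s$. Every range of $(X, {\c{R}_2}_{\mid R\cap X})$ has the form $A' = \varrho \cap (R\cap X)$ for a rectangle $\varrho$, so it is a subset of $R\cap X$; the $\eps'$-cover property supplies a grid-aligned rectangle whose restriction $A \in {\c{R}_2}_{\mid G}$ (also a subset of $R\cap X$) satisfies $|A \triangle A'| \le \eps'\, |R\cap X| = (\eps 2^i)(s/2^i) = \eps s = \eps|X|$. Hence the same grid $G$ induces an $\eps$-cover $(X, {\c{R}_2}_{\mid G})$ of $(X, {\c{R}_2}_{\mid R\cap X})$, which is exactly the statement.

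It remains to simplify the running time using $s = O(1/(\eps^2 \log^2 \tfrac1\eps))$. Then $\log \tfrac{s}{2^i} \le \log s = O(\log \tfrac1\eps)$, so the sorting/point-assignment term is $O\big(\tfrac{s}{2^i}\log\tfrac1\eps\big) = O\big(\tfrac{1}{\eps^2 2^i \log \tfrac1\eps}\big)$, matching the claimed bound; the $O(1/(\eps^2 4^i))$ term from allocating and prefix-summing the $O(1/(\eps 2^i))^2$ grid cells is of the same order in the regime in which this lemma is invoked (and otherwise can be charged to the outer per-slab budget where the lemma is used). I do not expect genuine difficulty here: the lemma is essentially a reparametrization of Lemma~\ref{lem:grid-cover}, and the only point requiring care is the scale translation in the middle paragraph — one must keep track that the cover is asserted over the ground set $X$ (not $R\cap X$) and over the restricted family ${\c{R}_2}_{\mid R\cap X}$, and observe that an $\eps 2^i$-relative error measured against the $s/2^i$ points of $R$ is precisely an $\eps$-relative error measured against all $s$ points of $X$.
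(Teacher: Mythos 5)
Your proposal is correct and matches the paper's argument: invoke Lemma~\ref{lem:grid-cover} on $R\cap X$ with error parameter $\eps 2^i$, rescale the symmetric-difference bound from $\eps 2^i|R\cap X|$ to $\eps|X|$, and simplify the runtime using $s=O(1/(\eps^2\log^2\tfrac1\eps))$. Your observation that the lemma's stated time drops the $O(1/(\eps^2 4^i))$ grid-allocation term (which dominates for small $i$) is astute and consistent with the paper, which carries that term explicitly when it sums $\c{T}_2$ in the final recurrence.
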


\subsection{(P3) Improving Tree and Slab Approximation}

Next we improve the tree and slab approximation parametrized to an $\ell \times \ell$ grid $G$ containing $\gamma$ points.
We will only consider grid-aligned rectangles $A = [x_1, x_2] \times [y_1, y_2]$ where $y_1$ is in the lower half of the grid, and $y_2$ is in the upper half of the grid. Call this a \emph{grid-spanning rectangle}. Let $y_m$ represent the $y$-coordinate of the middle coordinate of this
grid.  In this setting we will guarantee that we have at most $O(\eps \gamma)$ error in the tree on each half with a more clever way of constructing active columns.  We focus only on the lower half of the grid, the upper half is symmetric.  

\begin{wrapfigure}{R}{0.3\textwidth}
	\vspace{-.4in}
	\includegraphics[width=0.3\textwidth]{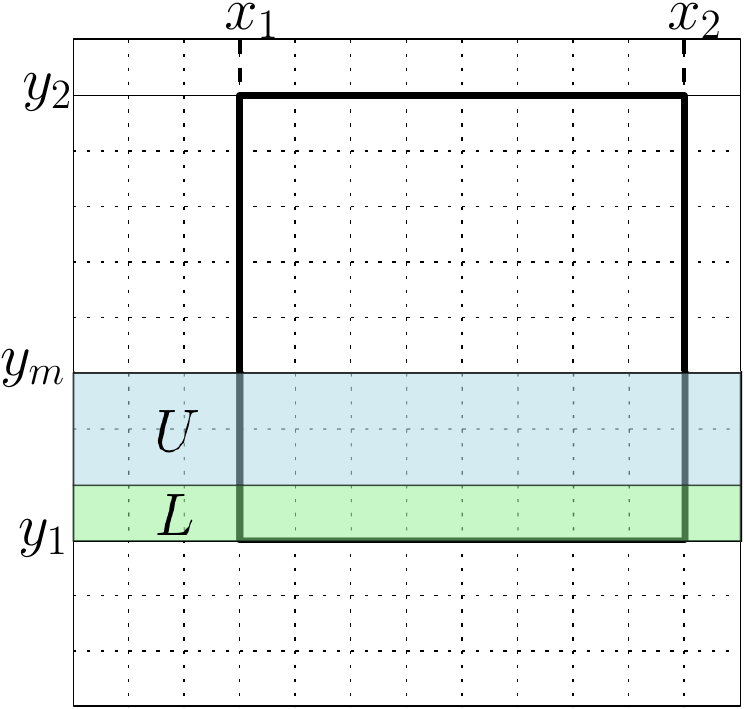}
	\vspace{-.4in}
	\label{fig:TreeSlabApp}
\end{wrapfigure}

We can decompose the $y$-coordinate range $[y_1, y_m]$ into at most $\log_2 r$ slabs. Recall, that before we would compress the weights in a slab by scanning cells from left to right within a slab, and only retaining the cumulative weight when it would exceeds $\eta \gamma$. This accumulates an error of $O(\eta \gamma \log r)$ error across these slabs, and therefore setting $\eta = O(\eps / \log r)$ gave $\eps \gamma$ error. We refer to this 
processes of scanning slabs and retaining the cumulative sums when they exceed $\eta \gamma$ as $\eta$-compression. By modifying this scheme to
compress with respect to already compressed slabs we can set $\eta = O(\eps)$ which will improve the runtime in Lemma \ref{lem:rect-strip}. 
Consider compressing $L$, the lowest most slab of $[y_1, y_m]$.  
Let $U$ represent $[y_1,y_m] \setminus L$. Note that $U$ could be the union of many disjoint slabs.
Let $w_{U,i}$ be the total weight in column $i$ of $G$ within $U$, and likewise $w_{L,i}$ be the total weights in column $i$ in $L$. Assume we have already compressed all slabs that make up $U$ into a compressed slab  $\hat{U}$.
For consistent notation $\hat w_{U,i}$ will be the compressed total weight in column $i$ of $G$ within $U$.
  
Now we want to approximate the weights $w_{L,i} + w_{U, i}$ with approximate weight $\hat w_{L,i} + \hat w_{U, i}$, so that the sum of any lower half rectangle has at most $\eta = \eps$ additive error. The key idea now is that we will apply $\eta$-compression on  $w_{L,i} + w_{U,i} - \hat w_{U,i}$ (all of which we will know inductively) to generate $\hat w_{L, i}$.  We denote each column (and $x$ value) within a slab where it has retained a non-zero value as \emph{active}, all other columns are \emph{inactive}. 

First denote the summed weight $\sum_j w_{U,j} = W_{U}$ (and likewise $W_L$, $W_U$, and $\hat W_U$ are the total summed weight over $w_{L,i}$,  $w_{U,i}$, and $\hat w_{U,i}$).  Then we first show that $W_L = \hat W_L$.  Since the compression conserves the cumulative sum of the compressed rows, $\hat W_{L} = W_L + W_U - \hat W_U$. Assume by induction that 
$W_U = \hat W_U$, the base case holds since the initially compressed slab in $U$ has weight equal to its compressed weight.  
Then by induction $\hat W_L = W_L + (W_U - \hat W_U) = W_L$, as desired.  
Now, since $\hat W_L = W_L$, we can use its original weight $W_L$ to bound the number of active columns.  In particular, a compressed slab $L$ at level $i$ in the tree will contain $W_L / (\eta \gamma) = 1 / ( 2^{i} \eps)$ active columns.

To analyze error note that we will over count on the left boundary of the rectangle and undercount on the right boundary of the rectangle, and therefore if we can bound the error along one side the total error (sum of the error on the left side plus the error on the right side) will be less than this. We therefore focus on a left sided rectangle $[x_1, \infty] \times [y_1, y_m]$, for \emph{any} choice of $x_1$.   Compressing a slab $L$ during $\eta$-compression involves computing a cumulative sum until it reaches some values equal to or less than $\eta \gamma$ at which point the cumulative count is set back to $0$ again.
Let $x'_1$ be the first index where this sum was reset to $0$ again before $x_1$ that lies in $L$.  The sum from $x_1'$ to the end of the row is 
$\sum_{k = x'_1}^{\ell} \hat w_{L,k} =  \sum_{k = x'_1}^{\ell} w_{L, k} + w_{U,k} - \hat w_{U,k}$ (since these are perfectly compressed) and $\sum_{k = x'_1}^{x_1-1} \hat w_{L, k} = 0$ (since there is no count, by definition).  Now we can consider the compression error on rectangle 
$[x_1, \infty] \times [y_1, y_m]$, written as error from compressing $w_L$ to $\hat w_L$ and from compressing all of $w_U$ to $\hat w_U$ as
\begin{align*}
\MoveEqLeft {\sum_{k = x_1}^{\ell}   (\hat w_{L,k} - w_{L,k}) + (\hat w_{U,k} - w_{U,k}) 
= 
\sum_{k = x_1}^{\ell}   \hat w_{L,k} - (w_{L,k} + w_{U,k} - \hat w_{U,k}) }  
\\ &=
(\sum_{k = x'_1}^{\ell}  \hat w_{L,k} - (w_{L,k} + w_{U,k} - \hat w_{U,k})) - \sum_{x'_1 \le k < x_1} \hat w_{L,k} - (w_{L,k} + w_{U,k} - \hat w_{U,k}) 
\\ &= 
0 + \sum_{x'_1 \le k < x_1} (w_{L,k} + w_{U,k} - \hat w_{U,k}) - \hat w_{L,k}
\\ &= 
\sum_{x'_1 \le k < x_1} (w_{L,k} + w_{U,k} -\hat w_{U,k} ) \le \eta \gamma.
\end{align*}

The additive error on $[x_1, x_2] \times [y_1, y_m]$ is therefore less than or equal to $\eta \gamma$.
The scanning to perform this compression takes $O(\ell)$ time for one slab.  And thus for all $ \sum_{i=1}^{\log \ell} 2^i = O(\ell)$ slabs this takes $O(\ell^2)$ time.

\begin{lemma} 
	\label{lem:compression2}
	For a size $\ell \times \ell$ grid $G$ containing $\gamma$ points, in $O(\ell^2)$ time, we can compress $G$ so a dyadic slab at level $i$ has at most $1 / ( \eta 2^i)$ active columns, and any grid-spanning rectangle has at most $O(\eta \gamma) = O(\eps \gamma)$ error.  
\end{lemma}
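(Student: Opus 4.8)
The plan is to establish the three assertions of the lemma — the $O(\ell^2)$ running time, the bound of $O(1/(\eta 2^i))$ active columns in a level-$i$ slab, and the $O(\eta\gamma)=O(\eps\gamma)$ error for any grid-spanning rectangle — by analyzing the modified compression that processes the slabs of a lower-half extent $[y_1,y_m]$ from the bottom upward, compressing each slab \emph{relative to} the already-compressed slabs below it. Concretely, when compressing the bottom-most dyadic slab $L$ of $[y_1,y_m]$, with $U=[y_1,y_m]\setminus L$ already compressed to column weights $\hat w_{U,k}$, I would define $\hat w_{L,k}$ by running $\eta$-compression on the corrected column weights $v_k = w_{L,k}+w_{U,k}-\hat w_{U,k}$, so the stored approximation of the extent is $\hat w_{L,k}+\hat w_{U,k}$. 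Each compression step is a single left-to-right scan over the columns of a slab, costing $O(\ell)$; since $\sum_i 2^i = O(\ell)$ over the $\log_2\ell$ levels there are $O(\ell)$ slabs to compress, which gives the $O(\ell^2)$ bound.

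The structural heart is an invariant proved by induction over the tree: compression preserves the total slab weight, $\hat W_L = W_L$ (equivalently $\hat W_S = W_S$ for the combined slab $S=L\cup U$). For a single-row leaf this holds because $\eta$-compression conserves its running sum. For the inductive step, $\eta$-compression of $v_k$ conserves $\sum_k v_k$, so $\hat W_L = W_L + W_U - \hat W_U$, and the inductive hypothesis $\hat W_U = W_U$ collapses this to $\hat W_L = W_L$. Because the grid rows are weight-balanced by construction (Lemma~\ref{lem:grid-cover}), a level-$i$ slab carries $W_S = O(\gamma/2^i)$ weight, and since $\eta$-compression creates an active column only each time the running sum surpasses $\eta\gamma$, it leaves at most $W_S/(\eta\gamma) = O(1/(\eta 2^i))$ active columns.

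For the error bound I would reduce to a one-sided count: since compression only shifts weight rightward, the error of a grid-spanning rectangle on its lower half $[x_1,x_2]\times[y_1,y_m]$ is controlled by the over-count on a left-sided region $[x_1,\infty]\times[y_1,y_m]$ for an arbitrary column index $x_1$ (the left-boundary and right-boundary contributions are bounded separately and do not reinforce). Writing the true column weights of the extent as $w_{L,k}+w_{U,k}$ and letting $x_1'$ be the last reset point of the $\eta$-compression of $L$ at or before $x_1$, the suffix-sum property of $\eta$-compression gives $\sum_{k\ge x_1'}\hat w_{L,k} = \sum_{k\ge x_1'} v_k$ while $\sum_{x_1'\le k<x_1}\hat w_{L,k}=0$, so the error $\sum_{k\ge x_1}(\hat w_{L,k}+\hat w_{U,k})-(w_{L,k}+w_{U,k}) = \sum_{k\ge x_1}\hat w_{L,k}-v_k$ telescopes to $\sum_{x_1'\le k<x_1} v_k \le \eta\gamma$. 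Combining the two halves of the grid yields total error $O(\eta\gamma)=O(\eps\gamma)$.

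I expect the error bound to be the main obstacle, specifically verifying that the errors telescope rather than accumulate. Naively the lower half splits into up to $\log_2\ell$ dyadic slabs, and the earlier scheme (Lemma~\ref{lem:compression}) pays $\eta\gamma$ per slab, forcing $\eta=O(\eps/\log\ell)$; the gain here is that compressing each slab against the already-compressed slabs below it makes those per-slab error terms cancel, leaving a single $\eta\gamma$ and permitting $\eta=O(\eps)$. Making this rigorous requires carefully tracking that the invariant $\hat W=W$ holds at every slab of every dyadic decomposition, that the suffix sums $\sum_{k\ge j} v_k$ remain nonnegative so the reset argument is valid even though an individual $v_k$ can be negative where $U$'s compression deposited mass, and that the telescoping identity holds uniformly over all choices of $x_1,x_2,y_1,y_2$ — so the bookkeeping, rather than any single inequality, is the delicate part.
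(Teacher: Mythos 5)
Your proposal reconstructs the paper's proof essentially exactly: the same weight-conservation invariant $\hat W_L = W_L$ established by induction on the compression order, the same active-column bound $W_L/(\eta\gamma) = O(1/(\eta 2^i))$, the same $O(\ell)$-per-slab scan summing to $O(\ell^2)$, and the same telescoping computation that collapses the left-boundary error of $[x_1,\infty]\times[y_1,y_m]$ to $\sum_{x_1'\le k < x_1} v_k \leq \eta\gamma$, which is precisely what lets $\eta = O(\eps)$ replace the earlier $O(\eps/\log r)$. (One prose slip: the slabs making up $U = [y_1,y_m]\setminus L$ lie \emph{above} $L$, nearer the mid-line $y_m$, so the processing order is top-down rather than ``from the bottom upward'' and the already-compressed slabs are above $L$, not below it --- but your concrete construction with $L$ the bottom-most slab and $U$ already compressed is stated correctly, so this is only a wording issue.)
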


Plugging in the new value $\eta = O(\eps) = O(1/\ell)$  into Lemma \ref{lem:rect-strip} we can restate the time to compute the maximum grid-spanning rectangle. 
\begin{lemma}
	\label{lem:rect-strip2}
	The approximate Strip-constrained grid search problem of size $\ell$ can be solved in time $O(\ell^2)$ such that the maximum rectangle can be found with $\eps \gamma$ error.
\end{lemma}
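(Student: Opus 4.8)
The plan is to re-run the divide-and-conquer argument behind Lemma~\ref{lem:rect-strip}, but to feed it the sharper compression of Lemma~\ref{lem:compression2} in place of the original $\eta$-compression.  Recall that the strip-constrained problem of size $\ell$ starts with an empty strip $M$, with $T$ the upper half of the $\ell\times\ell$ grid $G$ and $B$ the lower half, and all $\ell$ columns active.  By Lemma~\ref{lem:compression2}, taking $\eta = O(\eps) = O(1/\ell)$ we can in $O(\ell^2)$ time compress $G$ so that a dyadic slab at level $i$ keeps only $\ell_i = O(1/(\eta 2^i)) = O(\ell/2^i)$ active columns, while every grid-spanning rectangle is still evaluated with only $O(\eta\gamma) = O(\eps\gamma)$ additive error.

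First I would perform this compression.  Then I would recurse exactly as in the proof of Lemma~\ref{lem:rect-strip}: split the instance into the four subcases obtained from the sub-slabs $T_t, T_b$ of $T$ and $B_t, B_b$ of $B$, in each case forming the enlarged strip $M'$ by absorbing the compressed inner sub-slabs into $M$ and merging newly adjacent inactive intervals of $M'$ in $O(1)$ time apiece; the base case on single rows invokes the interval-merging routine of Barbay~\etal~\cite{Barbay2014}.  The cost of a recursive step is proportional to the number of active columns it touches, which at level $i$ is $O(\ell_i) = O(1/(\eta 2^i))$, and there are $4^i$ sub-instances at level $i$, so the work at level $i$ is $4^i\cdot O(1/(\eta 2^i)) = 2^i\cdot O(1/\eta)$.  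This geometric sum is dominated by its last term, so the total running time is $2^{\log_2\ell}\cdot O(1/\eta) = O(\ell/\eta) = O(\ell^2)$ since $\eta = O(1/\ell)$.  The accuracy is then inherited verbatim from Lemma~\ref{lem:compression2}: every grid-spanning rectangle considered is scored with at most $O(\eps\gamma)$ error, so the rectangle of largest scored weight that the procedure returns is within $O(\eps\gamma)$ of the true maximum grid-spanning rectangle.

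The one step that needs genuine care --- the rest being a mechanical re-accounting of Lemma~\ref{lem:rect-strip} with the smaller $\eta$ --- is verifying that the recursion stays correct when the active/inactive structure of $T$ and $B$ is the one produced by the ``compress relative to the already-compressed upper slabs'' scheme of (P3), rather than by independently $\eta$-compressing each slab.  Concretely, when a compressed sub-slab absorbed into $M'$ has a column that is active there but inactive in the outer sub-slabs, that column must re-enter as a fresh inactive interval of $M'$ carrying the correct column sum; I would check that these sums are exactly the consistent quantities $w_{\cdot,i} + w_{U,i} - \hat w_{U,i}$ tracked in the proof of Lemma~\ref{lem:compression2}, so that the telescoping identity $\hat W_L = W_L$ continues to bound the active columns of a level-$i$ slab by $O(1/(\eta 2^i))$ and each merge still costs $O(1)$.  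Once this bookkeeping is confirmed, plugging $\eta = O(\eps) = O(1/\ell)$ into the $O(\ell/\eta)$ bound of Lemma~\ref{lem:rect-strip} yields the claimed $O(\ell^2)$ time.
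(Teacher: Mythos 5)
Your proposal is correct and takes essentially the same route as the paper: the paper's proof is literally ``plug $\eta = O(\eps) = O(1/\ell)$ into Lemma~\ref{lem:rect-strip}'' to turn the $O(r/\eta)$ bound into $O(\ell^2)$, with the $\eps\gamma$ error coming from Lemma~\ref{lem:compression2}. Your extra paragraph verifying that the recursion's interval-merging bookkeeping remains valid under the ``compress relative to already-compressed upper slabs'' scheme of (P3) is a reasonable bit of diligence the paper leaves implicit, but it does not change the argument.
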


\subsection{Putting it all Together}

Now we start with a point set $X$ and using \textbf{(P1)} construct an $\eps$-sample $S$ of 
size $s = O(\frac{1}{\eps^2\log\frac{1}{\eps}})$ on it in $O(m + \frac{1}{\eps^2} \log \log \frac{1}{\eps} \log \frac{1}{\delta})$ time.  

On the full set, we call Lemma \ref{lemma:grid-cover3} with $i=0$ to create an $\eps$-cover, and Lemma \ref{lem:compression2} with $\ell = r = O(1/\eps)$.  Now we can call the Strip-constrained grid search on this compressed representation in $O(\ell^2) = O(r^2) = O(1/\eps^2)$ time, and if the maximal rectangle is grid-spanning, we will find a good approximation of it. If the rectangle is not grid-spanning then it must lie above or below the middle 
separating line, so we recurse on the upper and lower half of $S$.
This leads to a recurrence on the sample $S$ of size $s$ so the total time will be
$\c{T}_1(s) = 2\c{T}_1(s/2) + \c{T}_2(s)$.
Here the function $\c{T}_2(s)$ is the time needed for preprocessing and solving the grid-spanning rectangle problem. At depth $i$ in the recurrence we compute the grid (Lemma \ref{lemma:grid-cover3} with parameter $i$), and find the grid spanning rectangle (Lemma \ref{lem:rect-strip2} with $\ell= r / 2^i$), which implies that $\c{T}_2(s / 2^i) = O(\frac{1}{\eps^2 2^i \log \frac{1}{\eps}} + \frac{1}{\eps^2 4^i} + \frac{1}{\eps^2 4^i} )$.  Solving the recurrence,  $\c{T}_1(s)$, results in 
$\c{T}_1(s) = \sum_{i = 0}^{\log s}  O(\frac{1}{\eps^2 4^i} + \frac{1}{\eps^2 2^i\log \frac{1}{\eps}} ) = O(\frac{1}{\eps^2})$. The dominant term is then the preprocessing time to generate $S$ and the total algorithm runs in time $O(m + \frac{1}{\eps^2}\log \log \frac{1}{\eps} \log \frac{1}{\delta})$.

\begin{theorem}
	\label{thm:rect-linear2}
	Consider a range space $(X,\c{R}_2)$ with $|X| =m$.  For a linear function $\Phi$ with maximum range $A^* = \arg\max_{A \in \c{R}_2} \Phi(A)$, with probability at least $1-\delta$, in time $O(m + \frac{1}{\eps^2} \log \log \frac{1}{\eps}\log \frac{1}{\delta})$, we can find a range $\hat A_\eps$ so that 
	$
	|\Phi(A^*) - \Phi(\hat A_\eps)| \leq \eps.
	$
\end{theorem}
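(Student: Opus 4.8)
The plan is to bolt the three refinements \textbf{(P1)}--\textbf{(P3)} onto the divide-and-conquer skeleton of Section~\ref{subsection:linear-algorithm}. First I would replace $X$ by the $\eps$-sample $S$ of Lemma~\ref{lemma:grid-sample}, of size $s = O(\tfrac{1}{\eps^2}/\log^2\tfrac1\eps)$, built in $O(m + \tfrac{1}{\eps^2}\log\log\tfrac1\eps\log\tfrac1\delta)$ time; since $\Phi$ is linear (hence additively decomposable, with constant Lipschitz dependence on the red/blue fractions) it suffices to $O(\eps)$-approximate its maximum over grid-aligned rectangles on $S$, treating $\Phi^+$ and $\Phi^-$ separately. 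The recursion cuts the current point subset at its median $y$-coordinate: the optimal rectangle is either \emph{grid-spanning} (crosses the cut) or lies wholly above or below it, yielding $\c{T}_1(s) = 2\c{T}_1(s/2) + \c{T}_2(s)$, where $\c{T}_2$ is the cost of the spanning case. The only randomness sits in Lemma~\ref{lemma:grid-sample}, which already supplies the $\log\tfrac1\delta$ factor; everything downstream is deterministic.

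At recursion depth $i$ the subproblem has $s/2^i$ points confined to a region $R$. For it I would (a) invoke Lemma~\ref{lemma:grid-cover3} with parameter $i$ to build an $\ell_i\times\ell_i$ grid, $\ell_i = O(1/(\eps 2^i))$, that still induces an $\eps$-cover of the rectangles constrained to $R$, in $O(\tfrac{1}{\eps^2 2^i\log\frac1\eps} + \tfrac{1}{\eps^2 4^i})$ time; (b) compress its dyadic slab tree via Lemma~\ref{lem:compression2} in $O(\ell_i^2)$ time so that every grid-spanning rectangle incurs only $O(\eps\gamma)$ additive error and a level-$j$ slab keeps $O(1/(\eps 2^j))$ active columns; and (c) solve the strip-constrained grid search on the compressed tree via Lemma~\ref{lem:rect-strip2} in $O(\ell_i^2)$ time. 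This gives $\c{T}_2(s/2^i) = O(\tfrac{1}{\eps^2 2^i\log\frac1\eps} + \tfrac{1}{\eps^2 4^i})$, and unrolling the recurrence (using $\log s = O(\log\tfrac1\eps)$, with both summands geometrically decreasing in $i$),
\[
\c{T}_1(s) = \sum_{i=0}^{\log s} O\!\left(\frac{1}{\eps^2 4^i} + \frac{1}{\eps^2 2^i \log\frac{1}{\eps}}\right) = O\!\left(\frac{1}{\eps^2}\right).
\]
Hence the runtime is dominated by the construction of $S$, i.e. $O(m + \tfrac{1}{\eps^2}\log\log\tfrac1\eps\log\tfrac1\delta)$.

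For correctness I would add the three sources of slack: $S$ is an $\eps$-sample of $(X,\c{R}_2)$, each per-level grid is an $\eps$-cover of the relevant restricted rectangle family, and each compressed slab tree reports grid-spanning rectangle weights to within $O(\eps)$ in the red and blue fractions; because $\Phi$ is linear these errors combine additively, so the rectangle returned has $\Phi$ within $O(\eps)$ of $\Phi(A^*)$, which after rescaling $\eps$ by a constant is the claim. Assuming Lemmas~\ref{lemma:grid-sample}--\ref{lem:rect-strip2}, the main obstacle is the re-gridding bookkeeping: one must verify that rebuilding a fresh grid and a fresh compressed slab tree at \emph{every} node of the recursion (rather than once, as in Theorem~\ref{thm:rect-linear}) costs only the per-level $O(1/(\eps^2 2^i))$-type amount and so does not reintroduce a $\log\tfrac1\eps$ factor, and that the compress-against-already-compressed-slabs scheme of \textbf{(P3)} genuinely caps the cumulative slab error at $O(\eps\gamma)$ with $\eta=O(\eps)$ even though a query rectangle still crosses $\Theta(\log r)$ slabs. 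Indeed, with the weaker $\eta=O(\eps/\log r)$ the strip search costs $O(r^2\log r)$ and one is stuck at $O(m + \tfrac{1}{\eps^2}\log\tfrac1\eps)$, so that telescoping-error argument is the crux of the improvement.
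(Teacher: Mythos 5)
Your proposal follows the paper's own argument essentially line for line: apply (P1) to shrink the sample so that grid construction is not the bottleneck, rebuild an $\eps$-cover grid per recursion node via Lemma~\ref{lemma:grid-cover3}, compress against already-compressed slabs (P3) so $\eta=O(\eps)$ suffices, and unroll $\c{T}_1(s)=2\c{T}_1(s/2)+\c{T}_2(s)$ to get $O(1/\eps^2)$ total scanning time dominated by the $O(m + \frac{1}{\eps^2}\log\log\frac1\eps\log\frac1\delta)$ sample construction. You also correctly pin down the load-bearing step---the telescoping-error argument in (P3) that lets a query crossing $\Theta(\log r)$ slabs incur only $O(\eps\gamma)$ total error---which is exactly where the paper spends its care; this matches the intended proof.
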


Once $S$ and the dyadic decomposition of grids are built they can be reused for multiple linear function executions. The 
time to construct $S$ and the decomposition is $O(m + \frac{1}{\eps^2} \log \log \frac{1}{\eps} \log \frac{1}{\delta})$, the time to find the maximum rectangle for each linear function execution is $O(\frac{1}{\eps^2})$ and by Lemma \ref{lem:num-fxn} the number 
of these executions is $O(\frac{1}{\sqrt{\eps}})$.
\begin{theorem}
	\label{thm:rect-stats2}
	Consider a range space $(X,\c{R}_2)$ with $|X| =m$.  For a statistical discrepancy function $\Phi$ with $\tau$ constant and with maximum range $A^* = \arg\max_{A \in \c{R}_d} \Phi(A)$,  then with probability at least $1-\delta$, in time $O(m + \frac{1}{\eps^{2.5}} + \frac{1}{\eps^2} \log \log \frac{1}{\eps} \log \frac{1}{\delta})$, we can find a range $\hat A_\eps$ so that 
	$
	|\Phi(A^*) - \Phi(\hat A_\eps)| \leq \eps.
	$
\end{theorem}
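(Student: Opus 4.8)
The plan is to run the angle‑parametrized triangle algorithm of Section~\ref{sec:fxn-apx} exactly as in the proof of Theorem~\ref{thm:rect-stats}, but with the linear‑rectangle subroutine replaced by the faster construction of Appendix~\ref{app:loglog-linear}, crucially using that its $O(m + \frac{1}{\eps^2}\log\log\frac1\eps \log\frac1\delta)$ preprocessing—building the $\eps$‑sample $S$ and the dyadic family of compressed grids—is data‑dependent but \emph{function‑independent}, hence paid only once and amortized over all the linear calls. Concretely I would first restrict attention, as is standard for a statistical discrepancy function (c.f.~\cite{APV06,AMPVZ06,SSSS}), to $(r,b)\in[\xi,1-\xi]^2$ where $\phi$ is convex and $\tau$‑Lipschitz with $\tau$ constant, and handle $\Phi^+(A)=\mu_R(A)-\mu_B(A)$ and $\Phi^-(A)=\mu_B(A)-\mu_R(A)$ separately (returning the better answer), so that the ``linearity'' exploited by the rectangle subroutine holds on each. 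On the input $X$ I invoke Lemma~\ref{lemma:grid-sample} to build an $O(\eps)$‑sample $S$ of size $s=O(\frac{1}{\eps^2}/\log^2\frac1\eps)$ in time $O(m+\frac{1}{\eps^2}\log\log\frac1\eps\log\frac1\delta)$, then build on $S$ the dyadic decomposition of grids of Lemma~\ref{lemma:grid-cover3} together with the improved compression of Lemma~\ref{lem:compression2}; this is done once.

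Next the linear subroutine. For any fixed linear objective $f(r,b)=\alpha r+\beta b$ (equivalently any ``linear'' $\Phi'$), maximizing $f$ over the grid rectangles $(S,{\c{R}_2}_{\mid G})$ amounts to a maximum‑weight‑rectangle computation on the already‑built compressed dyadic structure: a new $(\alpha,\beta)$ only rescales the per‑cell weights, leaving the set of active columns and the tree shape unchanged. Hence, as worked out by the recurrence $\c{T}_1(s)=2\c{T}_1(s/2)+\c{T}_2(s/2^i)$ of the ``Putting it all together'' analysis with $\c{T}_2$ bounded by Lemma~\ref{lem:rect-strip2}, each such call costs $O(1/\eps^2)$ time \emph{reusing} the preprocessed structure, and by the chain $\eps$‑sample $\to$ grid $\eps$‑cover $\to$ $O(\eps)$‑compression it returns a range whose $\Phi'$‑value is within additive $O(\eps)$ of $\max_{A\in\c{R}_2}\Phi'(A)$ (absorbing constants by running with $\eps$ scaled down by a fixed factor, exactly as in Theorem~\ref{thm:rect-linear2}).

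Finally I would run the recursive triangle procedure of Section~\ref{sec:fxn-apx} over the four quadrants of the normal angle $\theta\in[0,\pi/2]$ (one batch of calls per quadrant and per sign $\Phi^\pm$), stopping a triangle once its height drops below $\eps/(c\tau)$ for a suitable constant $c$, with each function maximization invoking the subroutine above. By Lemma~\ref{lem:num-fxn} the total number of linear functions considered is $O(\sqrt{\tau/\eps})=O(1/\sqrt\eps)$, and by Lemma~\ref{lem:triangle} the optimum point $p_{A^*}$ (which lies on $\mathsf{CH}(P)$ since $\phi$ is convex and minimized on the diagonal) is trapped in some triangle of height $\le \eps/(c\tau)$, so for the two hull points $p_i,p_{i+1}$ of that triangle $\phi(p_{A^*})\le \max\{\phi(p_i),\phi(p_{i+1})\}+\tau\cdot\frac{\eps}{c\tau}=\max\{\phi(p_i),\phi(p_{i+1})\}+O(\eps)$; adding the $O(\eps)$ slack from the linear subroutine and from $S$ being an $O(\eps)$‑sample, and rescaling $\eps$ by the fixed global constant, one of the evaluated ranges is the desired $\hat A_\eps$. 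Summing costs: the one‑time preprocessing is $O(m+\frac{1}{\eps^2}\log\log\frac1\eps\log\frac1\delta)$ (the $\log\frac1\delta$ already folded into the sample size in Lemma~\ref{lemma:grid-sample}), and the $O(1/\sqrt\eps)$ linear calls contribute $O(1/\sqrt\eps)\cdot O(1/\eps^2)=O(1/\eps^{2.5})$, giving the stated bound.

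The main obstacle, and the only part requiring care beyond citing the earlier results, is justifying the amortization: one must verify that the compressed dyadic representation of Lemmas~\ref{lemma:grid-cover3} and~\ref{lem:compression2} is constructed independently of the linear objective, so that switching $(\alpha,\beta)$ between the $O(1/\sqrt\eps)$ triangle‑algorithm calls reuses it at no extra asymptotic cost and, in particular, does not reintroduce the $\log\frac1\eps$ factor present in Theorem~\ref{thm:rect-linear}. Secondary bookkeeping is the sign/quadrant handling of $\theta$ together with $\Phi^\pm$, and confirming that the three sources of $\eps$‑error (sample, cover, compression) plus the triangle height combine to at most $\eps$ after a single global constant rescaling.
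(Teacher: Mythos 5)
Your proof is essentially the same as the paper's: build the sample $S$ and the dyadic family of grids once (Lemma~\ref{lemma:grid-sample}, Lemma~\ref{lemma:grid-cover3}), then pay $O(1/\eps^2)$ per linear call and invoke Lemma~\ref{lem:num-fxn} for the $O(1/\sqrt\eps)$ call count, giving the stated total. The decomposition, the error accounting through Lemma~\ref{lem:triangle}, and the amortization reasoning all match the paper's intended argument.

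One small inaccuracy worth flagging: you assert that ``a new $(\alpha,\beta)$ only rescales the per-cell weights, leaving the set of active columns and the tree shape unchanged.'' That is not generally true: the $\eta$-compression of Lemma~\ref{lem:compression2} chooses active columns by accumulating the \emph{signed, color-weighted} cell weight until it crosses the $\eta\gamma$ threshold, and changing $(\alpha,\beta)$ changes the red-vs-blue weighting non-uniformly, so the crossing positions (and hence which columns are active) can move. The paper only claims that $S$ and the dyadic grid decomposition (the cell boundaries and tree) are reused; the compression and strip-constrained search are re-run per linear call, and the per-level costs $O(\ell^2)=O((r/2^i)^2)$ sum to $O(1/\eps^2)$, which is exactly the per-call budget you already charge. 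So the overclaim is harmless to the bound, but you should drop the ``active columns unchanged'' justification and instead observe that recompression fits within the $O(1/\eps^2)$ per-call cost.
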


To apply this to higher dimensions we can apply the same machinery, but scan by fixing all, but two dimensions. This gives a runtime of $O(m + \frac{1}{\eps^{2d -2}} + \frac{1}{\eps^2} \log \log \frac{1}{\eps} \log \frac{1}{\delta})$ for the linear version of our problem and $O(m + \frac{1}{\eps^{2d -2 + 1/2}} + \frac{1}{\eps^2} \log \log\frac{1}{\eps} \log \frac{1}{\delta})$ for the statistical discrepancy version.

\end{document}